\newcommand{\ignore}[1]{}
\DeclarePairedDelimiter\floor{\lfloor}{\rfloor}
\newtheorem{theorem}{Theorem}
\newtheorem{corollary}[theorem]{Corollary}
\newtheorem{lemma}[theorem]{Lemma}
\newtheorem{definition}[theorem]{Definition}
\newcommand{\claimproof}[2]%
{\noindent{\em Proof of Claim \ref{#1}.}
#2\hspace*{\fill}$\Box$~~~~~\vspace{5mm} }
\newenvironment{replemma}[1]{\medskip \par {\sc Lemma \ref{#1} (restated).} \em}{}
\newenvironment{reptheorem}[1]{\medskip \par {\sc Theorem \ref{#1} (restated).} \em}{}
\newcommand{\F}{\mathbb{F}}
\newcommand{\Q}{\mathbb{Q}}
\newcommand{\calC}{\mathcal{C}}
\newcommand{\calH}{\mathcal{H}}
\newcommand{\calT}{\mathcal{T}}
\newcommand{\calM}{\mathcal{M}}
\newcommand{\lrsp}{\text{sp}}
\newcommand{\rk}{\text{rk}}
\newcommand{\ch}{\text{char}}
\newcommand{\cf}{\text{coef}}
\newcommand{\poly}{\text{poly}}
\begin{document}

\title{\bf Small hitting-sets for tiny arithmetic circuits or: How to turn bad designs into good}

\author{
Manindra Agrawal \thanks{Department of Computer Science \& Engineering, IIT Kanpur, India, \texttt{manindra@cse.iitk.ac.in} }
\and 
Michael Forbes\thanks{Simons Institute for the Theory of Computing, University of California, Berkeley, USA, \texttt{miforbes@csail.mit.edu} }
\and
Sumanta Ghosh \thanks{CSE, IITK, \texttt{sumghosh@cse.iitk.ac.in} }
\and
Nitin Saxena \thanks{CSE, IITK, \texttt{nitin@cse.iitk.ac.in} }
}

\date{}
\maketitle

\begin{abstract}
Research in the last decade has shown that to prove lower bounds or to derandomize polynomial identity testing (PIT) for general arithmetic circuits it suffices to solve these questions for restricted circuits. In this work, we study the smallest possibly restricted class of circuits, in particular depth-$4$ circuits, which would yield such results for general circuits (that is, the complexity class VP). 
We show that if we can design poly($s$)-time hitting-sets for $\Sigma\wedge^a\Sigma\Pi^{O(\log s)}$ circuits of size $s$, where $a=\omega(1)$ is arbitrarily small and the number of variables, or arity $n$, is $O(\log s)$, then we can derandomize blackbox PIT for general circuits in quasipolynomial time. Further, this establishes that either E$\not\subseteq$\#P/poly or that VP$\ne$VNP. We call the former model \emph{tiny} diagonal depth-$4$. Note that these are merely polynomials with arity $O(\log s)$ and degree $\omega(\log s)$. 
In fact, we show that one only needs a poly($s$)-time hitting-set against individual-degree $a'=\omega(1)$ polynomials that are computable by a size-$s$ arity-$(\log s)$ $\Sigma\Pi\Sigma$ circuit (note: $\Pi$ fanin may be $s$).
Alternatively, we claim that, to understand VP one only needs to find hitting-sets, for depth-$3$, that have a small parameterized complexity. 

Another tiny family of interest is when we restrict the arity $n=\omega(1)$ to be arbitrarily small. In parameterized complexity terms: We show that if we can design poly($s,\mu(n)$)-time hitting-sets for size-$s$ arity-$n$ $\Sigma\Pi\Sigma\wedge$ circuits (resp.~$\Sigma\wedge^a\Sigma\Pi$), where function $\mu$ is arbitrary, then we can solve PIT for VP in quasipoly-time, and prove the corresponding lower bounds.

Our methods are strong enough to prove a surprising {\em arity reduction} for PIT-- to solve the general problem completely it suffices to find a blackbox PIT with time-complexity $sd2^{O(n)}$. This suggests that, in algebraic-geometry terms, PIT is inherently an `extremely low'-dimensional (or `extremely low' individual-degree) problem.

One expects that with this severe restriction on $n, a$ and the semantic individual-degree, it should be at least ``exponentially'' easier to design hitting-sets. Indeed, we give several examples of ($\log s$)-variate circuits where a new measure (called {\em cone-size}) helps in devising poly-time hitting-sets, but the same question for their $s$-variate versions is open till date: For eg., diagonal depth-$3$ circuits, and in general, models that have a {\em small} partial derivative space. The latter models are very well studied, following (Nisan \& Wigderson, FOCS'95 \cite{NW95}), but no $sd2^{O(n)}$-time PIT algorithm was known for them.

We also introduce a new concept, called {\em cone-closed basis} isolation, and provide example models where it occurs, or can be achieved by a small shift. This refines the previously studied notions of low-support (resp.~low-cone) rank concentration and least basis isolation in certain ABP models. Cone-closure holds special relevance in the low-arity regime.
\end{abstract}

\noindent
{\bf 1998 ACM Subject Classification:} F.1.1 Models of Computation, I.1.2 Algorithms, F.1.3 Complexity Measures and Classes 

\noindent
{\bf Keywords:} hitting-set, tiny, arity, depth-3, depth-4, derandomization, identity testing, lower bound, VP, E, \#P/poly, circuit, concentration.  

\vspace{-1mm}
\section{Introduction}
\vspace{-1mm}

The Polynomial Identity Testing (PIT) problem is to decide whether a multivariate polynomial is zero, where the input is given as an arithmetic circuit. An arithmetic circuit over a field $\mathbb F$ is a layered acyclic directed graph with one sink node called output node; source nodes are called input nodes and are labeled by variables or field constants; non-input nodes are labeled by $\times$ (multiplication gate) and $+$ (addition gate) in alternate layers. Sometimes edges may be labeled by field constants. The computation is defined in a natural way. The complexity parameters of a circuit are: 1) \emph{size}- maximum number of edges and vertices, 2) \emph{depth}- maximum number of layers, and 3) \emph{degree}- maximum degree among all polynomials computed at each node. This is sometimes called the {\em syntactic} degree, to distinguish from the ({\em semantic}) degree of the final polynomial computed. The families of circuits, that are $n$-variate poly($n$)-size and poly($n$)-degree, define the class VP \cite{V79}; also see \cite{B13} for interesting variants of this algebraic computing model.

In this work we study $n$-variate polynomials computable by circuits of size $\le s$ of individual degree $\le a$, where one of the parameter is \emph{tiny} as compared to the others.  For example, we study such polynomials where the number of variables $n$ is very small, such as $n\le O(\log s)$. When $a\le O(1)$ this model is equivalent to that of $\poly(s)$-size depth-$2$ circuits, which are well-understood even in the blackbox model \cite{BT88}, and thus we probe these polynomials when $a\ge\omega(1)$.  We can even approach this question when $n$ is  $\omega(1)$ but arbitrarily small, in which case we now consider polynomials of individual degree $O(s)$ to again avoid collapsing to $\poly(s)$-size depth-$2$ circuits. Basically, we need to allow the number of monomials $a^n$ to grow as $s^{\omega(1)}$ for the model to be nontrivial, and we demonstrate that this is precisely the chasm to be crossed to get a fundamentally new understanding of VP.

The polynomial computed by a circuit may have, in the worst-case, an exponential number of monomials compared to its size. So, by computing the explicit polynomial from input circuit, we cannot solve PIT problem in polynomial time. However, evaluation of the polynomial at a point can be done, in time polynomial in the circuit size, by 
assigning the values at input nodes. This helps us to get a polynomial time randomized algorithm for PIT by evaluating the circuit at a random point, since any nonzero polynomial evaluated at a random point gives a nonzero value with high probability \cite{DL78, Zip79, Sch80}. However, finding a deterministic polynomial time algorithm for PIT is a long-standing open question in arithmetic complexity theory. It naturally appears in the algebraic approaches to the P$\ne$NP question, eg.~\cite{GMQ16, G15, mul12, Mul12b}. The famous algebraic analog is the VP$\ne$VNP question \cite{V79}.
The PIT problem has applications both in proving circuit lower bounds \cite{Kab03, Agrawal05} and in algorithm design \cite{Mul87, AKS04, Saraf14, Shpilka14}. For more details on PIT, see the surveys \cite{Saxena09, Saxena13, Shpilka10}. 

PIT algorithms are of two kinds: 1) \emph{whitebox}- allowed to see the internal structure of the circuit, and 2) \emph{blackbox}- only evaluation of the circuit is allowed at points in a small field extension. Blackbox PIT is equivalent to efficiently finding a set of points, called a \emph{hitting-set} $\calH$, such that for any circuit $C$, in a family $\calC$, computing a nonzero polynomial, the set $\calH$ must contain a point where $C\ne0$. For us a more functional approach would be convenient. We think in terms of an $n$-tuple of univariates $\mathbf f(y)=(f_1(y),\ldots, f_n(y))$ whose set of evaluations contain $\calH$. Such an $\mathbf f(y)$ can be efficiently obtained from a given $\calH$ (using interpolation) and vice-versa. Clearly, if $\calH$ is a hitting-set for $\calC$ then $C(\mathbf f(y))\ne0$, for any nonzero $C\in\calC$. This tuple of univariates is called a hitting-set generator (Sec.\ref{sec-tdd4}) and we will freely exploit this connection in all the proofs or theorem statements.

Existing deterministic algorithms solving PIT for restricted classes have been developed by leveraging insight into the weaknesses of these models. For example, deterministic PIT algorithms are known for subclasses of depth-$3$ circuits  \cite{KS07, Sax08, SS12}, subclasses of depth-$4$ circuits \cite{ASSS12, BMS13, SSS13, F15, KS16, KS16b, PSS16}, read-once arithmetic branching programs (ROABP) and related models \cite{FS12, Agrawal13, FSS14, AGKS15, GKST16, GKS16}, certain types of symbolic determinants \cite{FGT16, GT16}, as well as non-commutative models \cite{GGOW16}. An equally large number of special models have been used to prove lower bound results, see for example the ongoing survey of Saptharishi~\cite{Sap16}. 

While studying such restricted models may at first seem to give limited insight into general circuits, various works (discussed below) have shown this not to be the case as full derandomization of PIT for depth-$4$ (resp.~depth-$3$) circuits would imply derandomization of PIT for general circuits.  The goal of this work is to sharpen this connection by additionally limiting the \emph{number of variables} (resp.~semantic individual-degree) in the depth-$4$ circuit, and showing that such a connection still holds.  In doing so we establish new concepts for studying this small-variable regime, and show how to derive polynomial-size hitting sets for some small-variable circuit classes where only quasipolynomial-size, but not poly-sized, hitting-sets were previously known.

\vspace{-1mm}
\subsection{Main results}
\vspace{-1mm}

Arithmetic circuits were defined with the hope that they would have better structure than boolean circuits. Indeed, unlike boolean circuits, any VP circuit of arbitrary depth can be reduced nontrivially to depth-$4$ \cite{Agrawal08, K12, T15, CKSV16} or depth-$3$ \cite{Gupta13}. As a consequence, the lower bound questions against VP reduce to the lower bound questions for depth-$4$ (or to depth-$3$ for selected fields). In circuit complexity the base field $\F$ of interest is either $\Q$ or $\F_q$ (for a prime-$p$-power $q$). Other popular fields, eg.~number field, function field or $p$-adic field, are dealt with using similar computational methods. In this paper, unless stated otherwise, we assume $\F=\Q$. (Though many of our ideas would generalize to other base rings.) 

The PIT question for VP circuits reduces even more drastically. The reason is that now one invokes circuit factorization results \cite{K89} that use algebra in a way heavier than the depth-reduction results. So we will invoke that VP is closed under factorization, in addition to the fact that it affords depth-reduction. 
Recall the $\Sigma\Pi\Sigma\Pi$ (resp.~$\Sigma\wedge\Sigma\Pi$) model that computes a polynomial by summing products (resp.~{\em powers}) of sparse polynomials (see Defn.\ref{def:tiny_depth-4_diagonal_circuit}).
In 2008, Agrawal and Vinay \cite[Thm.3.2]{Agrawal08} showed that solving blackbox PIT in poly($s$)-time for size-$s$ $s$-variate depth-$4$ circuits of the form $\Sigma\Pi^a\Sigma\Pi^{O(\log s)}$ (Defn.\ref{def:tiny_depth-4_diagonal_circuit}), where $a$ is {\em any} unbounded function, gives an $(sd)^{O(\log sd)}$-time hitting-set for VP (size-$s$ degree-$d$). Here, we weaken the hypothesis further. We show that solving blackbox PIT in poly($s$)-time for size-$s$ {\em $O(\log s)$-variate} $\Sigma\wedge^a\Sigma\Pi^{O(\log s)}$ circuits, where $a$ is an arbitrarily small unbounded function and semantic individual-degree $a'$ is also arbitrarily small, is sufficient to get an $(sd)^{O(\log sd)}$-time hitting-set for VP. 
	We note that the brute-force deterministic algorithm would run here in time $a'^{O(\log s)}= s^{O(\log a')}$, and thus we show that reducing this runtime to polynomial would have dramatic consequences. We call such depth-$4$ circuits as \emph{tiny diagonal depth-$4$} (a better definition is Defn.\ref{def:tiny_depth-4_diagonal_circuit}). Compared to the previous result, one advantage in our model is that an exponential running time, wrt the number of variables ({\em arity} $n$), is allowed. Formally, we design an efficient arity reducing polynomial-map (the polynomials designed have individual-degree $O(1)$). Clearly, the map can be used to also deduce about the quasipoly-time blackbox PIT for VP.

\begin{theorem}\label{thm-main1}
Suppose we have poly-time hitting-sets for a tiny diagonal depth-$4$ model. Then, we design a poly($sd$)-time arity reducing ($n\mapsto O(\log sd)$) polynomial-map of constant individual-degree that preserves the nonzeroness of any $n$-variate size-$s$ degree-$d$ arithmetic circuit.
\end{theorem}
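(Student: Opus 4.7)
The plan is to chain three reductions so that the hypothesis applies at the end. Starting from an arbitrary size-$s$ degree-$d$ $n$-variate circuit $C$, I would first apply the Agrawal--Vinay depth reduction (with the Koiran/Tavenas sharpenings) to obtain a $\Sigma\Pi^{\alpha}\Sigma\Pi^{O(\log sd)}$ depth-$4$ circuit $C_1$ that is nonzeroness-equivalent to $C$, where $\alpha = \omega(1)$ is an arbitrarily slowly growing function of $sd$; this keeps $C_1$ of total size $\poly(sd)$ for the $\alpha$ chosen. Second, I would convert the outer $\Pi^{\alpha}$ layer into a diagonal $\wedge^{\alpha'}$ layer via the Fischer / Saxena duality trick (rewriting a product of $\alpha$ polynomials as a $\poly(\alpha)$-long linear combination of $\alpha$-th powers of linear forms in those polynomials). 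This yields a nonzeroness-equivalent $\Sigma\wedge^{\alpha'}\Sigma\Pi^{O(\log sd)}$ circuit $C_2$ of size $\poly(sd)$, which is in the diagonal depth-$4$ model of the theorem, \emph{except} that it still lives on the original $n$ variables.

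The third step, and the heart of the theorem, is to construct an arity-reducing polynomial map $\Phi : x_i \mapsto \varphi_i(z_1,\ldots,z_k)$ with $k = O(\log sd)$ and each $\varphi_i$ of constant individual-degree in $z_1,\ldots,z_k$, such that the composition $C_2 \circ \Phi$ (a) remains a $\Sigma\wedge^{\alpha'}\Sigma\Pi^{O(\log sd)}$ circuit on $k$ variables of size $\poly(sd)$, and (b) is nonzero iff $C_2 \ne 0$. Property (a) is essentially automatic: the $\wedge^{\alpha'}$ gate commutes with substitution, and substituting constant-individual-degree polynomials into a bottom $\Pi^{O(\log sd)}$ of sparse polynomials leaves a $\Pi^{O(\log sd)}$ of sparse polynomials in $z_1,\ldots,z_k$ (the individual-degree of each bottom factor merely rescales by $O(1)$). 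So the entire issue is property (b), and once $\Phi$ is in hand, the hypothesised $\poly(s)$-time hitting-set for tiny diagonal depth-$4$ circuits directly certifies nonzeroness of $C_2\circ\Phi$ and hence of $C$; the map $\Phi$ itself is the claimed arity-reducing map.

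The construction of $\Phi$ with constant individual-degree is the main obstacle. Naive Kronecker substitution blows individual-degree to $d^{\Theta(n)}$ and is ruled out; standard Shpilka--Volkovich generators reduce arity to $2$ but have individual-degree $\Theta(n)$, also ruled out. The route I would pursue is a combinatorial-design map: inject $[n]$ into a set of $\binom{k}{\le c}$ distinct multilinear monomials in $z_1,\ldots,z_k$ (feasible for some constant $c$ since $\binom{k}{c} \ge n$ when $k = O(\log sd)$), and then add a small low-degree perturbation so that products of $O(\log sd)$ such $\varphi_i$'s remain linearly independent after collapse. Preservation of nonzeroness in the $\Sigma\wedge^{\alpha'}\Sigma\Pi^{O(\log sd)}$ model under such a constant-degree substitution is delicate, and I expect to argue it via the paper's \emph{cone-closed basis isolation} machinery, exploiting that a bottom gate of fan-in $O(\log sd)$ has bounded cone support so that a generic constant-degree perturbation separates the resulting cones. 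The hypothesised hitting-set is then what derandomises the ``generic'' choice of $\Phi$, closing the loop between the arity-reducing map and the tiny diagonal depth-$4$ PIT hypothesis.
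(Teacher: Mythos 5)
Your proposal takes a fundamentally different route from the paper, and the route has a gap at its center. The paper does \emph{not} try to reduce a general circuit $C$ to a tiny diagonal depth-$4$ circuit and then apply the hypothesised hitting-set to it. Instead it goes through hardness: from the hitting-set generator for tiny diagonal depth-$4$ it extracts, via Lem.\ref{lemma:PIT_to_lb}, an explicit \emph{annihilating polynomial} $q_m$ of constant individual-degree that no tiny diagonal depth-$4$ circuit of the given size can compute; it then shows (by the Agrawal--Vinay depth reduction plus Fischer's trick, applied in the \emph{contrapositive} direction) that $q_m$ must be hard for all of VP; and finally, via Lem.\ref{lemma:lb_to_PIT}, it plugs $q_m$ into a Nisan--Wigderson design on $O(\log sd)$ seed variables. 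The NW-design evaluations $q_m(S_1),\ldots,q_m(S_n)$ \emph{are} the arity-reducing polynomial map, and the nonzeroness-preservation argument is exactly the Kabanets--Impagliazzo/Kaltofen factorization argument, not a direct syntactic substitution.

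The gap in your plan is that you never construct $\Phi$, and the construction you gesture at cannot be made to work with the tools you cite. A map $x_i \mapsto \varphi_i(z_1,\ldots,z_k)$ with $k=O(\log sd)$ that preserves nonzeroness of every size-$\poly(sd)$ $\Sigma\wedge\Sigma\Pi^{O(\log sd)}$ circuit on $n$ variables is already (after plugging in a small grid) a quasipoly-size hitting set for that $n$-variate model; producing it ``by hand'' from a combinatorial embedding plus a ``generic low-degree perturbation'' is exactly as hard as the PIT problem you are trying to reduce. The paper sidesteps this by routing through a lower bound: the hypothesis gives you a hard $q_m$ for free, and the NW-design mechanism turns hardness into a generator whose correctness is certified by a factoring argument, not by any genericity or cone-closure argument. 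Your suggestion to invoke cone-closed basis isolation here is also misdirected: in the paper that machinery (Sec.\ref{sec-ccb}, Thm.\ref{thm-main3}) is an unconditional tool for poly-time hitting-sets of low partial-derivative-space models, and it plays no role in Thm.\ref{thm-main1}. Finally, the closing sentence of your proposal — ``the hypothesised hitting-set is then what derandomises the `generic' choice of $\Phi$'' — is circular: the hypothesis hits only $O(\log s)$-variate circuits, so it cannot select a good $\Phi$ acting on $n$-variate inputs before the arity reduction has already been achieved.

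Two smaller but real issues with the front end of your chain: (i) depth reduction followed by Fischer's trick applied to a general size-$s$ degree-$d$ circuit does \emph{not} land you inside the tiny diagonal depth-$4$ class, since the arity is still $n$; this is precisely why the paper does the reduction in the hardness direction on the $O(\log s)$-variate $q_m$, where Fischer's $2^r$ blow-up is $2^a = s^{o(1)}$ because $a$ is sub-logarithmic, rather than on $C$ itself. (ii) Substituting constant-individual-degree $\varphi_i$'s of arity $O(\log sd)$ into a bottom $\Pi^{O(\log sd)}$ of sparse polynomials does not ``merely rescale by $O(1)$'' — the number of monomials in each bottom factor can grow from $s$ to roughly $s^{O(\log sd)}$, so even property (a) of your step three is not automatic.

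What each approach buys: your direct reduction, if it worked, would be cleaner and would give the arity reduction without passing through an E-computable hard polynomial; but no construction is supplied, and the paper's route (hitting-set $\Rightarrow$ hardness $\Rightarrow$ NW generator) is the only known way to bootstrap a small hitting set into an arity-reduction across all of VP, and it is also what yields the complexity-theoretic consequence (E $\not\subseteq$ \#P/poly or VP $\ne$ VNP) that the paper advertises.
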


By the known depth-$3$ chasm \cite{Gupta13}, the hypothesis in Thm.\ref{thm-main1} can be weakened to: {\em if, for an arbitrary function $\mu'$, we have a poly($s,2^n,\mu'(a')$)-time hitting-set for size-$s$ arity-$n$ depth-$3$ circuits that compute polynomials of semantic individual-degree $\le a'$, then$\cdots$}. The proof sketch is given in Sec.\ref{app-0} (Thm.\ref{thm-main1.01}), where also the case of a tiny width-$2$ ABP (Thm.\ref{thm-main1.02}) and a `multilinear tiny' depth-$3$ variant (Thm.\ref{thm-main1.03}) are discussed. Note that the sparsity of a polynomial computed by tiny diagonal depth-$4$ is $a'^n = s^{O(\log a')}$ which gives us a brute-force hitting-set of similar complexity \cite{BT88}. We want to bring it down to $s^{O(1)}$; leaving us with an arbitrarily small gap to close algorithmically. Our methods show that any of these hitting-set designs will establish: Either E$\not\subseteq$\#P/poly or VNP has polynomials of arithmetic circuit complexity $2^{\Omega(n)}$ (Lem.\ref{lem-class-sep}, Cor.\ref{cor-hs-hard}). Note that these are long-standing open questions \cite{NW94, V79}. Their connection with PIT, in our results, is a significant strengthening of \cite{Kab03} who had first proved: PIT in poly-time implies that either NEXP$\not\subseteq$ P/poly or VNP$\ne$ VP.  

Moreover, we get the following curious property of PIT. (In some sense, it signifies: tiny-VP PIT implies VP PIT.)

\begin{theorem}[PIT arity reduction]\label{thm-main1.1}
If we have poly($sd2^n$)-time hitting-set for size-$s$ degree-$d$ arity-$n$ circuits, then for general circuits we have a poly($sd$)-time hitting-set (and we get an E-computable polynomial with exponential arithmetic complexity).
\end{theorem}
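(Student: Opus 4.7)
The plan is to reduce Theorem \ref{thm-main1.1} to Theorem \ref{thm-main1}, using the hypothesis of Theorem \ref{thm-main1.1} in two complementary ways: once to trigger the premise of Theorem \ref{thm-main1}, and once to finish the argument after the arity reduction. The key observation is that the hypothesis is strong enough to absorb the $2^n$ factor whenever $n$ is logarithmic in $sd$.

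First, I would instantiate the hypothesis with $n = O(\log s)$. A tiny diagonal depth-$4$ circuit (Defn.\ref{def:tiny_depth-4_diagonal_circuit}) is an $O(\log s)$-variate size-$s$ circuit, so the postulated poly($sd\,2^n$)-time hitting-set collapses to a poly($sd$)-time hitting-set for this sub-model. This verifies the hypothesis of Theorem \ref{thm-main1}, which then supplies a poly($sd$)-time arity-reducing polynomial-map $\Phi$ from $n$ variables to $N = O(\log sd)$ variables, of constant individual-degree, preserving the nonzeroness of any size-$s$ degree-$d$ arity-$n$ arithmetic circuit.

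Next, given an arbitrary $n$-variate size-$s$ degree-$d$ circuit $C$, I would form the composition $C \circ \Phi$. Since each coordinate $\Phi_i$ has constant individual-degree $c$, it is an $N$-variate polynomial with at most $(c+1)^N = \poly(sd)$ monomials, so it admits a depth-$2$ sub-circuit of size $\poly(sd)$. Substituting these into $C$ yields a circuit of size $s' = \poly(sd)$, arity $N = O(\log sd)$, and degree $d' = O(d)$. Invoking the hypothesis a \emph{second} time on this composed circuit, the time bound poly($s'\,d'\,2^N$) simplifies once more to poly($sd$), and $C \neq 0 \iff C \circ \Phi \neq 0$ by the nonzeroness preservation of $\Phi$. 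This delivers the claimed poly($sd$)-time blackbox PIT for general arithmetic circuits.

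For the second part, the existence of an E-computable polynomial of exponential arithmetic complexity, I would appeal to the hardness-from-derandomization pipeline developed earlier in the paper (Lem.\ref{lem-class-sep}, Cor.\ref{cor-hs-hard}), which upgrades poly-time blackbox PIT into a separation of the form E$\not\subseteq$\#P/poly or an exponential VNP lower bound, following the paradigm of \cite{Kab03}. The main subtlety I expect is ensuring that every invocation of the hypothesis respects the promised parameter regime: in particular, that the blowup of the composed circuit is controllable (hence the importance of $\Phi$ having \emph{constant} individual-degree, rather than merely low degree), and that feeding the output of Theorem \ref{thm-main1}'s reduction back into the same hypothesis does not create a circular dependence. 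The chain works cleanly because the two invocations target disjoint arity windows, $O(\log s)$ for triggering Theorem \ref{thm-main1} and $O(\log sd)$ for the final hitting-set, both logarithmic and hence exactly the regime where the $2^n$ slack in the hypothesis is harmless.
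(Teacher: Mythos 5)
Your proposal matches the paper's proof essentially step for step: observe that the hypothesis instantiated at $n=O(\log s)$ already gives the tiny diagonal depth-$4$ hitting-set needed for Theorem~\ref{thm-main1}, compose with the resulting arity-reducing map $\Phi$, invoke the hypothesis a second time at arity $O(\log sd)$ where the $2^n$ factor is harmless, and read off the E-computable hard polynomial via Cor.~\ref{cor-hs-hard} applied to the $q_m$ constructed inside Theorem~\ref{thm-main1}'s proof. One small quantitative slip (harmless to the conclusion): each $\Phi_i$ has constant \emph{individual}-degree over $O(\log sd)$ variables and hence total degree $O(\log sd)$, so $C\circ\Phi$ has degree $O(d\log sd)$ rather than the $O(d)$ you wrote---which is exactly why the paper records the final invocation as $\calH_{n'^{O(1)},\,O(d\log n'),\,O(\log n')}$.
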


One now wonders whether the hypothesis, in the theorem above, can be further weakened. We give a partial answer by studying the model $\Sigma\Pi\Sigma\wedge$ (i.e.~a sum of products, where each factor is a sum of univariate polynomials).

\begin{theorem}[Tinier arity]\label{thm-main1.2}
Fix a function $\mu=\mu(s)$.
If we have poly($s, \mu(n)$)-time hitting-set for size-$s$ arity-$n$ $\Sigma\Pi\Sigma\wedge$ circuits, then for VP circuits we have a poly($sd$)-time arity reduction ($n\mapsto O(\log sd)$) that preserves nonzeroness (and proves an exponential lower bound).
\end{theorem}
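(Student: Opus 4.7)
The plan is to reduce Theorem~\ref{thm-main1.2} to Theorem~\ref{thm-main1} by converting between the two depth-$4$ models. The three key steps are: (a) specialize the $\Sigma\Pi\Sigma\wedge$ hypothesis to arity $N=O(\log s)$, (b) embed the tiny diagonal depth-$4$ model $\Sigma\wedge^a\Sigma\Pi^{O(\log s)}$ into $\Sigma\Pi\Sigma\wedge$ of (nearly) polynomial size, and (c) invoke Theorem~\ref{thm-main1} to obtain the arity reduction for VP.

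For step (a), the hypothesis instantiated at $n=N=O(\log s)$ gives a $\poly(s,\mu(N))$-time hitting-set for $\Sigma\Pi\Sigma\wedge$ circuits of arity $N$ and size $s$. For step (b), take any tiny diagonal depth-$4$ circuit $\sum_i T_i^a$, where each $T_i = \sum_j M_j$ is a sparse polynomial of bottom fanin $O(\log s)$ and $a=\omega(1)$ is arbitrarily small. The multinomial theorem yields $T_i^a = \sum_{|\alpha|=a}\binom{a}{\alpha}\prod_j M_j^{\alpha_j}$, and each summand is a single monomial in $x_1,\ldots,x_N$, i.e.\ a $\Pi\wedge$ gate (product of univariate powers). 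Hence $T_i^a\in\Sigma\Pi\wedge\subseteq\Sigma\Pi\Sigma\wedge$; summing over $i$ preserves the $\Sigma\Pi\Sigma\wedge$ form. The resulting circuit has arity $O(\log s)$ and can then be fed to the hitting-set from step (a).

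For step (c), combining (a) and (b) produces a $\poly(s,\mu(N))$-time hitting-set for $\Sigma\wedge^a\Sigma\Pi^{O(\log s)}$ of arity $O(\log s)$, which is precisely the hypothesis of Theorem~\ref{thm-main1}. That theorem's conclusion then furnishes the claimed $\poly(sd)$-time arity-reducing polynomial-map $n\mapsto O(\log sd)$ for VP. The companion exponential lower bound is obtained by running the standard hitting-set $\to$ hard-polynomial conversion (as in Lemma~\ref{lem-class-sep}) on the generator produced above.

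The principal obstacle is controlling the size blowup in step (b): the multinomial count $\binom{a+|T_i|-1}{a}$ becomes $s^{O(a)}$, since $|T_i|\le\binom{2N}{N}=s^{O(1)}$ in arity $N=O(\log s)$ at degree $O(\log s)$. For $a=\omega(1)$ this is only mildly super-polynomial in $s$. The fix is to choose $a$ as slowly unbounded as permitted by Theorem~\ref{thm-main1} (say $a=\log^* s$), so that $s^{O(a)}$ remains inside $s\cdot \mu(N)^{O(1)}$; the fact that $\mu$ is arbitrary in the hypothesis is exactly what lets us absorb this slight blowup. If this accounting proves too loose, an alternative route is to avoid full multinomial expansion and instead express each $T_i^a$ as a carefully chosen product of additive factors via a Fischer-type polarization at the middle level, keeping the $\Sigma\Pi\Sigma\wedge$ structure intact without sparse blowup.
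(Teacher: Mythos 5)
Your step (b) has a genuine gap that defeats the argument. After multinomial expansion, each summand $\prod_j M_j^{\alpha_j}$ is a single monomial, so $T_i^a$ collapses to a sparse polynomial ($\Sigma\Pi$, not a genuinely new use of $\Sigma\Pi\Sigma\wedge$), and the number of summands is $\binom{a+|T_i|-1}{a}=s^{\Theta(a)}$. You feed this to the hypothesis, whose running time is $\poly(S,\mu(N))$ in the \emph{size} $S$ and \emph{arity} $N$ of the circuit handed to it. Here $S=s^{\Theta(a)}$ and $N=O(\log s)$, so the bound is $\poly(s^{\Theta(a)},\mu(O(\log s)))$, which for $a=\omega(1)$ is super-polynomial in $s$. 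The arbitrariness of $\mu$ cannot absorb this: $\mu$ is applied to the arity $N$, not to the size, and not to $a$. Choosing $a=\log^* s$ does not help; $s^{\log^* s}$ is still not $\poly(s)$, and Theorem~\ref{thm-main1} anyway requires $a=\omega(1)$. The alternative ``Fischer-type polarization'' you float goes in the wrong direction (Fischer converts products to sums of powers, not the reverse) and is not worked out.

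The paper's proof exploits the freedom in $\mu$ in the opposite way: it keeps the size polynomial and pushes the arity \emph{down to $a$}, where $a=a(s)$ is chosen precisely so that $\mu(a)\le s$. Starting from a multilinear $\Sigma\Pi\Sigma$ circuit on $n=a\log s$ variables, it partitions the variables into $a$ blocks of size $\log s$ and applies a \emph{local Kronecker map} $\varphi\colon x_{u(j)+i}\mapsto y_j^{2^i}$, which preserves nonzeroness of the multilinear polynomial and yields an arity-$a$, size-$O(s^2)$ $\Sigma\Pi\Sigma\wedge$ circuit. The hypothesis then gives a hitting set in time $\poly(s^2,\mu(a))=\poly(s)$, i.e.\ a poly-time hitting set for the multilinear tiny depth-$3$ model; the conclusion follows from Theorem~\ref{thm-main1.03}, not Theorem~\ref{thm-main1}. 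The contrast is instructive: you can afford arity $O(\log s)$ with polynomial size (that is Thm.~\ref{thm-main1}'s regime), or tiny arity $a$ with polynomial size (this theorem's regime), but you cannot let size grow like $s^{\omega(1)}$, which is exactly what the multinomial expansion forces.
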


The PIT algorithms in current literature always try to achieve a subexponential dependence on $n$, the number of variables. Our results demonstrate that all we need is a poly$(sd2^n)$-time algorithm to completely solve VP PIT. Or, a poly($s\mu(n)$)-time algorithm (for $\Sigma\Pi\Sigma\wedge$) to partially solve VP PIT and to prove ``either E$\not\subseteq$\#P/poly or VP$\ne$VNP''. For example, even a poly($s,A(n)$)-time hitting-set for $\Sigma\Pi\Sigma\wedge$, where $A$ is an Ackermann function \cite{A28}, would be tremendous progress. A similar case can be made for $\Sigma\wedge^a\Sigma\Pi(n)$ circuits, where both $a$ and $n$ are arbitrarily small unbounded functions, see Thm.\ref{thm-main1.3} (i.e.~time-complexity may be arbitrary in terms of both $a$ and $n$). 

Obviously, we should now discover techniques and measures that are specialized to this tiny regime. Many previous works use support size of a monomial as a measure to achieve rank concentration \cite{Agrawal13, FSS14, GKST16}. For a monomial $m$, its \emph{support} is the set of variables whose exponents are positive. We introduce a different measure: \emph{cone-size} (see Defn.\ref{def:cone}) which is the number of monomials that divide $m$ (also see \cite{Agrawal13, Forbes}). It has two advantages in the tiny regime. First, the number of monomials with cone-size at most $s$ is poly($s$) (Lem.\ref{lemma:cs_monomials}). Second, for any circuit $C$ and a monomial $m$, we devise (in \emph{blackbox}) a circuit $C'$ which computes the 
coefficient of $m$ in $C$ and has size polynomial in that of $C$ and the cone-size of $m$ (Lem.\ref{lemma:coef_extraction}). Using this measure we can define a new concept of rank concentration \cite{Agrawal13}-- `low'-cone concentration --and we are able to give poly-time hitting-sets for a large class of tiny circuits (i.e.~$n$ is logarithmic wrt size). We prove our result in a general form (Thm.\ref{thm-main2}) and as a corollary (Cor.\ref{cor:polyPIT_using_cs_for_tinyckt}) we get our claim. 
This gives us a poly-time hitting-set for depth-$3$ diagonal circuits where 
the rank of the linear forms is logarithmic wrt the size (Thm.\ref{thm:polyPIT_for_tiny_depth-3_diag_ckt}). 

\begin{theorem}\label{thm-main2}
Let $\mathcal C$ be a set of arity-$n$ degree-$d$ circuits with size-$s$ s.t.~for all $C\in \mathcal C$, the dimension of the partial derivative space of $C$ is at most $k$. Then, blackbox PIT for $\mathcal C$ can be solved in $(sdk)^{O(1)}\cdot (3n/\log k)^{O(\log k)}$ time.
\end{theorem}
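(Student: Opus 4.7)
My plan is to use the cone-size measure to isolate, for each nonzero $C \in \mathcal{C}$, a ``witness'' monomial in $\text{supp}(C)$ whose cone-size is at most $k$, and then to verify its presence via the blackbox coefficient-extraction gadget (Lem.\ref{lemma:coef_extraction}). The PIT algorithm will simply iterate over all $n$-variate monomials of cone-size $\le k$, test each corresponding coefficient, and report ``nonzero'' iff some test succeeds.

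The structural crux is the following: if $f \ne 0$ has partial-derivative space of dimension at most $k$, then the lex-minimum monomial $x^e$ in $\text{supp}(f)$ satisfies $\text{cone-size}(x^e) = \prod_i (e_i+1) \le k$. I would prove this by exhibiting cone-size$(x^e)$ linearly independent elements of the partial-derivative space, namely $\{\partial^{\alpha} f\}_{\alpha \le e}$. Independence follows from a triangular argument in lex order: assuming a dependence $\sum_{\alpha \le e} c_\alpha \, \partial^\alpha f = 0$, pick $\alpha^{\star}$ lex-largest with $c_{\alpha^{\star}} \ne 0$ and examine the coefficient of $x^{e - \alpha^{\star}}$ in the sum. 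Each summand $\partial^\alpha f$ contributes $c_\alpha \cdot \cf_{x^{e - \alpha^{\star} + \alpha}}(f)$ up to a nonzero scalar; whenever $\alpha <_{\text{lex}} \alpha^{\star}$, the exponent $e - \alpha^{\star} + \alpha$ is lex-smaller than $e$ and hence, by lex-minimality of $e$ in $\text{supp}(f)$, does not occur in $f$, so the contribution vanishes. Only $\alpha = \alpha^{\star}$ survives, yielding $c_{\alpha^{\star}} \cdot \cf_{x^e}(f) \cdot e!/(e-\alpha^{\star})! \ne 0$ in characteristic zero, a contradiction.

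Given the structural lemma, the algorithm is immediate. By Lem.\ref{lemma:cs_monomials}, there are at most $(3n/\log k)^{O(\log k)}$ monomials $m$ of cone-size $\le k$ in $n$ variables. For each such $m$, Lem.\ref{lemma:coef_extraction} produces (in blackbox) a circuit of size $\text{poly}(s,k)$ computing the scalar $\cf_m(C)$, which we evaluate in $(sdk)^{O(1)}$ time; declare $C$ nonzero iff some evaluation is nonzero. Correctness is immediate from the structural lemma: if $C \ne 0$, the lex-minimum monomial of its support has cone-size $\le k$ and so appears in the enumeration, where its nonzero coefficient is detected. The total running time is the product of the two bounds, $(sdk)^{O(1)} \cdot (3n/\log k)^{O(\log k)}$, as required.

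The main obstacle is setting up the structural lemma correctly: in particular, lex order (rather than any graded or reverse-lex order) is what makes the triangular argument compatible with the natural componentwise divisibility order on the cone of $x^e$. A secondary subtlety is the non-vanishing of the factorial scalars $e!/(e-\alpha^{\star})!$, which is automatic for $\F = \Q$ but would require Hasse derivatives in positive characteristic.
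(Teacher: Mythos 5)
Your proposal is correct and follows essentially the same approach as the paper: enumerate the $(3n/\log k)^{O(\log k)}$ monomials of cone-size at most $k$ (Lem.\ref{lemma:cs_monomials}), use the coefficient-extraction circuit (Lem.\ref{lemma:coef_extraction}), and rely on the structural fact that some monomial of the support of a nonzero $C$ has cone-size bounded by the dimension of its partial-derivative space. The only (cosmetic) difference is in the structural lemma: the paper invokes Forbes' leading-monomial argument \cite[Cor.8.4.13--8.4.14]{Forbes} (cone-size of the leading monomial wrt any monomial ordering is a lower bound on $\dim\partial_{\mathbf x^{<\infty}}(f)$), whereas you give a self-contained proof via the lex-\emph{minimum} (trailing) monomial of $\mathrm{supp}(f)$, exhibiting the $\prod_i(e_i+1)$ independent derivatives $\{\partial^\alpha f\}_{\alpha\le e}$ with a correct triangularization on the coefficient of $x^{e-\alpha^\star}$; these are dual formulations of the same idea, and your remark about the non-vanishing factorial scalars (hence the $\ch\F=0$ or large hypothesis, matching the paper's stated restriction) is apt.
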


Note that for $n=O(\log k)=$ $O(\log sd)$, the above bound is poly-time and such a PIT result was not known before. For instance, general diagonal depth-$3$ is a prominent model with low partial derivative space; it has a whitebox poly-time PIT \cite{Sax08} but no poly-time hitting-set is known (though \cite{FSS14} gave an $s^{O(\log\log s)}$-time hitting-set.). Even for $O(\log k)$-variate diagonal depth-$3$ no poly-time hitting-sets were known before our work.

We investigate another structural property useful in the tiny regime. Consider a polynomial $f(\mathbf x)$ with coefficients over $\F^k$. Let $\lrsp(f)$ be the subspace spanned by its coefficients. We say that $f$ has a {\em cone-closed basis} if there is a set of monomials $B$ whose coefficients form a basis of $\lrsp(f)$ and if $B$ is closed under submonomials. We prove that this notion is a strengthening of both low-support \cite{Agrawal13} and low-cone concentration ideas \cite{F15} (see Lem.\ref{lemma:cone-closed_cs}). Recently, this notion of closure has also appeared as an {\em abstract simplicial complex} in \cite{GKPT16}.

Interestingly, we show that a general polynomial $f$, when shifted by a `random' amount (or by `formal' variables), becomes cone-closed. More generally, we prove the following theorem relating this concept to that of basis isolating weight assignment \cite{AGKS15}.

\begin{theorem}\label{thm-main3}
Let $f(\mathbf x)\in\mathbb F[\mathbf x]^k$ be an arity-$n$ degree-$d$ polynomial over $\F^k$. Let $\mathbf w$ be a basis isolating weight assignment of $f(\mathbf x)$. Then $f(\mathbf x+t^{\mathbf w})$ has a cone-closed basis over $\F(t)$.
\end{theorem}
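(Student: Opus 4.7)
The plan is to analyze the shifted coefficients via Taylor expansion, use basis isolation to control their leading $t$-behavior, and reduce the existence of a cone-closed basis to a combinatorial question about a ``shifted monomial space''. Starting from the binomial identity
\[
\cf(\mathbf x^\gamma, f(\mathbf x + t^{\mathbf w})) \;=\; \sum_{\alpha \in \mathrm{supp}(f),\, \alpha \ge \gamma} \binom{\alpha}{\gamma}\, t^{\mathbf w(\alpha) - \mathbf w(\gamma)}\, \cf(\mathbf x^\alpha, f),
\]
the shifted coefficients form a divisibility-triangular $\F(t)$-transform of the originals. Let $B = \{m_1, \ldots, m_{k'}\}$ denote the basis isolated by $\mathbf w$ with $\mathbf w(m_1) < \cdots < \mathbf w(m_{k'})$; basis isolation rewrites each $\cf(\mathbf x^\alpha, f)$ for $\alpha \in \mathrm{supp}(f) \setminus B$ as an $\F$-combination of $\{\cf(\mathbf x^{m_i}, f) : \mathbf w(m_i) < \mathbf w(\alpha)\}$. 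Composing these two steps, each shifted coefficient is expressed in the $B$-basis via a transition matrix whose $(\gamma, m_i)$-entry has lowest $t$-power $\mathbf w(m_i) - \mathbf w(\gamma)$ with leading $\F$-coefficient $\binom{m_i}{\gamma}$ when $\gamma \le m_i$, and strictly higher $t$-power otherwise.

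Expanding the determinant of the $C \times B$-submatrix over bijections $\sigma : C \to B$, every divisibility-respecting $\sigma$ (all $\gamma \le \sigma(\gamma)$) contributes $t^{W_0} \prod_\gamma \binom{\sigma(\gamma)}{\gamma}$ with $W_0 := \sum_i \mathbf w(m_i) - \sum_{\gamma \in C}\mathbf w(\gamma)$ bijection-independent, while every other $\sigma$ contributes a strictly higher $t$-power. Hence the lowest $t$-coefficient of the full determinant equals the purely combinatorial binomial-divisibility determinant $\det_{\gamma \in C,\, i}\bigl[\binom{m_i}{\gamma}\,\mathbb{I}(\gamma \le m_i)\bigr]$; nonvanishing of the latter gives a cone-closed basis. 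Since $\binom{m_i}{\gamma}$ equals the $\mathbf y^\gamma$-coefficient of $(\mathbf y+1)^{m_i}$, this reduces to the following: for $V := \F\text{-span}\{(\mathbf y+1)^{m_i}\}_{i=1}^{k'} \subseteq \F[\mathbf y]$, exhibit a cone-closed $C \subseteq \mathbb N^n$ with $|C| = k'$ such that coefficient-extraction $V \to \F^C$ is an isomorphism.

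My plan for this last combinatorial step is to build $C$ by Gaussian elimination on $V$ under a graded monomial order with the ``smallest-monomial-first'' pivot rule. Every $(\mathbf y+1)^{m_i}$ has constant term $1$, so the first pivot is $0$; iteratively, the residual subspace is spanned by shifted-monomial differences whose lowest nonzero terms sit inside the cone-closure of the already-chosen pivots, forcing each subsequent pivot to be an immediate cone-closed extension of $C$.

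The main obstacle is rigorously verifying that this ``smallest-first'' procedure keeps the pivot set cone-closed throughout, since after the first elimination the residual subspace is no longer literally a shifted monomial space. A fallback route is an exchange argument: start from $C = B$ (which, ordered by divisibility, yields an upper-triangular binomial matrix with unit diagonal, hence nonsingular), and iteratively replace any $m \in C$ admitting an immediate divisor $\gamma \notin C$ by matroid-exchange swaps that preserve nonsingularity of $\det(\tilde T_{C, B})$, with the total-degree potential $\sum_{\gamma \in C}|\gamma|$ strictly decreasing to guarantee termination at a cone-closed configuration.
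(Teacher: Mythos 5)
Your setup is sound and your reduction is essentially the right one. The binomial transfer identity is correct, your observation that basis isolation lets you re-express the transfer matrix as a $|B|\times|B|$ matrix over $\F(t)$ whose $(\gamma,m_i)$-entry has $t$-valuation $\mathbf w(m_i)-\mathbf w(\gamma)$ with leading coefficient $\binom{m_i}{\gamma}$ is correct, and your conclusion that the lowest-order coefficient of the determinant is $\det\bigl[\binom{m_i}{\gamma}\bigr]_{\gamma\in C,\,i}$ is also correct (with the minor caveat that you should handle the case $\dim\lrsp(f)<k$ separately, as the transfer matrix is not square then). This reduces the theorem to the same combinatorial lemma the paper isolates: given any finite $B\subseteq\N^n$, produce a cone-closed $C\subseteq\N^n$ with $|C|=|B|$ and $\det\bigl[\binom{m}{\gamma}\bigr]_{\gamma\in C, m\in B}\neq 0$. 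The paper reaches the same place by a slightly different route, applying Cauchy--Binet twice and invoking Lemma~\ref{lem-uniq-B} (uniqueness of the minimal-weight basis under a basis-isolating $\mathbf w$) to pick out the $t^{\mathbf w(B)}$ term; your bijection-expansion is arguably more direct.

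The genuine gap is that you do not prove the combinatorial lemma, which is where essentially all the difficulty of this theorem lies. The paper devotes Algorithm~\ref{algo:find_con-closed_set}, Lemma~\ref{lem-algo1}, Lemma~\ref{lemma:cone-closed_nvar}, and the double-induction Lemma~\ref{lemma:cone-closed_nvar_basis} entirely to it. Your two candidate routes are both sketches with acknowledged holes, and the exchange fallback does not actually close the gap as stated: matroid basis exchange guarantees, for a given $m\in C$ and another basis $C'$, the existence of \emph{some} $\gamma'\in C'\setminus C$ making $C-m+\gamma'$ a basis, but it neither lets you pick $\gamma'$ to be a divisor of $m$ nor guarantees $|\gamma'|<|m|$, so your potential $\sum_{\gamma\in C}|\gamma|$ is not guaranteed to decrease. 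Worse, the existence of \emph{any} cone-closed row-basis is not a general matroid fact; it is a structural property of the shifted-monomial matrix, which is exactly what must be proved. The greedy ``smallest-first'' elimination is a plausible alternative construction (and I believe it does work, since after eliminating the first pivot the residual space's low-degree structure is controlled by submonomials of $B$), but you explicitly flag that the invariant ``pivot set stays cone-closed'' is unverified, and this verification is comparable in difficulty to the paper's double induction. Until one of these is carried out rigorously, the proof is incomplete.
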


\vspace{-1mm}
\subsection{Proof ideas}
\vspace{-1mm}

The proof of Thm.\ref{thm-main1} is a technical refinement of the strategy of \cite[Thm.3.2]{Agrawal08} in at least three ways. 
First, we get a polynomial $g$ that is hard for tiny diagonal depth-$4$ circuits $\calC$ from the hitting-set for $\calC$ (Lem.\ref{lemma:PIT_to_lb}). Essentially, $g$ will be an {\em annihilating polynomial} of the hitting-set generator. The novelty here is that we have to allow $g$ to be {\em non}-multilinear (unlike \cite[Lem.3.3]{Agrawal08}), for it to exist, as the arity of $\calC$ in our case is logarithmically small. We show that a {\em multi-$\delta$-ic} $g$ suffices (i.e.~$g$ has individual degree bounded by a constant $\delta$). By VP depth-reduction, and Fischer's trick (that is `cheap' to apply in the tiny regime), this polynomial remains hard for VP. Next by Lem.\ref{lemma:lb_to_PIT}, where we use Nisan-Wigderson design \cite{NW94} and Kaltofen's factorization \cite{K89}, we get a poly-time arity-reducing polynomial-map based on $g$ that keeps any nonzero VP circuit nonzero and reduces its arity from $n$ to $O(\log sd)$. This gives the theorem. 

The `$\log$' function in our method comes from the use of Nisan-Wigderson design and because we want the $\delta$ to be constant in depth-reduction; reducing the arity further would need a new idea. Moreover, we observe that the annihilator $g$ is an E-computable polynomial with exponential arithmetic circuit complexity. This means that: Either E$\not\subseteq$\#P/poly or VNP has a polynomial with exponential arithmetic complexity (Lem.\ref{lem-class-sep}). It is not clear whether we can strengthen the connection all the way to the (conjectured) VNP$\ne$VP. Perhaps, this will require starting with a more structured hitting-set generator for $\calC$, so that its annihilator $g$ is a polynomial whose coefficient bits are (\#P/poly)-computable (see Valiant's criterion \cite[Prop.2.20]{B13}).

The proof of Thm.\ref{thm-main1.2} requires one to move in a different regime where the arity is $n=a\log s = \omega(\log s)$, the semantic individual-degree is {\em one} and the circuit is depth-$3$ (Thm.\ref{thm-main1.03}). We reach there in a sequence of steps, and then apply the Kronecker map ($x_i\mapsto y^{2^i}$) {\em locally} in blocks of size $\log s$. This leads to an arity reduction $a\log s\mapsto a$ in the PIT problem. 
 
The proof of Thm.\ref{thm-main2} has two steps. In the first step, we show that with respect to any \emph{monomial ordering}, the dimension $k$ of the partial derivative space of a polynomial is lower bounded by the cone-size of its \emph{leading monomial}. So, for every nonzero $C\in \mathcal C$ there is a monomial with nonzero coefficient and cone-size $\le k$. The second step is to check whether the coefficients of all the monomials in $C$, with cone-size $\leq k$, are zero. Lem.\ref{lemma:coef_extraction} describes the time required to check whether the coefficient of a monomial is zero. Lem.\ref{lemma:cs_monomials} gives us an optimal upper bound on the number of monomials with cone-size $\leq k$. 

Thm.\ref{thm-main3} unfolds an interesting combinatorial interaction between variable shift and cones of resulting monomials. Let a set of monomials $B$ be the least basis, wrt to the basis isolating weight assignment, of $f$. We consider the set of all submonomials of those in $B$ and identify a subset $A$ that is cone-closed. We define $A$ in an algorithmic way, as described in Algo.\ref{algo:find_con-closed_set}. The fact that $A$ is exactly a basis of the shifted $f$ is proved in Lem.\ref{lemma:cone-closed_nvar_basis} by studying the action of the shift on the coefficient vectors. This has an immediate (nontrivial) consequence that any polynomial $f$ over $\F^k$, when shifted by formal variables, becomes cone-closed.

\vspace{-1mm}
\section{Tiny diagonal depth-$4$ circuits-- Proof of Theorems \ref{thm-main1}-\ref{thm-main1.2} }\label{sec-tdd4}
\vspace{-1mm}

In this section we will revisit the techniques that have appeared in some form in \cite{NW94, Kab03, Agrawal05, Agrawal08, Gupta13} and strengthen them to derive our results. First, we show how to get a hard polynomial from a hitting-set.

\smallskip\noindent {\bf Hitting-set generator.}
Let $\mathcal{C}$ be a set of arity-$n$ circuits. We call an $n$-tuple of univariates $\mathbf f(y)=(f_1(y),\ldots, f_n(y))$ a {\em $(t, d)$-hsg} (hitting-set generator) for $\mathcal C$ if: (1) for any nonzero $C\in\calC$, $C(\mathbf f(y))\ne0$,  and (2) $\mathbf f$ has time-complexity $t$ and the degree of each $f_i$ is at most $d$. 

From this, we get a hard polynomial simply by looking at an annihilating polynomial of $\mathbf f(y)$.

\begin{lemma}[Hitting-set to hardness]
\label{lemma:PIT_to_lb}
Let $\mathbf f(y)=(f_1(y),\ldots, f_n(y))$ be a $(t, d)$-hsg for $\mathcal C$. Then, there exists an arity-$n$ polynomial $g(\mathbf x)$ that is not in $\mathcal C$, is computable in $\mathrm{poly}(tdn)$-time, and has individual 
degree less than $2\delta := 2\lceil (dn+1)^{1/(n-1)} \rceil$. Moreover, we can ensure that its degree is exactly $\delta n$. 
\end{lemma}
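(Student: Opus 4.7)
The plan is to build $g$ in three stages: existence via a linear-algebra dimension count, non-membership in $\calC$ via the hsg property, and a pad-and-multiply step at the end to hit total degree exactly $\delta n$.

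First I would consider the vector space $V_c \subset \F[\mathbf x]$ of polynomials with individual degree strictly less than $c$; it has dimension $c^n$. The linear evaluation map $\Phi_c : g \mapsto g(\mathbf f(y))$ sends $V_c$ into $\F[y]$ of degree at most $(c-1)nd$, a target of dimension $(c-1)nd + 1$. Taking $c = 2\delta$ and using $\delta^{n-1} \ge dn + 1$ (the defining property of $\delta = \lceil (dn+1)^{1/(n-1)}\rceil$), a short calculation gives $(2\delta)^n = 2^n\delta^n \ge 2\delta(dn+1) > (2\delta-1)nd + 1$, so $\ker\Phi_{2\delta}\ne 0$ and any nonzero $g$ in this kernel is an annihilator of $\mathbf f(y)$ with individual degree $<2\delta$. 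Since $\mathbf f$ is an hsg for $\calC$, every nonzero $C\in\calC$ satisfies $C(\mathbf f(y))\ne 0$, so $g\notin\calC$ immediately.

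Second, to find $g$ algorithmically I would write $g = \sum_\alpha c_\alpha \mathbf x^\alpha$ with $(2\delta)^n$ unknowns $c_\alpha$ and set up the homogeneous linear system enforcing vanishing of each of the at most $(2\delta-1)nd + 1$ coefficients of $\Phi_{2\delta}(g)\in\F[y]$. Those coefficients come from expanding $\prod_i f_i(y)^{\alpha_i}$, which is computable from the $f_i$'s in poly$(tdn)$-time using the time-complexity guarantee on $\mathbf f$. Gaussian elimination on this system then produces a nonzero solution in time polynomial in the system size (which is polynomial in $tdn$ in the low-arity regime in which the lemma will be applied).

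Third, to guarantee total degree exactly $\delta n$, I would re-run the counting with the smaller space $V_\delta$: the same bound $\delta^{n-1}\ge dn+1$ yields $\delta^n \ge \delta(dn+1) > (\delta-1)nd + 1$, so $\ker\Phi_\delta\ne 0$. Pick a nonzero $g_0\in\ker\Phi_\delta$; then $g_0$ has individual degrees $<\delta$ and total degree $D := \deg g_0 \le (\delta-1)n < \delta n$. Now multiply by a monomial $\mathbf x^{\mathbf e}$ with $\sum_i e_i = \delta n - D$ and $0\le e_i\le \delta$ for each $i$; such an $\mathbf e$ exists because the per-coordinate cap $\delta$ provides total slack $\delta n \ge \delta n - D$. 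The polynomial $g := \mathbf x^{\mathbf e}g_0$ still satisfies $g(\mathbf f(y)) = \mathbf f(y)^{\mathbf e}\cdot g_0(\mathbf f(y)) = 0$, its individual degrees are at most $(\delta-1) + \delta = 2\delta - 1$, and its total degree is exactly $\delta n$, so it fulfils every requirement of the lemma.

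The main subtlety is the padding step: boosting the total degree to $\delta n$ must not push any coordinate past $2\delta-1$. Starting from $V_\delta$ rather than $V_{2\delta}$ leaves a uniform per-coordinate slack of $\delta$, which is precisely what makes the distribution of $\delta n - D$ pad units feasible under the cap. The defining inequality $\delta^{n-1}\ge dn+1$ is thus used in two distinct places: first to make the initial annihilator kernel nonempty, and then to make the subsequent pad feasible while keeping individual degrees under $2\delta$.
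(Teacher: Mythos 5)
Your construction is mathematically sound and follows the same high-level strategy as the paper's proof: view the annihilator $g_0$ of $\mathbf f(y)$ as a nonzero element of the kernel of the evaluation map from polynomials of individual degree $<\delta$ to $\F[y]$, use a dimension count to show the kernel is nonempty, note that $g_0(\mathbf f)=0$ forces $g_0\notin\calC$, and pad by a monomial $\mathbf{x^e}$ with $0\le e_i\le\delta$ to reach total degree exactly $\delta n$ while keeping individual degree $<2\delta$. Your dimension arithmetic and the feasibility of the pad are both correct.

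However, there is a genuine gap in the claimed running time, and you half-acknowledge it yourself. You set up the homogeneous system over all of $V_\delta$, which has $\delta^n$ unknowns, and then appeal to Gaussian elimination ``on this system.'' But $\delta^n$ is in general superpolynomial in $tdn$: the defining inequality $\delta^{n-1}\ge dn+1$ only bounds $\delta^{n-1}$ from below, and when $n$ is large (so $\delta=2$) the number of columns is $2^n$, which is not poly$(tdn)$ when $t$ and $d$ are polynomial in $n$. The lemma asserts poly$(tdn)$-time unconditionally, not ``in the low-arity regime.'' The paper's proof sidesteps this by the observation you are missing: since the image of the evaluation map has dimension at most $(\delta-1)nd+1 < \delta_0 := dn\delta+1$, one may fix \emph{any} arbitrary subset $I$ of $\delta_0$ monomials in advance and work only with those coefficients as unknowns; the restricted system still has more unknowns than equations, so a nontrivial solution exists, and the Gaussian elimination is now on a $\mathrm{poly}(dn)\times\mathrm{poly}(dn)$ matrix, giving the claimed poly$(tdn)$ bound. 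In other words, you need to restrict the column set to size $\delta_0$ \emph{before} solving, not enumerate all $\delta^n$ candidate monomials. Without this restriction the proof only establishes the special case in which $\delta^n=\mathrm{poly}(tdn)$, which is weaker than the lemma as stated (and weaker than what Corollary \ref{cor-hs-hard} implicitly relies on).

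Two minor notes: your first stage with $V_{2\delta}$ is redundant (step three with $V_\delta$ is the actual construction, and the pad provides the move to individual degree $<2\delta$); and you should note explicitly that $g_0$ cannot be a constant --- a nonzero constant would have $g_0(\mathbf f)\ne0$ --- so that the pad genuinely raises the degree to $\delta n$ from some $D\ge 1$.
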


\begin{proof}
A natural candidate for $g(\mathbf x)$ is any annihilating polynomial of the $n$ polynomials $\mathbf f(y)=(f_1(y),\ldots, f_n(y))$, since for every nonzero $h\in \mathcal C$, $h(\mathbf f)$ is nonzero. Define $\delta$ as the smallest integer such that $\delta^{n-1} > dn\ge (\delta-1)^{n-1}$. Consider $g(\mathbf x)$ as an arity-$n$ polynomial with individual degree less than $\delta$. 
Then, $g(\mathbf x)$ can be written as: 
\begin{equation}\label{eqn-ann-poly}
g(\mathbf x) \,=\, \sum_{\mathbf e\colon 0\le \mathbf e_i<\delta}c_{\mathbf e}\mathbf{x^e} 
\end{equation}
where, $c_{\mathbf e}$'s are unknown to us. We can set all these $c_{\mathbf e}$'s to zero except the ones corresponding to an index-set $I$ of size $\delta_0:=(dn\delta+1)$. This way we have only $\delta_0$ unknowns. To be an annihilating polynomial of $\mathbf f(y)$, we need $g(\mathbf f)=0$. By comparing the coefficients of the monomials, in $y$, both sides we get a linear system in the unknowns. 

Suppose that $\delta_1$ is the degree of $y$ in $g(\mathbf f)$. Then, $g(\mathbf f)$ can be written as $g(\mathbf f)=\sum_{i=0}^{\delta_1} p_i\cdot y^i$, where $p_i$'s are linear polynomials in $c_{\mathbf e}$'s. The constraint $g(\mathbf f)=0$ gives us a system of linear equations with the number of unknowns $\delta_0$ and the number of equations $\delta_1$. The value of $\delta_1$ can be at most 
$d\cdot n\cdot\delta$, which means that the number of unknowns $\delta_0$ is greater than the number of equations $\delta_1$. So, our system of linear equations always has a nontrivial solution, which gives us a nonzero $g$. In case its degree is $<\delta n$, we can multiply by an appropriate monomial $\mathbf{x^e}$ to make the degree $=\delta n$. Note that $g\notin\mathcal C$ holds, as $g(\mathbf f)$ is still nonzero (since $f_1(y),\ldots, f_n(y)$ are nonzero wlog).

Computing $\mathbf f(y)$ takes $t$ time and a solution of the linear equations can be computed in $\mathrm{poly}(tdn)$-time. So, $g(\mathbf x)$ can be computed in $\mathrm{poly}(tdn)$-time. 
\end{proof}

\begin{corollary}[E-computable]\label{cor-hs-hard}
In the proof of Lem.\ref{lemma:PIT_to_lb}, if $td=2^{O(n)}$ then the polynomial family $g_n :=g$, indexed by the arity, is E-computable (i.e.~all the formal monomials and their coefficients can be produced in poly($2^n$)-time given unary $n$).
\end{corollary}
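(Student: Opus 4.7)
The plan is to revisit each step of the proof of Lemma \ref{lemma:PIT_to_lb} under the extra assumption $td = 2^{O(n)}$, and verify that every quantity involved is at most $2^{O(n)}$ in magnitude and can be produced in $2^{O(n)}$ time on the unary input $1^n$. The central observation is that the per-variable bound $\delta := \lceil (dn+1)^{1/(n-1)} \rceil$ stays $O(1)$ in this regime: if $d \le 2^{cn}$, then $(dn+1)^{1/(n-1)} \le 2^{cn/(n-1)} \cdot n^{1/(n-1)} = 2^{O(1)}$. Hence $g_n$ is supported on exponents $\mathbf e \in \{0,1,\ldots,2\delta -1\}^n$, giving only $(2\delta)^n = 2^{O(n)}$ formal monomials to list.

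Next I would bound the cost of finding the coefficient vector. Evaluating the univariates $\mathbf f(y) = (f_1(y),\ldots,f_n(y))$ takes time $t = 2^{O(n)}$, and the linear system derived from $g(\mathbf f) = 0$ has $\delta_0 = dn\delta + 1 = \mathrm{poly}(d,n) = 2^{O(n)}$ unknowns and at most $\delta_1 \le dn\delta = 2^{O(n)}$ equations, with coefficients computable from $\mathbf f$ in time $2^{O(n)}$. Setting up this system and extracting any nonzero kernel vector by Gaussian elimination over $\Q$ runs in $(2^{O(n)})^{O(1)} = 2^{O(n)}$ arithmetic operations; the rational numbers produced have bitsize $2^{O(n)}$ as well, so the bit-complexity also lies within the required budget. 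The optional multiplication by a monomial $\mathbf x^{\mathbf e}$ (to force total degree exactly $\delta n$) is just a relabeling of exponents that still keeps the individual degree below $2\delta$, so it does not affect any of these counts.

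With these pieces in hand, the E-computation of $g_n$ proceeds as follows on unary input $1^n$: compute $\delta$, solve the linear system once to recover the $\delta_0$ active coefficients $\{c_{\mathbf e}\}_{\mathbf e \in I}$, then iterate over all $(2\delta)^n$ exponent vectors $\mathbf e \in \{0,\ldots,2\delta-1\}^n$ and output the pair $(\mathbf e,c_{\mathbf e})$ (with $c_{\mathbf e} = 0$ when $\mathbf e$ lies outside the shifted index set). Each step sits in $2^{O(n)}$ time, matching the definition of E on inputs of length $n$. The only genuine subtlety is the quantitative bookkeeping of $\delta = O(1)$ from the $(n-1)$-th root; everything else is routine size-accounting and linear algebra, so I would expect no conceptual obstacle beyond this arithmetic observation.
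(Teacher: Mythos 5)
Your proposal is correct and follows essentially the same route as the paper's own proof: both hinge on the observation that $\delta=O(1)$ when $d=2^{O(n)}$ (giving $2^{O(n)}$ monomials) and that the homogeneous linear system of size $\mathrm{poly}(tdn)=2^{O(n)}$ can be solved with $2^{O(n)}$-bit output. The only cosmetic difference is that the paper explicitly invokes homogeneity to obtain an integral solution, whereas you control bit-complexity directly through bounding the sizes of the rationals from Gaussian elimination—an equivalent bookkeeping choice.
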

\begin{proof}
The linear system that we got can be solved in poly($tdn$)-time. As it is homogeneous we can even get an integral solution in the same time-complexity. Thus, assuming $td=2^{O(n)}$, the time-complexity of computing a bit of $\text{coef}_{\mathbf{x^e}}(g)$ is poly($tdn$)$=$poly($2^n$), the coefficient bitsize is poly($2^n$) and so is the number of monomials in $g$ ($\because \delta=O(1)$). In other words, if we consider the polynomials $g_n := g$, indexed by the arity, then the family $\{g_n\}_n$ is E-computable.
\end{proof}

Towards a converse of the above lemma, a crucial ingredient is the Nisan-Wigderson design \cite{NW94}.

\begin{definition}\label{def:(l,n,d)-design}
Let $\ell>n>d$. A family of subsets $\mathcal F=\{I_1,\ldots, I_m\}$ on  $[\ell]$ is called an {\em $(\ell,n,d)$-design}, if $|I_i|=n$ and for all $i\neq j\in[m]$, $|I_i\cap I_j|\leq d$. 
\end{definition}

\begin{lemma}[Nisan-Wigderson design, Chap.16 \cite{AB09}]
\label{lemma:NW_design}
There exists an algorithm which takes $(\ell, n, d)$ and a base set $S$ of size $\ell> \frac{10n^2}{d}$ as input, and outputs an $(\ell, n, d)$-design $\mathcal F$ having $\geq 2^{\frac{d}{10}}$ subsets, in time $2^{O(\ell)}$.
\end{lemma}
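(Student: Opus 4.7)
The plan is a greedy construction: insert $n$-subsets of the base set $S$ one at a time, at each step choosing one whose intersection with every previously chosen subset is at most $d$, and argue that the process can continue until the family has $2^{d/10}$ members. The probabilistic backbone is a hypergeometric tail bound: for any fixed $I\subseteq S$ with $|I|=n$ and a uniformly random $n$-subset $J$ of $S$, the chance $|I\cap J|>d$ decays exponentially in $d$.

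Concretely, by linearity of expectation,
\[
\mathbb E\!\left[\binom{|I\cap J|}{d+1}\right] \;=\; \binom{n}{d+1}\cdot\prod_{k=0}^{d}\frac{n-k}{\ell-k}.
\]
Assuming $d\le\ell/2$ (otherwise the intersection constraint is automatic and the claim is immediate), each factor $(n-k)/(\ell-k)\le 2n/\ell$, so the right-hand side is at most $(2n^{2}/\ell)^{d+1}/(d+1)!$. Plugging in the hypothesis $\ell>10n^{2}/d$ and applying Stirling to $(d+1)!$ gives $\le (de/(5(d+1)))^{d+1}\le (e/5)^{d+1}\le 2^{-d/2}$. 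Markov's inequality then yields $\Pr_{J}[|I\cap J|>d]\le 2^{-d/2}$.

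With this tail bound in hand, greed succeeds for at least $2^{d/10}$ steps: after selecting $\mathcal F_{k}=\{I_{1},\ldots,I_{k}\}$, a union bound over the $k$ constraints shows the number of $n$-subsets violating $|I_{i}\cap J|\le d$ for some $i$ is at most $k\binom{\ell}{n}\cdot 2^{-d/2}$, strictly less than $\binom{\ell}{n}$ as long as $k<2^{d/2}$; so a legal $I_{k+1}$ always exists and one can continue until $k=2^{d/10}$. The algorithm itself enumerates, across $\le 2^{d/10}$ rounds, all $\binom{\ell}{n}\le 2^{\ell}$ size-$n$ subsets of $S$ and tests each against the current family in $\poly(\ell)$ time per test, giving a total running time of $2^{O(\ell)}$ (using $d\le\ell$ in the nontrivial regime).

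The main obstacle is quantitatively sharpening the tail bound so that $2^{-d/2}$ comfortably beats $2^{d/10}$ in the greedy union bound, which comes down to carefully combining $\ell>10n^{2}/d$ with Stirling on $(d+1)!$; everything else is routine. The corner cases---$d$ a small constant (so $2^{d/10}=O(1)$) or $d>n$ (so the intersection constraint is vacuous)---are disposed of trivially by exhibiting a few disjoint $n$-subsets or by arbitrary enumeration.
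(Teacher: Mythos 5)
The paper states this lemma as a black-box citation to Arora--Barak (Chap.\ 16) and does not reprove it, so there is no in-paper proof to compare against. Your reconstruction is the standard one: a greedy augmentation argument driven by a hypergeometric tail bound. The details check out. The moment calculation $\mathbb E\bigl[\binom{|I\cap J|}{d+1}\bigr]=\binom{n}{d+1}\prod_{k=0}^{d}\frac{n-k}{\ell-k}$ is correct (it is just linearity of expectation over $(d{+}1)$-subsets of $I$), the chain $\le (2n^2/\ell)^{d+1}/(d+1)! \le (e/5)^{d+1} \le 2^{-d/2}$ is valid under $\ell>10n^2/d$ and $d\le \ell/2$ (and since the design definition forces $n>d$, the hypothesis $\ell>10n^2/d$ already yields $\ell>10n>10d$, so $d\le\ell/2$ always holds -- your ``otherwise the constraint is automatic'' hedge is unnecessary but harmless), Markov then gives $\Pr[|I\cap J|>d]\le 2^{-d/2}$, and the union bound lets the greedy process run for $2^{d/2}\ge 2^{d/10}$ rounds. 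The $2^{O(\ell)}$ running time from exhaustive enumeration of $\binom{\ell}{n}\le 2^{\ell}$ subsets over at most $2^{d/10}\le 2^{\ell}$ rounds is also right. This matches what \cite{AB09} does in spirit; the only stylistic difference is that you get the hypergeometric tail via the $(d{+}1)$-st factorial moment rather than by directly summing $\binom{n}{j}\binom{\ell-n}{n-j}/\binom{\ell}{n}$ over $j>d$, which is a clean and equivalent route.
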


Next, we use a hard polynomial $q_m$ on a small design to get a poly-time computable arity-reducing polynomial-map for VP that preserves nonzeroness.

\begin{lemma}[Hardness to VP reduction]
\label{lemma:lb_to_PIT}
Let $\{q_m\}_{m\geq 1}$ be a family of multi-$(\delta-1)$-ic polynomials such that 
it can be computed in $\delta^{O(m)}$ time, but has no $\delta^{o(m)}$-size arithmetic circuit. Then there is a $\delta^{O(\log sd)}$-time arity reduction, from $n$ to $O(\log sd)$, for VP circuits. 
\end{lemma}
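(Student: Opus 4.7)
}

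The plan is to adapt the Nisan--Wigderson hardness-to-randomness paradigm: use $q_m$ as the ``gadget'' in a combinatorial design and argue nonzeroness preservation by a hybrid argument followed by Kaltofen's factorization~\cite{K89}. Fix $m := c\log(sd)$ and an intersection parameter $d' := c'\log(sd)$, where the constants $c, c'$ are chosen so that (i)~$2^{d'/10} \ge n$, (ii)~$\delta^{d'} = \poly(sd)$, and (iii)~the $\poly(sd)$ circuit-size upper bound produced below lies strictly inside the hardness window $\delta^{o(m)}$ (this pins $c$ to be above the universal exponent coming from Kaltofen). Apply Lemma~\ref{lemma:NW_design} to obtain an $(\ell, m, d')$-design $\{I_1,\ldots,I_n\} \subseteq \binom{[\ell]}{m}$ with $\ell = \lceil 10m^2/d' \rceil = O(\log sd)$, and define the arity reducer
\[
\phi\colon\ x_i\ \mapsto\ q_m(\mathbf{y}|_{I_i}),\qquad i\in[n].
\]
Each output coordinate is computable in $\delta^{O(m)} = \delta^{O(\log sd)}$ time, so $\phi$ meets the claimed time bound and output arity $\ell = O(\log sd)$.

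For correctness, suppose for contradiction that $C\not\equiv 0$ is a size-$s$, degree-$d$, $n$-variate circuit but $C\circ\phi\equiv 0$. Run the standard hybrid by defining $P_i(x_1,\ldots,x_i,\mathbf{y}) := C(x_1,\ldots,x_i, q_m(\mathbf{y}|_{I_{i+1}}),\ldots, q_m(\mathbf{y}|_{I_n}))$, so $P_0 = C\circ\phi\equiv 0$ while $P_n = C\not\equiv 0$. Pick the smallest $i\ge 1$ with $P_i\not\equiv 0$; the relation $P_{i-1}=P_i|_{x_i=q_m(\mathbf{y}|_{I_i})}\equiv 0$ yields the divisibility $(x_i-q_m(\mathbf{y}|_{I_i}))\mid P_i$ in $\mathbb{F}[x_1,\ldots,x_i,\mathbf{y}]$. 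Now freeze the $\ell-m$ variables $\mathbf{y}|_{[\ell]\setminus I_i}$ to generic constants $\mathbf{a}$: the design condition $|I_k\cap I_i|\le d'$ forces each restricted $q_m(\mathbf{y}|_{I_k})|_{\mathbf{a}}$ (for $k>i$) to depend on only the $\le d'$ variables $\mathbf{y}|_{I_k\cap I_i}$ and, being multi-$(\delta-1)$-ic, to have at most $\delta^{d'}=\poly(sd)$ monomials and thus an explicit $\Sigma\Pi$ representation of that size. Splicing these sparse representations into $C$ exhibits $Q(x_1,\ldots,x_i,\mathbf{y}|_{I_i}) := P_i|_{\mathbf{a}}$ as a circuit of $\poly(sd)$ size and $\poly(sd)$ degree. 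Infinitude of $\mathbb{F}=\mathbb{Q}$ (Schwartz--Zippel) lets us choose $\mathbf{a}$ so that $Q\not\equiv 0$, and the divisibility $(x_i-q_m(\mathbf{y}|_{I_i}))\mid Q$ is preserved because the evaluation homomorphism fixes the divisor, which mentions only unfrozen variables.

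Finally, Kaltofen's theorem~\cite{K89} gives each factor of $Q$ a circuit of size polynomial in the size and degree of $Q$; subtracting $x_i$ from the resulting circuit for $x_i-q_m(\mathbf{y}|_{I_i})$ then yields a $\poly(sd)$-size circuit for $q_m$, which by the choice of $c$ is a $\delta^{o(m)}$-size circuit, contradicting the hardness hypothesis and completing the proof. The main obstacle in carrying this out is the parameter balancing: $m$ must be large enough that $\delta^{o(m)}$ strictly dominates the $\poly(sd)$ upper bound emerging from the Kaltofen step (whose exponent is a universal constant not under our control), yet small enough that $m=O(\log sd)$ so that $\phi$ remains computable in $\delta^{O(\log sd)}$ time. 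A subsidiary delicate point is the simultaneous preservation, under the restriction to $\mathbf{y}|_{[\ell]\setminus I_i}=\mathbf{a}$, of both nonzeroness of $P_i$ and divisibility by $x_i-q_m(\mathbf{y}|_{I_i})$; the divisibility is automatic since the divisor involves only unfrozen variables, while the nonzeroness needs nothing beyond infinitude of the base field.
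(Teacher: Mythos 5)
Your proposal is correct and follows essentially the same approach as the paper: Nisan--Wigderson design with $q_m$ as the combinatorial gadget, a one-step hybrid to locate the critical substitution, restriction via the low-intersection property to obtain a small circuit, and Kaltofen factorization to extract a circuit for $q_m$ that contradicts the hardness hypothesis. One small imprecision worth noting: the intermediate circuit for $Q$ and the Kaltofen output are of size $\delta^{O(\log sd)}$, not $\poly(sd)$ in general (the two agree only when $\delta=O(1)$); the contradiction still goes through by comparing to the lower bound $\Omega(\delta^{c_0 m})$ with $m = c\log(sd)$ for $c$ large enough, exactly as the paper does by fixing $c_1 = 15c_3/c_0$.
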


\begin{proof}
Note that there is a constant $c_0>0$ such that $q_m$ requires $\Omega(\delta^{c_0m})$-size arithmetic circuits. Otherwise $\{q_m\}_{m\geq 1}$ will be in $\cap_{c>0} \text{Size}(\delta^{cm})$, and hence in $\text{Size}(\delta^{o(m)})$.

Let $\mathcal C$ be a set of arity-$n$ degree-$d$ VP circuits with size$\leq s$. Let $n':=sd \ge n$. Let $\mathcal F=\{S_1,\ldots, S_{n'}\}$ be a $(c_2\log n', c_1\log n', 10\log n')$-design on the variable set $Z=\{z_1,\ldots, z_{c_2\log n'}\}$. Constants $c_2>c_1>10$ will be fixed later. Our hitting-set generator for $\mathcal C$ is defined as: for all 
$i\in[n]$, $x_{i}=q_{c_1\log n'}(S_i) =:p_i$. Then, we show that for any nonzero polynomial $C(\mathbf x)\in \mathcal C$, $C(p_1,\ldots, p_n)$ is also nonzero. 

For the sake of contradiction, assume that $C(p_1,\ldots, p_n)$ is zero. Since $C(\mathbf x)$ is nonzero, we can find the smallest $j\in[n]$ such that 
$C(p_1,\ldots, p_{j-1}, x_j,\ldots, x_n) =:C_1$ is nonzero, but 
$C_1(x_j=p_j)$ is zero. Thus, $(x_j-p_j)$ divides  $C_1$. Let $\mathbf a$ be an assignment on all the variables in $C_1$, except $x_j$ and the variables $S_j$ in $p_j$, with the property: $C_1$ at $\mathbf a$ is nonzero. Since $C_1$ is nonzero, we can find such an assignment. Now our new polynomial $C_2$ on the variables $S_j$ is of the form: 
$$C_2(S_j) \,=\, C(p_1',\ldots, p_{j-1}',x_j, a_{j+1},\ldots, a_n) $$ 
where, for each $i\in[j-1]$, $p_i'$ is the polynomial on the variables $S_i\cap S_j$, and $a_i$'s are field constants decided by our assignment $\mathbf a$. By the design, for each $i\in[j-1]$, $|S_i\cap S_j|\leq 10\log n'$. Since $p_i'$ are polynomials on variables $S_i\cap S_j$ of individual degree$< \delta$, each $p_i'$ has a circuit of size at most $n\delta^{10\log n'}\le\delta^{11\log n'}$. Then we have a circuit for $C_2$ of size at most $s_1:= s+ n\cdot\delta^{11\log n'}$, and degree at most $d_1:= d\cdot \delta c_1\log n'$. Since $(x_j-p_j)$ divides $C_2$, we can invoke the VP factorization algorithm \cite{K89} (see \cite[Thm.2.21]{B13} for the arithmetic circuit complexity of factors) and get an arithmetic circuit for $p_j$ of size $(s_1d_1)^{c_3}$, for some absolute constant $c_3$ (independent of $c_1, c_2$). 

Now we fix constants $c_1, c_2$. Pick $c_1$ such that $\delta^{c_0\cdot c_1\log n'}$ is asympotically larger than $(2sn\delta^{11\log n'}\cdot d\delta c_1\log n')^{c_3} > (s_1d_1)^{c_3}$. Since $sd=n'$ and $\delta\ge2$, the absolute constant $c_1:= 15c_3/c_0$ (independent of $c_2$) satisfies the above condition. 

Pick $c_2$, following Lem.\ref{lemma:NW_design}, such that $c_2\log n' > 10\cdot(c_1\log n')^2/(10\log n')$. So, $c_2:= 1+c_1^2$ works. With these values of $c_1, c_2$, we have a design that `stretches' $c_2\log n'$ variables to $n$ subsets with the required `low' intersection property. It is computable in poly($n'$)-time.

Moreover, if $C(p_1,\ldots, p_n)$ is zero then, by the above discussion, $p_j=q_{c_1\log n'}(S_j)$ has a circuit of size $(s_1d_1)^{c_3}= o(\delta^{c_0\cdot c_1\log n'})$. This violates the lower bound hypothesis. Thus, $C(p_1,\ldots, p_n)$ is nonzero. 

The time for computing $(p_1,\ldots, p_n)$ depends on: (1) computing the design (i.e.~poly($n'$)-time), and (2) computing $q_{c_1\log n'}$ (i.e.~$\delta^{O(\log n')}$-time). Thus, the arity reduction map for VP is computable in $\delta^{O(\log n')}$ time.
\end{proof}

Once we have a polynomial that is hard for a tiny model, to apply the above lemma, we need to show that it is also hard for VP. This is done by depth-reduction results.  First, we need a lemma that converts a monomial into a sum of powers. This was used in \cite{Gupta13}. (It requires $\ch\;\F=0$ or large.)

\begin{lemma}[Fischer's Trick \cite{F94}]
\label{lemma:fischer's_trick}
Over a field $\mathbb F$ of $\ch(\mathbb F)=0\ \mathrm{or}\ >r$, any expression of the form $g=\sum_{i\in [k]}\prod_{j\in[r]} g_{ij}$ with $\deg(g_{ij})\leq \delta$, can be rewritten as $g=\sum_{i\in[k']} c_i g_i^r$ where $k':= k2^r$, $\deg(g_i)\leq \delta$ and $c_i\in \mathbb F$.
\end{lemma}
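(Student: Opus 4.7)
\medskip

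The plan is to exhibit an identity that expresses a single product $y_1 y_2 \cdots y_r$ as a $\mathbb F$-linear combination of at most $2^r$ pure $r$-th powers of linear forms in $y_1,\ldots,y_r$, then apply this identity to each inner product $\prod_{j \in [r]} g_{ij}$ with $y_j := g_{ij}$.

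The key identity I would use is the symmetrization/inclusion–exclusion formula
\[
r!\cdot y_1 y_2\cdots y_r \;=\; \sum_{S\subseteq[r]} (-1)^{r-|S|}\Bigl(\sum_{j\in S} y_j\Bigr)^{r}.
\]
I would prove this by expanding each $\bigl(\sum_{j\in S} y_j\bigr)^r$ via the multinomial theorem and comparing coefficients of an arbitrary monomial $y_1^{e_1}\cdots y_r^{e_r}$ with $\sum e_j = r$. Let $T\subseteq[r]$ be the support $\{j : e_j>0\}$. This monomial appears (with a fixed multinomial coefficient $c$) inside $\bigl(\sum_{j\in S} y_j\bigr)^r$ exactly when $S\supseteq T$, and then contributes $c\cdot(-1)^{r-|S|}$ on the right-hand side. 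Summing over $S\supseteq T$ gives $c\cdot\sum_{i=0}^{r-|T|}\binom{r-|T|}{i}(-1)^{r-|T|-i}=c\cdot (1-1)^{r-|T|}$, which vanishes unless $T=[r]$; in the latter case only $S=[r]$ contributes, yielding the coefficient $r!$ of $y_1\cdots y_r$. This establishes the identity.

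Now substitute $y_j = g_{ij}$ into the identity to obtain
\[
\prod_{j\in[r]} g_{ij} \;=\; \frac{1}{r!}\sum_{\emptyset\ne S\subseteq[r]} (-1)^{r-|S|}\Bigl(\sum_{j\in S} g_{ij}\Bigr)^{r},
\]
where the empty $S$ is dropped since it contributes $0$. Each form $\sum_{j\in S} g_{ij}$ has degree $\le \delta$ (as a sum of polynomials of degree $\le\delta$), so the right-hand side expresses $\prod_j g_{ij}$ as a sum of at most $2^r-1\le 2^r$ terms of the form $c\cdot g^r$ with $\deg g\le\delta$ and $c\in\mathbb F$. Summing over $i\in[k]$ yields the required representation with $k':=k\cdot 2^r$.

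The only place characteristic plays a role is the division by $r!$, which is invertible in $\mathbb F$ precisely when $\ch(\mathbb F)=0$ or $\ch(\mathbb F)>r$, matching the hypothesis. There is no real obstacle here; the main check is the combinatorial vanishing argument in the identity, and the rest is bookkeeping of degrees and the term count.
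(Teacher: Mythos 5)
Your proof is correct, and it is the standard argument behind Fischer's trick: the identity $r!\, y_1\cdots y_r = \sum_{S\subseteq[r]} (-1)^{r-|S|}\bigl(\sum_{j\in S} y_j\bigr)^r$, verified by comparing multinomial coefficients and using $\sum_{U\subseteq[r]\setminus T}(-1)^{|U|}=0$ for $T\subsetneq[r]$. The paper itself does not reprove this lemma (it simply cites \cite{F94}), so there is nothing to compare beyond noting that your route is exactly the one the citation refers to; the degree bound, the $2^r$ term count, and the need for $r!$ to be invertible (hence the characteristic hypothesis) are all handled correctly.
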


Motivated by this transformation (when $2^r$ is `small'), we define a tiny subclass of VP.

\begin{definition}
\label{def:tiny_depth-4_diagonal_circuit}
The {\em diagonal depth-$4$} circuits compute polynomials of the form $\sum_{i\in[k]} c_i f_i^a$ where $f_i$'s are sparse polynomials in $\mathbb F[x_1,\ldots, x_n]$ of degree $\leq b$ and $c_i$'s in $\F$. A standard notation to denote this class is $\Sigma\wedge^a\Sigma\Pi^b(n)$. This is a special case of the depth-$4$ $\Sigma\Pi^a\Sigma\Pi^b(n)$ model that computes polynomials of the form $\sum_{i\in[k]} \prod_{j\in[a]} f_{i,j}$ where $f_{i,j}$'s are sparse polynomials in $\mathbb F[x_1,\ldots, x_n]$ of degree $\leq b$.

Given a constant $c>1$, computable functions $\mu(a)=\Omega(a)$ and $\mu'$, we define the class $\calT_{\mu, \mu', c}$ , called \emph{tiny diagonal depth-$4$}, containing $\Sigma\wedge^a\Sigma\Pi^b(n)$ circuits of size $\leq s$ that compute polynomials of semantic individual-degree $\le a'$ and $2^n + 2^b + \mu(a)+ \mu'(a') \;<$ $s^c$.

Analogously, we define the class $\calT_{\mu',c}'$ , called \emph{tiny depth-$3$}, containing $\Sigma\Pi\Sigma(n)$ circuits of size $\leq s$ that compute polynomials of semantic individual-degree $\le a'$, and $2^n + \mu'(a') \;<$ $s^c$.
\end{definition}

{\bf Remark.}
Note that $n, b=$ $O(\log s)$ and by picking the function $\mu(\cdot)$ (resp.~$\mu'$) arbitrarily large we can make $a=a(s)=\omega(1)$ (resp.~$a'$) an arbitrarily small computable function. Also, in this regime the number of monomials in the bottom  $\Sigma\Pi^b$ layer is ${n+b\choose b}<2^{n+b}< s^{2c}$, so the $f_i$'s can be thought of as given in the {\em dense} representation. Analogously, in tiny depth-$3$, $n=O(\log s)$ and the semantic individual-degree bound $a'=\omega(1)$ can be picked arbitrarily small.

An alternative interpretation of the tiny models can be given using the parameterized complexity \cite{DF13} of PIT. Essentially, we are interested in hitting-sets for the diagonal depth-$4$ model that are {\em fixed parameter tractable} wrt $n,b, a$ and $a'$ (input size is $s$). Analogously, we are interested in hitting-sets for the depth-$3$ model that are {\em fixed parameter tractable} wrt $n$ and $a'$, where $a'$ is the semantic individual-degree bound (input size is $s$).

Now we invoke VP depth-reduction to get to the tiny model, and finish our proof.

\begin{proof}[Proof of Thm.\ref{thm-main1}]
The proof is along the lines of \cite[Thm.3.2]{Agrawal08}. Using Lem.\ref{lemma:PIT_to_lb}, from poly-time hitting-set generator for tiny diagonal depth-$4$, we get a hard polynomial for this model. Then we show that it is also hard for VP and invoke Lem.\ref{lemma:lb_to_PIT} to get the VP arity reduction. 

Now we provide the details. Let constant $c>1$, functions $\mu(a)=\Omega(a)$ and $\mu'$ be given in the hypothesis. 
Let $\mathcal C\subset \calT_{\mu,\mu', c}$ be the set of tiny diagonal depth-$4$ circuits of size $\le s$ and arity $m:=\log s$. Assume that $\mathcal C$ has a $(s^e, s^e)$-hsg $\mathbf f(y)$ for some constant $e\ge1$. Then using Lem.\ref{lemma:PIT_to_lb}, we have an $m$-variate polynomial $q_m$ with individual degree less than some constant $2\delta$, $\delta^{m-1}=\delta^{\log s-1}> s^e \log s$, and $q_m$ is computable in $s^{O(1)}= \delta^{O(m)}$ time. It has degree $=\delta m$. Importantly, $q_m\notin \calT_{\mu,\mu', c}$ , thus, no tiny diagonal depth-$4$ circuit of size $\leq s = \delta^{\Theta(m)}$ can compute it (otherwise, $q_m(\mathbf f(y))\ne 0$ which contradicts its definition as an annihilator). Next we show that it is also not computable by any $\delta^{o(m)}$-size arithmetic circuit. 

For the sake of contradiction, assume that $q_m$ has a $\delta^{o(m)}$-size circuit. From depth-reduction results \cite{Sap16} we get a circuit $C$, of $\Theta(\log \delta m)$-depth and $s_m= \delta^{o(m)}$ size, with the additional properties: 
\begin{enumerate}
\item alternative layers of addition/multiplication gates with the top-gate (root) being addition. 
\item below each multiplication layer the polynomial degree at least halves.
\item fan-in of each multiplication gate is at most $5$. 
\end{enumerate}

Now we cut the circuit $C$ at the $t$-th layer of multiplication gates from the top, where $t=t(s_m)$ will be fixed later, to get the two parts: 
\begin{description}
\item[Top part:] the top part computes a polynomial of degree at most $5^t$ and the number of variables 
is at most $s_m$. So it can be reduced to a $\Sigma\Pi$ circuit of size $\binom{s_m+5^t}{5^t}=s_m^{O(5^t)}$ (Stirling's approximation, see \cite[Prop.4.4]{Sap16}).

\item[Bottom part:] in the bottom part, we can have at most $s_m$ many top-multiplication gates that feed into the top part as input. Each multiplication gate computes a polynomial of degree at most $\delta m/2^t$ and the number of variables is at most $m$. So each multiplication gate can be reduced to a $\Sigma\Pi$ circuit of size $\binom{m+ \delta m/2^t}{\delta m/2^t} =$ $2^{O(\delta mt/2^t)}$. 
\end{description}

From the above discussion, we have a $\Sigma\Pi^{5^t}\Sigma\Pi^{\delta m/2^{t}}$ circuit $C'$, computing $q_m$, that has size $s_m^{O(5^t)}+s_m\cdot 2^{O(\delta mt/2^t)}$. 

The second summand becomes $2^{o(m\log \delta)}$ if we pick $t=\omega(1)$ (recall that $s_m=\delta^{o(m)}$ and $\delta=O(1)$). To get a similar upper bound on the first summand we need to pick $5^t\log s_m = o(m\log \delta)$. Finally, we also want $\mu(5^t)=o(m)$. A function $t=t(s_m)=t(s)$, satisfying the three conditions, exists as $\log s_m = o(m\log \delta)$ and $\mu(\cdot)$ is an increasing function. Let us fix such a function $t$. (As $C$ has super-constant depth, we can also assume that the cut at depth $t$ will be possible.) Thus the circuit $C'$, computing $q_m$, has size $s'_m=$ $\delta^{o(m)}$.

Let $a:=5^t$ and $b:=\delta m/2^{t}$. Consider the measure $E:= 2^m+ 2^b+ \mu(a)+\mu'(2\delta)$. We have the estimate $E=$ $s+2^{o(m)}+o(m)+O(1) =$ $s+o(s)=$ $o(s^c)$. 
So now we have a shallow circuit for $q_m$ of the form $\Sigma\Pi^{a}\Sigma\Pi^{b}$. Applying Lem.\ref{lemma:fischer's_trick}, we get a tiny diagonal depth-$4$ circuit, in $\calT_{\mu,\mu',c}$, computing $q_m$ of the form $\Sigma\wedge^{a}\Sigma\Pi^{b}$ and size $s'_m\cdot 2^a = \delta^{o(m)}\cdot 2^{O(\mu(a))} = \delta^{o(m)}$ which is $<s$. This contradicts the hardness of $q_m$. Thus, there is no arithmetic circuit for $q_m$ of size $\delta^{o(m)}$. 

Now invoking Lem.\ref{lemma:lb_to_PIT} on the hard family $\{q_m\}_{m\ge1}$, we get our claim.
\end{proof}

\begin{proof}[Proof of Thm.\ref{thm-main1.1}]
Suppose we have a poly($sd2^n$)-time hitting-set $\calH_{s,d,n}$ for size-$s$ degree-$d$ arity-$n$ circuits. Then, in particular, we have a poly($s$)-time blackbox PIT for tiny diagonal depth-$4$. Thus, Thm.\ref{thm-main1} gives a poly($sd$)-time arity-reducing polynomial-map ($n\mapsto O(\log sd)$) for VP that preserves nonzeroness. Let $n':=sd$ correspond to a given VP circuit $C$. Now, using $\calH_{n'^{O(1)},O(d\log n'),O(\log n')}$ we get a poly($n'$)-time hitting-set for $C$.

In the proof of Thm.\ref{thm-main1} we get the hard polynomial $q_m$, which by Cor.\ref{cor-hs-hard} gives us an E-computable polynomial family $\{q_m\}_m$, indexed by the arity, that has arithmetic circuit complexity $2^{\Omega(m)}$. 
\end{proof}

The existence of such a family $\{q_m\}_m$ has interesting complexity consequences.

\begin{lemma}[Class separation]\label{lem-class-sep}
If we have an E-computable polynomial family $\{f_n\}_n$ with individual-degree $O(1)$ and arithmetic circuit complexity $2^{\Omega(n)}$, then either E$\not\subseteq$\#P/poly or VNP has polynomials of arithmetic complexity $2^{\Omega(n)}$.
\end{lemma}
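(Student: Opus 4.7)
The plan is to prove the contrapositive: assuming $\mathrm{E} \subseteq \#\mathrm{P}/\mathrm{poly}$, I would place (an auxiliary encoding of) $\{f_n\}_n$ into VNP via Valiant's criterion (Prop.~2.20 of \cite{B13}) and then transfer the $2^{\Omega(n)}$ lower bound. The key observations are that (i) because the individual-degree of $f_n$ is $O(1)$, its total degree is $O(n)$, polynomially bounded in the arity, and (ii) the E-computability of $\{f_n\}_n$ says the bit-extraction map $(1^n, \mathbf e, i) \mapsto \mathrm{bit}_i(\mathrm{coef}_{\mathbf x^{\mathbf e}}(f_n))$ lies in E. The hypothesis $\mathrm{E} \subseteq \#\mathrm{P}/\mathrm{poly}$ then upgrades this Boolean function to $\#\mathrm{P}/\mathrm{poly}$.

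Since Cor.~\ref{cor-hs-hard} only guarantees the coefficient bitsize as poly$(2^n)$, Valiant's criterion in its cleanest ($\{0,1\}$-coefficient) form does not apply directly to $f_n$. I would reconcile this by introducing $O(n)$ fresh variables $y_0, \ldots, y_{k-1}$ with $k = O(n)$ large enough to index all bit positions, and defining
\[
\tilde f_n(\mathbf x, \mathbf y) \;:=\; \sum_{\mathbf e} \sum_{j} \mathrm{bit}_j(c_{n,\mathbf e}) \left(\prod_{i : \mathrm{bit}_i(j) = 1} y_i\right) \mathbf x^{\mathbf e},
\]
so that the specialization $y_i \mapsto 2^{2^i}$ recovers $f_n$ (indeed $\prod_{i : \mathrm{bit}_i(j)=1} 2^{2^i} = 2^j$). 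Then $\tilde f_n$ is multilinear in $\mathbf y$, has arity $O(n)$, total degree $O(n)$, and $\{0,1\}$-coefficients whose extraction map is precisely the $\#\mathrm{P}/\mathrm{poly}$ function from the previous paragraph; Valiant's criterion now places $\{\tilde f_n\}_n$ in VNP.

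Finally I would transfer the hardness: the substitution $y_i = 2^{2^i}$ is implementable by an $O(n^2)$-size arithmetic gadget (via iterated squaring), so any circuit for $\tilde f_n$ yields one for $f_n$ of essentially the same size, and hence $\tilde f_n$ inherits a $2^{\Omega(n)}$ arithmetic-complexity lower bound from $f_n$. This exhibits a VNP family of arithmetic complexity $2^{\Omega(n)}$ in arity $O(n)$, establishing the second alternative of the lemma. The main obstacle I anticipate is exactly this bitsize reconciliation with Valiant's criterion; introducing the $\mathbf y$ auxiliaries is a mild cost because the additive $O(n^2) \ll 2^{\Omega(n)}$ slack in the encoding absorbs any constants without disturbing the exponential gap.
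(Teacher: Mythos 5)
Your argument has the same skeleton as the paper's: reduce the coefficient bitsize to make the family amenable to Valiant's criterion, then invoke the criterion under the assumption E$\,\subseteq\,$\#P/poly. The paper performs the bitsize reduction by citing a black-box transformation ([KP09, Lem.3.9]) that produces a $\{0,\pm1\}$-coefficient family which remains E-computable and retains the $2^{\Omega(n)}$ lower bound; your auxiliary-variable construction with the specialization $y_i \mapsto 2^{2^i}$ is an explicit realization of essentially that same trick, so you are not on a genuinely different route, only fleshing out the step the paper delegates to a citation.

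There is one real, though easily repaired, gap. The encoding $\sum_j \mathrm{bit}_j(c_{n,\mathbf e})\prod_{i:\mathrm{bit}_i(j)=1} y_i$ reconstructs $|c_{n,\mathbf e}|$ under $y_i \mapsto 2^{2^i}$, not $c_{n,\mathbf e}$ itself; and the $c_{n,\mathbf e}$ arise as an integral (hence possibly negative) solution of a homogeneous linear system (Cor.~\ref{cor-hs-hard}), so the sign is lost. The fix is to let the coefficient of $\mathbf x^{\mathbf e}\mathbf y^{\mathbf j}$ be $\mathrm{sign}(c_{n,\mathbf e})\cdot\mathrm{bit}_j(|c_{n,\mathbf e}|) \in \{0,\pm1\}$: the coefficient function is still E-computable on poly$(n)$-length inputs, and the version of Valiant's criterion invoked as [B13, Prop.2.20] accepts $\{0,\pm1\}$ coefficients (it is effectively a GapP/poly statement, exactly what the paper's $\{0,\pm1\}$-coefficient family requires). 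With that adjustment your proof matches the paper's.
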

\begin{proof}
Say, for a constant $\delta\ge1$, we have an E-computable multi-$\delta$-ic polynomial family $\{f_n\}_n$ with arithmetic circuit complexity $2^{\Omega(n)}$. Clearly, the coefficients in $f_n$ have bitsize $2^{\Omega(n)}$. By using a simple transformation, given in \cite[Lem.3.9]{KP09}, we get a multi-$\delta$-ic polynomial family $\{h_n\}_n$, that is E-computable and has arithmetic complexity $2^{\Omega(n)}$, such that its coefficients are $\{0, \pm1\}$.

Assume E$\subseteq$\#P/poly. Since each coefficient of $h_n$ is a signed-bit that is computable in E, we deduce that the coefficient-function of $h_n$ is in \#P/poly. Thus, by \cite[Prop.2.20]{B13}, $\{h_n\}_n$ is in VNP and has arithmetic complexity $2^{\Omega(n)}$.
\end{proof}

Our techniques could  handle many other `tiny' models. The proofs are given in Sec.\ref{app-0}.

\begin{theorem}[Tiny depth-$3$]\label{thm-main1.01}
If we have poly-time hitting-sets for a tiny depth-$3$ model, then for VP circuits we have a poly($sd$)-time arity reduction ($n\mapsto O(\log sd)$) that preserves nonzeroness (and proves an exponential lower bound).
\end{theorem}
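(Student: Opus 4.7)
The plan is to follow the proof template of Thm.~\ref{thm-main1} essentially verbatim, with the sole modification being that the depth-reduction plus Fischer's-trick step is replaced by the depth-$3$ chasm of \cite{Gupta13}.

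First, from the assumed poly($s$)-time hitting-set for $\calT'_{\mu',c}$ (which holds for every admissible $\mu',c$), I would invoke Lem.~\ref{lemma:PIT_to_lb} with arity $m := \log s$ and an $(s^e,s^e)$-hsg ($e\ge 1$) to extract a hard polynomial $q_m$: it is $m$-variate, has individual degree less than $2\delta$ for some constant $\delta=\delta(e)$, has degree exactly $\delta m=\Theta(\log s)$, is computable in $\poly(s)$-time, and is not in $\calT'_{\mu',c}$ at size $\le s$. By Cor.~\ref{cor-hs-hard} the family $\{q_m\}_m$ is E-computable.

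Second (the key step), I need to upgrade this ``tiny-depth-$3$ hardness'' to hardness against general arithmetic circuits, i.e.\ show that $\{q_m\}_m$ has no $\delta^{o(m)}$-size circuit, in order to apply Lem.~\ref{lemma:lb_to_PIT}. Suppose for contradiction that $q_m$ has a $\delta^{o(m)}$-size arithmetic circuit. Since $q_m$ has arity $m=O(\log s)$ and degree $O(\log s)$, the depth-$3$ chasm of \cite{Gupta13} converts this circuit into a $\Sigma\Pi\Sigma$ circuit whose size $S(s)$ is subpolynomial in $s$ (the chasm's $s_0^{O(\sqrt d)}$-type blowup stays subpolynomial in the tiny regime). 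The resulting depth-$3$ circuit has arity $m=O(\log s)$ and semantic individual-degree $<2\delta$, both inherited from $q_m$; hence, by choosing $e$, $c$ and the function $\mu'$ with sufficient slack (the hypothesis grants poly-time hitting-sets for every such choice), one places the chasm's output inside $\calT'_{\mu',c}$ at size $\le s$, contradicting $q_m\notin\calT'_{\mu',c}$. Therefore $\{q_m\}_m$ is hard for general circuits in the sense required by Lem.~\ref{lemma:lb_to_PIT}.

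Third, applying Lem.~\ref{lemma:lb_to_PIT} to this hard family yields the claimed $\delta^{O(\log sd)}=\poly(sd)$-time arity-reducing polynomial map $n\mapsto O(\log sd)$ for VP circuits that preserves nonzeroness. The exponential lower bound follows exactly as in the proof of Thm.~\ref{thm-main1.1}: since $\{q_m\}_m$ is E-computable with arithmetic circuit complexity $2^{\Omega(m)}$, Lem.~\ref{lem-class-sep} gives either E$\not\subseteq$\#P/poly or that VNP contains polynomials of $2^{\Omega(n)}$ arithmetic complexity.

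The main obstacle lies in the second step: one must carefully match parameters so that the depth-$3$ chasm's blowup is absorbed inside the tiny depth-$3$ size bound $s$. Because \cite{Gupta13} inflates size by $s_0^{O(\sqrt d)}$—only subpolynomial, not polynomial, in the tiny regime—the matching requires exploiting the full strength of the hypothesis (poly-time hitting-sets for \emph{every} constant $c$ and every computable $\mu'$), and the arithmetic interplay of $e,\delta,c,\mu'$ must be tracked with some care.
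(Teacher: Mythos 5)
Your steps 1 and 3 are correct and match the paper, but step 2 contains a genuine quantitative gap.

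You propose to apply the depth-$3$ chasm of~\cite{Gupta13} in one shot to the hypothetical $\delta^{o(m)}$-size circuit for $q_m$ and claim the resulting $\Sigma\Pi\Sigma$ circuit has size $\le s$. That does not follow. Write $s_0 = \delta^{\epsilon(m) m}$ for the circuit size, where $\epsilon(m)\to 0$ is \emph{given to you} by the contradiction hypothesis, not chosen by you. The chasm's blowup is of the form $s_0^{O(\sqrt{d})}$ with $d = \delta m = \Theta(\log s)$, so the exponent becomes $\epsilon(m)\, m \cdot \Theta(\sqrt{m}) = \Theta(\epsilon(m)\, m^{3/2})$. For this to be $O(m)$ (equivalently, for the depth-$3$ circuit to have size $\mathrm{poly}(s) = 2^{O(m)}$, let alone size $<s$), you need $\epsilon(m) = O(1/\sqrt{m})$. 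But the contradiction hypothesis only supplies $\epsilon(m)\to 0$; if, say, $\epsilon(m) = 1/\log m$, your blowup is $2^{\Theta(m^{3/2}/\log m)} \gg s$, the depth-$3$ circuit does not land inside $\calT'_{\mu',c}$, and no contradiction is obtained. The slack you invoke in $e$, $c$, $\mu'$ does not help: $e$ fixes $\delta$ (a constant), the tiny-measure constraint $2^m + \mu'(2\delta) < s^c$ is satisfied automatically since $2\delta = O(1)$, and none of these parameters touches the $\sqrt{d}$ multiplier in the chasm's exponent.

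The paper sidesteps exactly this obstruction by a two-stage reduction with a tunable knob. First it applies the Agrawal--Vinay depth-$4$ reduction with a cut at depth $t$, producing a $\Sigma\Pi^a\Sigma\Pi^b$ circuit with $a = 5^t$, $b = \delta m/2^t$; crucially $t = t(s_m)$ is chosen \emph{as a function of the hypothesized circuit size} so that $5^t \log s_m = o(m)$ and $t = \omega(1)$ simultaneously, which is always possible because $\log s_m = o(m)$. This keeps both the top-part size $s_m^{O(5^t)}$ and the bottom-part size $2^{O(\delta m t/2^t)}$ at $\delta^{o(m)}$. Only then does the paper use the cheap depth-$4 \to$ depth-$3$ duality step from~\cite{Gupta13}, which converts a size-$s'$ $\Sigma\Pi^a\Sigma\Pi^b$ circuit into a $\mathrm{poly}(s' 2^{a+b})$-size $\Sigma\Pi\Sigma^b$ circuit — a blowup of only $2^{a+b} = 2^{o(\log s)}$ since $a,b = o(\log s)$. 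That adaptive choice of $t$ is what absorbs an arbitrarily slowly decaying $\epsilon(m)$; the one-shot chasm has no analogous knob, and this is precisely why the paper does not use it here.
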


\begin{theorem}[Width-$2$ ABP]\label{thm-main1.02}
If we have poly($s2^n$)-time hitting-sets for size-$s$ arity-$n$ width-$2$ upper-triangular ABP, then for VP circuits we have a poly($sd$)-time arity reduction ($n\mapsto O(\log sd)$) that preserves nonzeroness (and proves an exponential lower bound).
\end{theorem}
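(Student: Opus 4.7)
The plan is to adapt the three-step blueprint used in the proofs of Thm.\ref{thm-main1} and Thm.\ref{thm-main1.01}, now tailored to the width-$2$ upper-triangular ABP model.

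First, I would set $m := \log s$. The hypothesized $\poly(s\cdot 2^n)$-time hitting-set, specialized to arity $n=m$, becomes a $\poly(s)$-time $(s^e,s^e)$-hsg $\mathbf f(y)$ (for some constant $e$) against the class $\calC$ of size-$s$, arity-$m$ width-$2$ upper-triangular ABPs. Applying Lem.\ref{lemma:PIT_to_lb} to $\mathbf f$ extracts a multi-$(2\delta{-}1)$-ic polynomial $q_m$ (for a constant $\delta$) of arity $m$ and total degree $\delta m$, computable in $\poly(s)=\delta^{O(m)}$ time, that is \emph{not} in $\calC$.

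Second, I would argue that $q_m$ admits no general arithmetic circuit of size $\delta^{o(m)}$. Supposing otherwise, the depth-reduction chain from the proof of Thm.\ref{thm-main1} (VP depth-reduction followed by Fischer's trick, Lem.\ref{lemma:fischer's_trick}) converts such a circuit into a $\Sigma\wedge^a\Sigma\Pi^b$ circuit of size $\delta^{o(m)}$ with $a,b=o(m)$. The crux is a simulation claim: any such tiny diagonal depth-$4$ circuit over $m$ variables can be embedded into a width-$2$ upper-triangular ABP of size $\leq s$. The $(1,2)$-entry of a product of upper-triangular $2\times 2$ matrices has the form $\sum_i \pi_i\, b_i\, \sigma_i$, where $\pi_i,\sigma_i$ are running prefix/suffix products of linear edge labels. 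To realize $\sum_i c_i f_i^a$ I plan to build, per term $i$, a gadget block that (i) computes the sparse $f_i$ (with $T=2^{m+b}=s^{1+o(1)}$ monomials) as a width-$2$ sub-ABP, (ii) iterates this gadget $a$ times along the "top" lane to accumulate $f_i^a$, and (iii) deposits $c_i f_i^a$ into the accumulator lane by a single $b$-edge transition, followed by an intermediate resetting matrix so that the next term may be added cleanly. Chaining $k\leq s$ such blocks realizes $\sum_i c_i f_i^a$, with per-block cost $O(aT)$ and total size $k\cdot aT\cdot\poly(m)<s$ in the tiny regime, contradicting $q_m\notin\calC$.

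Third, the hardness of $q_m$ against general circuits feeds into Lem.\ref{lemma:lb_to_PIT}, producing the promised $\delta^{O(\log sd)}=\poly(sd)$-time arity-reducing polynomial map ($n\mapsto O(\log sd)$) that preserves nonzeroness of every size-$s$, degree-$d$ VP circuit. For the lower-bound consequence, Cor.\ref{cor-hs-hard} (whose hypothesis $td=2^{O(m)}$ is met) makes $\{q_m\}_m$ an E-computable family of individual degree $O(1)$ and arithmetic complexity $2^{\Omega(m)}$; Lem.\ref{lem-class-sep} then yields the dichotomy, i.e., either E$\not\subseteq$\#P/poly or VNP contains a polynomial of arithmetic complexity $2^{\Omega(n)}$.

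The principal obstacle is the width-$2$ simulation step. Unlike the depth-$3$/$4$ models treated in Thm.\ref{thm-main1} and Thm.\ref{thm-main1.01}, the ABP's top-right has the highly restricted shape $\sum_i \pi_i b_i \sigma_i$ with \emph{shared} prefix and suffix products, so one must design per-block gadgets together with intermediate resetting matrices so that each summand $c_i f_i^a$ emerges from a single path while the accumulator survives across blocks. The tiny regime ($m=\log s$, $a,b=o(m)$, hence per-block cost $s^{1+o(1)}$) is precisely what keeps the chain within the size budget $s$, closing the argument.
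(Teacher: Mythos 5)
Your step two contains a genuine gap: the simulation claim---that a tiny $\Sigma\wedge^a\Sigma\Pi^b$ circuit can be embedded into a width-$2$ upper-triangular ABP of comparable size---is not a matter of gadget engineering; it is false in general. The $(1,2)$-entry of a product of upper-triangular $2\times 2$ matrices with affine entries has the form $\sum_i \pi_i b_i \sigma_i$, where $\pi_i$ and $\sigma_i$ are prefix/suffix segments of a \emph{single shared} product of linear forms. Every summand is therefore a product of linear forms, and consecutive summands overlap heavily. A power $f_i^a$ of a \emph{sparse} $f_i$ (degree $b>1$) is not a product of linear forms, so it does not even fit the shape of a single path, let alone of the chain. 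Your proposed ``resetting matrix'' cannot help: in a strictly upper-triangular width-$2$ product there is no spare lane to reset---the $(1,1)$ and $(2,2)$ entries are forced to be the running prefix/suffix products, and the $(2,1)$ entry is identically zero. Even restricting to depth-$3$ (sums of independent products of linear forms), the shared-prefix/suffix structure is an obstruction; width-$2$ ABPs with linear labels are known to be strictly weaker than formulas. This is precisely why the SSS09 reduction introduces a ``special multiple''.

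The paper's proof sidesteps simulation entirely. It invokes \cite[Thm.3]{SSS09}: a size-$s$ arity-$n$ depth-$3$ circuit $C$, multiplied by a suitable nonzero product $P$ of linear forms drawn from $C$, is computable by a $\mathrm{poly}(s)$-size arity-$n$ width-$2$ upper-triangular ABP. Hence a $\mathrm{poly}(s2^n)$-time hitting-set for width-$2$ ABPs hits $C\cdot P$, and any point where $C(a)P(a)\neq 0$ also has $C(a)\neq 0$; so the same set hits depth-$3$. This yields a $\mathrm{poly}(s2^n)$-time hitting-set for tiny depth-$3$, and one then simply invokes Thm.\ref{thm-main1.01} to produce the hard family $\{q_m\}$ and Lem.\ref{lemma:lb_to_PIT} to obtain the arity reduction. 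Your framing of steps one and three (hsg $\to$ annihilator via Lem.\ref{lemma:PIT_to_lb}, and Lem.\ref{lemma:lb_to_PIT} $+$ Cor.\ref{cor-hs-hard} $+$ Lem.\ref{lem-class-sep} at the end) is fine, but those are subsumed once the reduction to Thm.\ref{thm-main1.01} is made; the missing ingredient in your argument is replacing the impossible forward simulation by the SSS09 transformation.
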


Our method could also handle individual-degree $a'=O(1)$ (eg.~multilinear polynomials), but then we have to allow arity $\omega(\log s)$ (clearly, arity $O(\log s)$ trivializes the model \cite{BT88}). We state our result below in  parameterized complexity terms. (Proof in Sec.\ref{app-0}.)

\smallskip\noindent
{\bf Multilinear tiny depth-$3$.}
Given a constant $c>1$ and an arbitrary function $\mu'(a')=\Omega(a')$, we define the class $\calM_{\mu',c}$ , called multilinear tiny depth-$3$, containing $\Sigma\Pi\Sigma(n)$ circuits of size $\leq s$ that compute multilinear polynomials, and $\mu'(n/\log s) \;<$ $s^c$.

\begin{theorem}[Multilinear tiny depth-$3$]\label{thm-main1.03}
If we have poly-time hitting-sets for a multilinear tiny depth-$3$ model, then for VP circuits we have a poly($sd$)-time arity reduction ($n\mapsto O(\log sd)$) that preserves nonzeroness (and proves an exponential lower bound).
\end{theorem}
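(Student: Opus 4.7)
The plan is to mirror the three-step blueprint of Theorem~\ref{thm-main1}, adapted to the multilinear regime. Fix $\mu'$ and a constant $c>1$ so that the hypothesis supplies an $(s^e,s^e)$-hsg $\mathbf{f}(y)$ for $\calM_{\mu',c}$; set $a := \mu'^{-1}(s^c)=\omega(1)$ (chosen arbitrarily slowly growing by taking $\mu'$ fast-growing) and $m := a\log s$. The three steps will be: (i) extract a multilinear annihilator $q_m$ of the hsg; (ii) show $q_m$ is hard for VP via depth-reduction; (iii) feed $\{q_m\}$ into Lemma~\ref{lemma:lb_to_PIT} to get the arity-reducing polynomial-map for VP.

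For step (i), I would specialize Lemma~\ref{lemma:PIT_to_lb} with $\delta := 2$, forcing the candidate annihilator $q_m$ to be multilinear. The resulting linear system has $O(s^e m)$ equations against up to $2^m = s^a$ multilinear-monomial unknowns; since $a=\omega(1)$ is unbounded, a nontrivial solution exists and is computable in poly$(s)$-time, giving a nonzero multilinear $q_m$ with $q_m(\mathbf f)\neq 0$, hence $q_m\notin\calM_{\mu',c}$. For step (ii), suppose for contradiction $q_m$ has an arithmetic circuit of size $s_m = 2^{o(m)}$. I would apply the Valiant--Skyum--Berkowitz--Rackoff shallow-depth reduction (\cite{T15, Sap16}) and then cut at a multiplication layer at depth $t$ (exactly as in the proof of Theorem~\ref{thm-main1}) to reach a $\Sigma\Pi^{5^t}\Sigma\Pi^{m/2^t}$ representation of size $2^{o(m)}$. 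Crucially, because $q_m$ is multilinear, I would collapse the bottom $\Sigma\Pi^{m/2^t}$ layer into $\Sigma$ by expanding each sub-polynomial into its multilinear-monomial content (of which there are at most $\binom{m}{m/2^t}$) and distributing over the middle $\Pi$, yielding a $\Sigma\Pi\Sigma$ circuit for $q_m$. By tuning $t$ and exploiting that $a$ can be chosen arbitrarily slowly (equivalently, $\mu'$ very fast-growing), I aim to make the resulting $\Sigma\Pi\Sigma$ have size $\leq s$; the circuit then lies in $\calM_{\mu',c}$, contradicting (i), so $\{q_m\}$ must require arithmetic circuits of size $2^{\Omega(m)}$.

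For step (iii), apply Lemma~\ref{lemma:lb_to_PIT} with $\delta=2$ to the hard family $\{q_m\}$, producing the poly$(sd)$-time arity-reducing polynomial-map $n\mapsto O(\log sd)$ that preserves nonzeroness of VP circuits. The exponential-lower-bound consequence---either E$\not\subseteq$\#P/poly or VNP contains polynomials of arithmetic circuit complexity $2^{\Omega(n)}$---follows from Corollary~\ref{cor-hs-hard} combined with Lemma~\ref{lem-class-sep}, exactly parallel to the proof of Theorem~\ref{thm-main1.1}. The main obstacle is step (ii)'s parameter juggling: the bottom-flattening sparsity cost $\binom{m}{m/2^t}^{O(5^t)}$ and the top-expansion cost $s_m^{O(5^t)}$ must jointly fit within the size budget $s = 2^{m/a}$. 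Unlike the diagonal depth-$4$ setting of Theorem~\ref{thm-main1} (where Fischer's trick does the collapsing at cost $2^{O(5^t)}$), here I must rely on the multilinearity of $q_m$ to suppress the binomial sparsity factor from $\binom{m+m/2^t}{m/2^t}$ down to $\binom{m}{m/2^t}$. Finding the right cut depth $t$ in tandem with a sufficiently fast-growing $\mu'$---so that the combined blow-up is absorbed by the exponent $\log s = m/a$---is the technical crux, and is the place where the proof differs most from that of Theorem~\ref{thm-main1}.
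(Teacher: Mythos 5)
Your overall blueprint (annihilator $\rightarrow$ hardness via depth-reduction $\rightarrow$ Lemma~\ref{lemma:lb_to_PIT}) is the right one, but there is a fatal parameter misstep at the outset, and a secondary error in the depth-$3$ collapse.

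\textbf{The arity $m=a\log s$ kills step (ii) before it starts.} Lemma~\ref{lemma:PIT_to_lb} does not hand you an arbitrary multilinear annihilator: to keep the construction poly-time, it restricts to an index-set of $\delta_0 = dn\delta + 1 = O(s^e m)$ unknowns, so the resulting $q_m$ has at most $O(s^e m)$ nonzero monomials. With your choice $m=a\log s$ and $a=\omega(1)$, this is $\delta_0 = 2^{e\log s + \log m} = 2^{em/a + o(m)} = 2^{o(m)}$. Hence $q_m$ trivially admits a $2^{o(m)}$-size $\Sigma\Pi$ circuit by writing out its monomials, and the lower bound you are trying to prove in step (ii) (\emph{no} $2^{o(m)}$-size circuit) is simply false. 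No amount of depth-reduction cleverness can rescue this. The paper avoids the trap by calibrating the annihilator's arity to $m' = (e+1)\log s = \Theta(\log s)$, for which $\delta_0 = \Theta(s^e\log s) = 2^{\Theta(m')}$ is genuinely exponential in $m'$, and then pads to $q_m := q'_{m'}\cdot(x_{m'+1}\cdots x_m)$ with $m=(e+2)\log s$. The arity-$\omega(\log s)$ circuits in $\calM_{\mu',c}$ are exploited only via the hitting-set hypothesis, not by building an annihilator at that arity.

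\textbf{The ``expand bottom and distribute'' collapse does not produce a $\Sigma\Pi\Sigma$ circuit of usable size.} After the cut you have $\Sigma\Pi^{5^t}\Sigma\Pi^{m/2^t}$, and the bottom summands need not themselves be multilinear even though $q_m$ is, so the sparsity bound $\binom{m}{m/2^t}$ is already unjustified; but more fundamentally, fully expanding the bottom and distributing over the middle $\Pi$ yields a $\Sigma\Pi$ circuit (a monomial expansion), of size at least $\binom{m}{m/2^t}^{\Theta(5^t)}\cdot s_m^{O(5^t)} \approx 2^{\Omega(mt(5/2)^t)}$, which is never $\leq s = 2^{m/a}$ for $t\geq 1$ and $a=\omega(1)$. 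The paper instead invokes the duality transformation of \cite{Gupta13}: a size-$s'$ $\Sigma\Pi^a\Sigma\Pi^b$ circuit over characteristic $0$ becomes a $\Sigma\Pi\Sigma$ circuit of size $\mathrm{poly}(s')\cdot 2^{a+b}$. With $a=5^t$ and $b=m/2^t$ both $o(m)$ the blowup is only $2^{o(m)}$, and since $s=2^{\Theta(m)}$ (thanks to the correct calibration $m=(e+2)\log s$) the resulting $\Sigma\Pi\Sigma$ circuit fits in $\calM_{\mu',c}$ and contradicts the annihilator property. This $2^{a+b}$ bound is the engine of the contradiction and cannot be replaced by brute-force multilinear expansion.

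Steps (i) and (iii) would otherwise mirror the paper, and your derivation of the lower-bound consequence via Corollary~\ref{cor-hs-hard} and Lemma~\ref{lem-class-sep} is fine, but as written the proof does not go through.
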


\subsection{Arbitrarily small arity suffices}

Using the previous result we can now reduce the arity, for PIT purposes, arbitrarily.

\begin{proof}[Proof of Thm.\ref{thm-main1.2}]
Suppose we have a poly($s,\mu(n)$)-time hitting-set $\calH_{s,n}$ for size-$s$ arity-$n$ $\Sigma\Pi\Sigma\wedge$ circuits. Wlog we can assume that $\mu(n)= \Omega(n)$. Let $a=a(s)$ be a function satisfying $\mu(a)\le s$, and define $n=n(s):= a\log s$. Consider a size-$s$ $\Sigma\Pi\Sigma(n)$ circuit $C\ne0$ computing a multilinear polynomial. We intend to design a hitting-set for $C$.

Partition the variable set $\{x_1,\ldots,x_n\}$ into $a$ blocks $B_j, j\in[a]$, each of size $\log s$. Let $B_j= \{x_{u(j)+1}, x_{u(j)+2},\ldots, x_{u(j)+\log s}\}$, for all $j\in[a]$ (pick $u$ to be an appropriate function). Consider the arity-reducing ``local Kronecker'' map $\varphi: x_{u(j)+i} \mapsto y_j^{2^i}$. Note that $\varphi(C) \in \F[y_1,\ldots,y_a]$, and its semantic individual-degree is at most $2s$. 

It is easy to see that $\varphi(C)\ne0$ (basically, use the fact that $C$ computes a nonzero multilinear polynomial and $\varphi$ keeps the multilinear monomials distinct). Finally, $\varphi(C)$ becomes an arity-$a$ $\Sigma\Pi\Sigma\wedge$ circuit of size at most $s + s\cdot 2^{\log s}= O(s^2) $. 
Thus, using $\calH_{O(s^2), a}$ we get a hitting-set for $\varphi(C)$ of time-complexity poly($s^2,\mu(a)$)= poly($s$). In turn, we get a poly-time hitting-set for multilinear tiny depth-$3$ model $\calM_{\mu,2}$. By invoking Thm.\ref{thm-main1.03} we finish the argument.
\end{proof}

We can also work with a version of diagonal depth-$4$ with arbitrarily small $n$ \& $a$.

\begin{theorem}[Tinier $n,a$]\label{thm-main1.3}
Fix functions $\mu=\mu(s)$ and $\mu'$.
If we have poly($s, \mu(a), \mu'(n)$)-time hitting-set for size-$s$ $\Sigma\wedge^a\Sigma\Pi(n)$ circuits, then for VP circuits we have a poly($sd$)-time arity reduction ($n\mapsto O(\log sd)$) that preserves nonzeroness (and proves an exponential lower bound).

{\em (Proved in Sec.\ref{app-0}.)}
\end{theorem}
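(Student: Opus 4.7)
The plan is to adapt the reduction used in the proof of Theorem~\ref{thm-main1.2}: I will transform a hitting-set for $\Sigma\wedge^a\Sigma\Pi(n)$ into one for the tiny diagonal depth-$4$ model $\calT_{\mu,\mu',c}$, and then invoke Theorem~\ref{thm-main1} to obtain the VP arity reduction together with the lower-bound consequences (via Corollary~\ref{cor-hs-hard} and Lemma~\ref{lem-class-sep}).

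First I would fix auxiliary functions $\mu_0(a):=\max(\mu(a),a)$ (to ensure $\mu_0=\Omega(a)$), $\mu'_0(a'):=a'$, and a constant $c_0>1$, so that $\calT_{\mu_0,\mu'_0,c_0}$ is the class to which Theorem~\ref{thm-main1} will be applied. Given any $C\in\calT_{\mu_0,\mu'_0,c_0}$ of size $s$, arity $m_0=O(\log s)$, bottom fan-in $b_0=O(\log s)$, top exponent $a_0$, and semantic individual-degree $a'_0$ (both $a_0,a'_0=\omega(1)$, arbitrarily small), I would choose a target arity $n_\star=n_\star(s)$ that is an arbitrarily small unbounded function satisfying $\mu'(n_\star)\le s$. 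Partition $\{x_1,\ldots,x_{m_0}\}$ into $n_\star$ blocks of size $\lceil m_0/n_\star\rceil$ and apply a local Kronecker substitution $\varphi: x_{u(j)+i}\mapsto y_j^{D^{i-1}}$ with base $D:=2a'_0+1$. Because every monomial of $C$ has individual degree at most $a'_0<D$, the map $\varphi$ is injective on the monomial support of $C$, so $\varphi(C)\ne 0$. Moreover, since $\varphi$ acts monomial-wise on each bottom sparse polynomial $f_i$, the image $\varphi(C)=\sum_i c_i\,\varphi(f_i)^{a_0}$ retains the $\Sigma\wedge^{a_0}\Sigma\Pi(n_\star)$ structure, and its size stays $\poly(s)$ because each $f_i$ has at most $\binom{m_0+b_0}{b_0}=\poly(s)$ monomials, each of which becomes a single $y$-monomial (realised by a $\poly(s)$-size gadget via repeated squaring of the $y_j$'s). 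Feeding $\varphi(C)$ into the hypothesised hitting-set then runs in time $\poly(s,\mu(a_0),\mu'(n_\star))=\poly(s)$: the factor $\mu(a_0)$ is polynomial since membership in $\calT_{\mu_0,\mu'_0,c_0}$ forces $\mu(a_0)\le\mu_0(a_0)<s^{c_0}$, and the factor $\mu'(n_\star)$ is polynomial by our choice of $n_\star$. This yields a $\poly(s)$-time hitting-set for $\calT_{\mu_0,\mu'_0,c_0}$, and Theorem~\ref{thm-main1} then finishes the argument.

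The main obstacle will be the parameter bookkeeping: the three small parameters $a_0,a'_0,n_\star$ must each be $\omega(1)$ (else the model collapses to depth-$2$, where Ben-Or--Tiwari-type arguments trivialise PIT), yet also satisfy $\mu(a_0),\mu'(n_\star)\le s$ in order to keep the final hitting-set time polynomial. Such a balance is possible because $\mu,\mu'$ are unbounded, but verifying that the local Kronecker does not inflate the size beyond $\poly(s)$ critically uses the built-in hypothesis $b_0=O(\log s)$ of tiny diagonal depth-$4$. Apart from this bookkeeping, no fundamentally new technical ingredient is needed: the outer $\Sigma\wedge^a$ structure of the hypothesised model is already compatible with tiny diagonal depth-$4$, so unlike the proof of Theorem~\ref{thm-main1.2} no intermediate multilinear/depth-$3$ detour is required.
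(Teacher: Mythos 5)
Your proposed route — transform a hitting-set for $\Sigma\wedge^a\Sigma\Pi(n)$ into a poly-time hitting-set for a tiny diagonal depth-$4$ class via a local Kronecker map and then invoke Theorem~\ref{thm-main1} — is genuinely different from the paper's, which instead derives a hitting-set for \emph{multilinear} $\Sigma\wedge^a\Sigma\Pi(a'\log s)$ circuits (Kronecker with base $2$) and then re-runs the Thm.~\ref{thm-main1.03} argument. Your route is conceptually cleaner because it avoids the intermediate multilinear model, but as written it has a real gap in the size analysis.

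The problem is the choice $\mu'_0(a'):=a'$. In $\calT_{\mu_0,\mu'_0,c_0}$ the only constraint this imposes on the semantic individual degree is $a'_0<s^{c_0}$, so $a'_0$ can be \emph{polynomial} in $s$, not ``arbitrarily small'' as you later assume. Your Kronecker base is $D=2a'_0+1$ and the block size is $B=\lceil m_0/n_\star\rceil=\Theta(\log s/n_\star)$, so a single $y_j$-exponent in $\varphi(f_i)$ can be as large as $b_0\, D^{B-1}$. Since the paper's size convention counts a $\Pi$-gate of fan-in $E$ as size $\Theta(E)$ (this is exactly how the $s+s\cdot 2^{\log s}$ estimate in the paper's proof of Thm.~\ref{thm-main1.2} arises), the $\Sigma\wedge^{a_0}\Sigma\Pi(n_\star)$ circuit $\varphi(C)$ has size roughly $\mathrm{poly}(s)\cdot D^B = s^{\Omega(\log s/n_\star)}$ once $a'_0=s^{\Theta(1)}$, which is superpolynomial whenever $n_\star=o(\log s)$ — exactly the regime you need to keep $\mu'(n_\star)\le s$. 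Replacing the flat $\Pi$-gate by a ``repeated-squaring gadget'' does not help: that produces a deeper circuit which is no longer in $\Sigma\wedge^a\Sigma\Pi(n_\star)$, so the hypothesised hitting-set is not applicable to it. The constraint $b_0=O(\log s)$ that you flag is necessary but not sufficient; what is missing is a bound of the form $a'_0\le 2^{O(n_\star)}$.

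The gap is fixable within your framework by choosing $\mu'_0$ far faster-growing than the identity, e.g.\ $\mu'_0(a'):=2^{\mu'(\lceil\log a'\rceil)}$. Then $\mu'_0(a'_0)<s^{c_0}$ forces $\log a'_0<(\mu')^{-1}(c_0\log s)\le (\mu')^{-1}(s)=n_\star$, which gives $D^B\le 2^{O(\log a'_0\cdot\log s/n_\star)}\le 2^{O(\log s)}=\mathrm{poly}(s)$, and the argument goes through. As it stands, though, the proof does not establish the poly-time hitting-set for $\calT_{\mu_0,\mu'_0,c_0}$ that Theorem~\ref{thm-main1} requires, because circuits with $a'_0$ polynomial in $s$ are members of the class you defined but are not hit by your construction.
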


\vspace{-1mm}
\section{Low-cone concentration and hitting-sets-- Proof of Thm.\ref{thm-main2}}\label{sec-lcc}
\vspace{-1mm}

In this section we initiate a study of properties that are relevant for tiny circuits (or the log-arity regime). 

\begin{definition}[Cone of a monomial]
\label{def:cone}
A monomial $\mathbf{x^e}$ is called a \emph{submonomial} of $\mathbf{x^f}$, 
if $\mathbf e\leq \mathbf f$ (i.e.~coordinate-wise). We say that $\mathbf{x^e}$ is a \emph{proper submonomial} of $\mathbf{x^f}$, if $\mathbf e\leq \mathbf f$ and $\mathbf{e\neq f}$.

For a monomial $\mathbf{x^e}$, the \emph{cone of} $\mathbf{x^e}$ is the set of all submonomials of $\mathbf{x^e}$. The cardinality of  this set is called \emph{cone-size of} $\mathbf{x^e}$. It equals $\prod (\mathbf{e+1}):= \prod_{i\in[n]} (e_i+1)$, where $\mathbf e=(e_1,\ldots, e_n).$ 

A set $S$ of monomials is called \emph{cone-closed} if for every monomial in $S$ all its submonomials are also in $S$. 
\end{definition}

\begin{lemma}[Coef.~extraction]\label{lemma:coef_extraction}
Let $C$ be a circuit which computes an arity-$n$ degree-$d$ polynomial. Then for any monomial $m=\prod_{i\in[n]} x_i^{e_i}$, we have blackbox access to a $\mathrm{poly}(|C|d, \mathrm{cs}(m))$-size circuit computing the coefficient of $m$ in $C$, where $\mathrm{cs}(m)$ denotes the cone-size of $m$.  
\end{lemma}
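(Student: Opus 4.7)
The plan is to construct the coefficient-extracting circuit by a syntactic, bottom-up recursion on the structure of $C$. For each gate $v$ of $C$, computing a polynomial $P_v(\mathbf x)$, and for each submonomial $m'$ of $m$ (there are $\mathrm{cs}(m)$ of them), I would introduce a gate $G_{v,m'}$ whose intended semantics is $\mathrm{coef}_{m'}(P_v)$. The output of the new circuit is $G_{\mathrm{root},m}$, which by construction computes $\mathrm{coef}_m(C)$.

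The recursion is just the standard coefficient-arithmetic of polynomials. For a leaf (constant or variable), each $G_{v,m'}$ is a constant that can be read off in one step. For a sum gate $v = u+w$, I would set
\[
G_{v,m'} \;=\; G_{u,m'} + G_{w,m'}\,.
\]
For a product gate $v = u\cdot w$, I would use the convolution identity
\[
G_{v,m'} \;=\; \sum_{m_1 m_2 \,=\, m'} G_{u,m_1}\cdot G_{w,m_2}\,,
\]
where the sum ranges over factorizations of $m'$ as a product of two monomials. Every $m_1$ arising here divides $m'$, hence divides $m$, so the required $G_{u,m_1}$ and $G_{w,m_2}$ are already defined by induction and we never recurse on a monomial outside the fixed list of $\mathrm{cs}(m)$ submonomials. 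Thus the family $\{G_{v,m'}\}$ is well-defined and, being purely syntactic on the gates of $C$, can be produced in time polynomial in its own size, giving blackbox evaluation access immediately.

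The size bound is the main computation. At each product gate, the total number of arithmetic operations introduced across all $m'\mid m$ is $\sum_{m'\mid m}\mathrm{cs}(m') = \prod_i \binom{e_i+2}{2} \le \mathrm{cs}(m)^2$, while sum and leaf gates contribute only $O(\mathrm{cs}(m))$ each. Summing over the $|C|$ gates of $C$ gives total size $O(|C|\cdot \mathrm{cs}(m)^2)$, which fits inside the claimed $\mathrm{poly}(|C|\,d,\,\mathrm{cs}(m))$ bound (the $d$ factor is not even needed). I do not anticipate any serious obstacle; the only small detail to verify is the arithmetic identity $\sum_{m'\mid m}\mathrm{cs}(m')\le \mathrm{cs}(m)^2$ and the closure property that the convolution recursion never exits the chosen list of submonomials of $m$.
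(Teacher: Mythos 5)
Your recursion is a correct \emph{whitebox} construction, and your size bound $O(|C|\cdot\mathrm{cs}(m)^2)$ is right (in fact slightly cleaner than the paper's, as you avoid a factor of $d$). But you have missed the operative word in the statement: \emph{blackbox}. The lemma (and the paper makes this explicit in the introduction: ``we devise \emph{(in blackbox)} a circuit $C'$ which computes the coefficient of $m$ in $C$'') requires that $C'$ be built and evaluated using only oracle access to $C$; you are never allowed to look at the gates of $C$. This is essential for the application in Thm.\ref{thm-main2}, which is a \emph{blackbox} PIT result: there one must decide $C\neq 0$ by querying the polynomial, not by inspecting a circuit representation. Your approach -- ``by a syntactic, bottom-up recursion on the structure of $C$'' -- fundamentally requires peeking inside $C$, and so cannot be used there.

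The paper's proof instead works variable-by-variable with interpolation. For each $i$ it forms $C^{(i)} := \sum_{j=0}^{e_i} a_j\, C^{(i-1)}(x_1,\dots,\alpha_j x_i,\dots,x_n)$ where the $a_j$ are chosen (via an inverse Vandermonde) so that, in the $x_i$-expansion, the terms of $x_i$-degree below $e_i$ are annihilated. After $n$ rounds one has $C^{(n)} = \mathrm{coef}_m(C)\cdot m + (\text{junk supported on strict supermonomials of }m)$, built from $\prod_i(e_i+1) = \mathrm{cs}(m)$ scaled copies of $C$. A final interpolation in a fresh variable $t$ (substituting $x_i\mapsto x_i t$ and picking out the coefficient of $t^{\deg m}$, which costs a factor $O(d)$) eliminates the junk, and evaluating at $\mathbf 1$ gives the scalar $\mathrm{coef}_m(C)$. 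Every step is a linear combination of evaluations of $C$ at scaled inputs, hence genuinely blackbox. If you want to salvage your write-up, you need to replace the gate-by-gate convolution with this interpolation scheme; the per-variable cost is then $\mathrm{cs}(m)$ rather than $\mathrm{cs}(m)^2$, and the $d$ factor enters from the final $t$-interpolation.
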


\begin{proof}
Our proof is in two steps. First, we inductively build a  circuit computing a polynomial which has two parts; one is $\mathrm{coef}_m(C)\cdot m$ and the other one is a ``junk'' polynomial where every monomial is a proper super-monomial of $m$. Second, we construct a circuit which extracts the coefficient of $m$. In both these steps the key is a classic interpolation trick. 

We induct on the variables. For each $i\in[n]$, let $m_{[i]}$ denote $\prod_{j\in[i]} x_j^{e_j}$. We will construct a circuit $C^{(i)}$ which computes a polynomial of the form, 
\begin{equation}\label{eqn-Ci}
C^{(i)}(\mathbf x) \,=\, \mathrm{coef}_{m_{[i]}}(C)\cdot m_{[i]} \,+\, C_{junk}^{(i)} 
\end{equation}
where, for every monomial $m'$ in the support of $C_{junk}^{(i)}$, $m_{[i]}$ is a proper submonomial of $m'_{[i]}$. 

{\em Base case:} Since $C=: C^{(0)}$ computes an arity-$n$ degree-$d$ polynomial, $C(\mathbf x)$ can be written as $C(\mathbf x) \,=\, \sum_{j=0}^d c_jx_1^j$ where, $c_j\in \mathbb{F}[x_2,\ldots, x_n]$. Let $\alpha_0, \ldots, \alpha_{e_1}$ be some
$e_1+1$ distinct elements in $\mathbb F$. For every $\alpha_j$, let $C_{\alpha_jx_1}$ denote the circuit $C(\alpha_jx_1, x_2,\ldots, x_n)$ which computes 
$c_0+c_1\alpha_jx_1+\ldots +c_{e_1}\alpha_j^{e_1}x_1^{e_1}+\cdots+c_d\alpha_j^dx_1^d $ . 
Since $$
M=
\begin{bmatrix}
1      & \alpha_0      & \ldots &  \alpha_0^{e_1}\\
\vdots & \vdots        & \vdots &  \vdots \\
1      &  \alpha_{e_1} & \ldots &\alpha_{e_1}^{e_1}
\end{bmatrix}
$$
is an invertible Vandermonde matrix, one can find an $\mathbf{a}=[a_0,\ldots, a_{e_1}]\in\mathbb F^{e_1+1}$, $\mathbf a\cdot M=[\ 0,\ 0,\ \ldots ,\  1]$ . Using this $\mathbf{a}$, we get the circuit $C^{(1)} := \sum_{j=0}^{e_1}a_jC^{(0)}_{\alpha_jx_1}$ . Its least monomial wrt $x_1$ has $\deg_{x_1}\ge e_1$, which is the property that we wanted.

{\em Induction step $(i\rightarrow i+1)$:} From induction hypothesis, we have the circuit $C^{(i)}$ with the properties mentioned in Eqn.\ref{eqn-Ci}. The polynomial can also be written as $b_0 + b_1x_{i+1}+\ldots+b_{e_{i+1}}x_{i+1}^{e_{i+1}}+\ldots b_dx_{i+1}^d$ , where every $b_j$ is in $\mathbb F[x_1,\ldots, x_{i}, x_{i+2}, \ldots, x_n]$. Like the proof of the base case, for $e_{i+1}+1$ distinct elements $\alpha_0,\ldots, \alpha_{e_{i+1}}\in \mathbb F$, we get 
$ C^{(i+1)} \,=\, \sum_{j=0}^{e_{i+1}} a_j C^{(i)}_{\alpha_jx_{i+1}}$, for some $\mathbf a=[a_0, \ldots, a_{e_{i+1}}]\in \mathbb F^{e_{i+1}+1}$ and  the structural constraint of $C^{(i+1)}$ is easy to verify, completing the induction.

Now we describe the second step of the proof. After first step, we get 
$$C^{(n)}(\mathbf x) \,=\, \mathrm{coef}_{m}(C)\cdot m \,+\, C_{junk}^{(n)} \,,$$ 
where for every monomial $m'$ in the support of $C_{junk}^{(n)}$ , $m$ 
is a proper submonomial of $m'$. Consider the polynomial $C^{(n)}(x_1t, \ldots, x_nt)$ for a fresh variable $t$. Then, using interpolation wrt $t$ we can construct a $O(|C^{(n)}|\cdot d)$-size circuit for $\mathrm{coef}_m(C)\cdot m$, by extracting the coefficient of $t^{\deg(m)}$, since the degree of every monomial appearing in $C^{(n)}_{junk}$ is $>\deg(m)$. Now evaluating at $\mathbf 1$, we get $\mathrm{coef}_m(C)$. The size, or time, constraint of the final circuit clearly depends polynomially on $|C|, d$ and $\mathrm{cs}(m)$.
\end{proof}

But, how many low-cone monomials can there be? Fortunately, in the log-arity regime they are not too many \cite{S13}. Though, in general, they are quasipolynomially many.

\begin{lemma}[Counting low-cones]
\label{lemma:cs_monomials}
The number of arity-$n$ monomials with cone-size at most $k$ is $O(sk^2)$, where 
$s := {\left( 3n/\log k\right)}^{\log k}.$ 
\end{lemma}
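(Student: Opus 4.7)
The plan is to stratify monomials by support size, bound the number of ``cone profiles'' per support size via a dyadic discretization, and then sum using Vandermonde plus Stirling.

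\emph{Key observation.} If $\mathbf{x^e}$ has cone-size $\prod_i(e_i{+}1)\le k$, then every variable in its support contributes a factor $\ge 2$ to the product, so the support size $t:=|\{i : e_i\ge 1\}|$ satisfies $2^t\le k$, i.e.\ $t\le\log k$. Grouping monomials by their support set gives
\[
\#\Bigl\{\mathbf e\in\Z_{\ge 0}^n : {\textstyle\prod}_i(e_i{+}1)\le k\Bigr\} \;=\; \sum_{t=0}^{\lfloor\log k\rfloor}\binom{n}{t}\,T(t,k),\qquad T(t,k):=\Bigl|\Bigl\{\mathbf f\in\Z_{\ge 2}^t : {\textstyle\prod}_i f_i\le k\Bigr\}\Bigr|.
\]

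\emph{Main combinatorial step.} I bound $T(t,k)\le k\binom{\log k}{t}$. Set $g_i:=\lfloor\log_2 f_i\rfloor\ge 1$; then $\sum_i g_i\le\log k$, and for each fixed $g$-tuple the number of compatible $\mathbf f$ is at most $\prod_i(2^{g_i+1}-2^{g_i})=2^{\sum_i g_i}\le k$. The number of positive-integer compositions of $s\le\log k$ into $t$ parts is $\binom{s-1}{t-1}$, so the hockey-stick identity gives
\[
T(t,k)\;\le\;k\sum_{s=t}^{\lfloor\log k\rfloor}\binom{s-1}{t-1}\;=\;k\binom{\log k}{t}.
\]

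\emph{Combining.} By Vandermonde's identity $\sum_t\binom{n}{t}\binom{\log k}{t} = \binom{n+\log k}{\log k}$, the total count is at most $k\binom{n+\log k}{\log k}$. Stirling gives $\binom{n+\log k}{\log k}\le\bigl(e(n+\log k)/\log k\bigr)^{\log k}$; in the relevant regime $n\ge\log k$ this is at most $(2en/\log k)^{\log k}\le k\cdot (3n/\log k)^{\log k}=ks$, since $(2e/3)^{\log k}\le 2^{\log k}=k$. The overall bound is therefore $k\cdot ks=O(sk^2)$, as claimed.

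\emph{Main obstacle.} The dyadic-discretization bound on $T(t,k)$: the multiplicative constraint $\prod f_i\le k$ must be converted into an additive constraint on the bit-lengths $g_i$ without losing more than a single factor of $k$ overall. Once that is in place, Vandermonde and Stirling are routine, and the small-$n$ regime (where $n<\log k$) can be handled by the trivial bound $\#\{\cdot\}\le k^n$ absorbed into the $O(\cdot)$.
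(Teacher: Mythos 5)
Your argument is correct, and it diverges from the paper in the per-support-set count. The paper also stratifies by support set, but for a fixed support of size $\ell$ it bounds the number of cone-size-$\le k$ exponent vectors by $k^2$ via a self-similar recursion: letting $T(k,\ell)$ be the count, peel off the last coordinate $e_\ell+1=i\in[2,k]$ to get $T(k,\ell)\le\sum_{i=2}^k T(k/i,\ell-1)$, and show $T(k,\ell)<k^2$ by the telescoping estimate $\sum_{i\ge 2}(k/i)^2<k^2\sum_{i\ge 2}\bigl(\tfrac1{i-1}-\tfrac1i\bigr)<k^2$; it then simply multiplies by $\sum_{i\le\lfloor\log k\rfloor}\binom{n}{i}\le(3n/\log k)^{\log k}$. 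You instead replace the recursion with a dyadic discretization of the exponents, getting the sharper per-support bound $T(t,k)\le k\binom{\log k}{t}$, and then, rather than multiplying two uniform bounds, you interleave the support-size sum with this $t$-dependent factor via Vandermonde to land on $k\binom{n+\log k}{\log k}$ before applying Stirling. The paper's recursion is terser and yields the clean uniform bound $k^2$ in one stroke; your route is more elementary (no inductive telescoping) and is in fact slightly tighter pointwise, since $\binom{\log k}{t}\le 2^{\log k}=k$ with equality only near $t\approx\tfrac12\log k$, though the gain is dissipated once Stirling reintroduces a factor of $k$. Both proofs share the same first observation (support size $\le\log k$) and the same final binomial estimate, and both silently require $n\gtrsim\log k$ for the displayed closed form to dominate; the paper does not spell out the small-$n$ regime either, so your note that it is absorbed into the $O(\cdot)$ is a fair parity with the original.
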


\begin{proof}
First, we prove that for any fixed support set, the number of 
cone-size $\leq k$ monomials is less than $k^2$. Next, we multiply 
by the number of possible support sets to get the estimate.
 
Let $T(k,\ell)$ denote the number of cone-size$\leq k$ monomials $m$ with support 
set, say, exactly $\{x_1,\ldots,x_\ell\}$. Since the exponent of $x_{\ell}$ in such an $m$ is at least $1$ and at most $k-1$, we have the following by the disjoint-sum rule: $T(k,\ell)\leq$ $\sum_{i=2}^{k}T\left( k/i, \ell-1 \right)$. This recurrence affords an easy inductive proof as, $T(k,\ell) <$ $\sum_{i=2}^k (k/i)^2 <$ $k^2\cdot \sum_{i=2}^k \left(\frac{1}{i-1}-\frac{1}{i}\right)$ $< k^2$.

From the definition of cone, a cone-size $\leq k$ monomial can have support size at most $\ell:=\floor{\log k}$. The number of possible support sets, thus, is 
$\sum_{i=0}^{\ell} \binom{n}{i}$. Using the binomial estimates \cite[Chap.1]{Jukna}, we get $\sum_{i=0}^{\ell} \binom{n}{i}\leq$ $\left(3n/\ell\right)^\ell$. 
\end{proof}

The partial derivative space of arithmetic circuits has been defined, and mined, in various works \cite{CKW11}. Even when this space is small we do not have efficient hitting-sets known (though \cite{FSS14} gave an $s^{O(\log\log s)}$-time hitting-set.). Below we give a poly-time solution in the log-arity regime.
(It requires $\ch\;\F=0$ or large.)

\begin{proof}[Proof of Thm.\ref{thm-main2}]
The proof has two steps. First, we show that with respect to any 
monomial ordering $\prec$, for all nonzero $C\in \mathcal C$, the dimension of the partial derivative space of $C$ is lower bounded by the cone-size of the leading monomial (that nontrivially occurs) in $C$. Using this, we can get a blackbox PIT algorithm for $\mathcal C$ by testing the coefficients of all the monomials of $C$ of cone-size $\leq k$ for zeroness. Next, we estimate the time complexity to do this. 

The first part is the same as the proof of \cite[Cor.8.4.14]{Forbes} (with origins in \cite{FS13}). Here, we give a brief outline. Let $LM(\cdot)$ be the {\em leading monomial} operator wrt the monomial ordering $\prec$. It can be shown that for any polynomial $f(\mathbf x)$, the dimension of its partial derivative space $\partial_{\mathbf x^{<\infty}}(f)$ is the same as  
$D \,:=\, \#\left\lbrace LM(g) \,\mid\, g\in \partial_{\mathbf x^{<\infty}}(f)
\right\rbrace$ (see \cite[Lem.8.4.12]{Forbes}).
This means that $\dim \partial_{\mathbf x^{<\infty}}(f)$ is lower-bounded by the cone-size of $LM(f)$ \cite[Cor.8.4.13]{Forbes}, which completes the proof 
of our first part. 

Next, we apply Lem.\ref{lemma:coef_extraction}, on the circuit $C$ and a monomial $m$ of cone-size $\leq k$, to get the coefficient of $m$ in $C$ in $\mathrm{poly}(sdk)$-time. Finally, Lem.\ref{lemma:cs_monomials} tells that we have to access at most $k^2\cdot \left(3n/\log k\right)^{\log k}$ many monomials $m$. Multiplying these two expressions gives us the time bound. 
\end{proof}

This gives us immediately,

\begin{corollary}\label{cor:polyPIT_using_cs_for_tinyckt}
Let $\mathcal C$ be a set of arity-$n$ degree-$d$ size-$s$ circuits with 
$n=O(\log sd)$. Suppose that, for all $C\in \mathcal C$, the dimension of the partial derivative space of $C$ is poly$(sd)$. Then, the blackbox PIT for $\mathcal C$ can be solved in poly$(sd)$-time.
\end{corollary}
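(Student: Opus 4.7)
The plan is to derive this corollary as an immediate specialization of Theorem \ref{thm-main2} by plugging in the tiny-arity parameter regime. First I would apply Theorem \ref{thm-main2} with the hypothesized bound $k = \mathrm{poly}(sd)$ on the dimension of the partial derivative space, obtaining a blackbox PIT algorithm of running time $(sdk)^{O(1)} \cdot (3n/\log k)^{O(\log k)}$. The task then reduces to a numerical simplification: show that both factors are $\mathrm{poly}(sd)$ under the hypotheses $n = O(\log sd)$ and $k = \mathrm{poly}(sd)$.

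The first factor $(sdk)^{O(1)}$ is immediately $\mathrm{poly}(sd)$ since $k$ itself is polynomial in $sd$. For the second factor I would split into two regimes of $k$. In the generic regime where $k$ grows at least like a constant power of $sd$, we have $\log k = \Theta(\log sd)$, so $3n/\log k = O(1)$; raising this constant to the power $O(\log k) = O(\log sd)$ yields $2^{O(\log sd)} = \mathrm{poly}(sd)$. In the degenerate regime where $k$ is much smaller (say $k = O(1)$), $\log k$ is bounded and $(3n/\log k)^{O(\log k)}$ is merely $\mathrm{poly}(n) = \mathrm{poly}(\log sd)$, which is again $\mathrm{poly}(sd)$. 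Either way the product of the two factors is $\mathrm{poly}(sd)$, giving the corollary.

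I do not expect any substantive obstacle here, since all the conceptual content—the cone-size coefficient extraction from Lem.\ref{lemma:coef_extraction}, the leading-monomial lower bound on $\dim \partial_{\mathbf x^{<\infty}}(f)$ via cone-size, and the counting estimate from Lem.\ref{lemma:cs_monomials}—has already been packaged into Theorem \ref{thm-main2}. The only point to keep in mind is that the estimate $(3n/\log k)^{O(\log k)}$ behaves well across both regimes of $k$, which is handled by the dichotomy above.
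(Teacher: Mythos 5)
Your overall approach is exactly what the paper does: it treats the corollary as an immediate specialization of Theorem \ref{thm-main2} (the paper's ``proof'' is literally the phrase ``This gives us immediately''). The first factor $(sdk)^{O(1)}$ is handled correctly.

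However, your dichotomy on $k$ has a real hole: you handle $k \ge (sd)^{\epsilon}$ (constant power) and $k = O(1)$, but not the intermediate range, e.g.\ $k = \mathrm{polylog}(sd)$ or $k = 2^{\sqrt{\log sd}}$, which are neither. Worse, this intermediate range is precisely where the second factor is largest. Writing $u := \log k$, the quantity $(3n/u)^u = \exp\bigl(u(\ln 3n - \ln u)\bigr)$ is maximized over $u > 0$ at $u = 3n/e$, where it equals $e^{3n/e}$; the two endpoints you analyze ($u$ near $0$ and $u = \Theta(\log sd)$) are where the function is \emph{small}, so bounding only those two regimes does not bound the whole range. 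The clean fix is to skip the dichotomy entirely: since $n = O(\log sd)$, we have
$$
\Bigl(\tfrac{3n}{\log k}\Bigr)^{O(\log k)} \;\le\; \bigl(e^{3n/e}\bigr)^{O(1)} \;=\; e^{O(n)} \;=\; e^{O(\log sd)} \;=\; \mathrm{poly}(sd),
$$
uniformly in $k$. Combined with $(sdk)^{O(1)} = \mathrm{poly}(sd)$, this gives the claimed poly$(sd)$ running time. So the idea and the conclusion are right, but the case analysis as written would not close without this additional observation.
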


A \emph{depth-$3$ diagonal circuit} \cite{Sax08} is of the form $C(\mathbf x)=$ $\sum_{i\in[k]} c_i \ell_i^{d_i}$, where $\ell_i$'s are linear polynomials over $\mathbb F$ and $c_i$'s in $\F$. We use $\rk(C)$ to denote the linear rank of the polynomials $\{\ell_i\}_i$.

\begin{theorem}\label{thm:polyPIT_for_tiny_depth-3_diag_ckt}
Let $\mathcal C$ be the set of all arity-$n$ degree-$d$ size-$s$ depth-$3$ diagonal circuits. Suppose that, for all $C\in \mathcal C$, $\rk(C)=O(\log sd)$. Then, the blackbox PIT for $\mathcal C$ can be solved in poly$(sd)$-time.

{\em (Proved in Sec.\ref{app-1})}
\end{theorem}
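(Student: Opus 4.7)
The plan is to combine Theorem \ref{thm-main2} with a rank-preserving arity reduction. For any $C = \sum_{i \in [k]} c_i \ell_i^{d_i} \in \mathcal{C}$, every partial derivative $\partial^{\mathbf{e}}(\ell_i^{d_i})$ is a scalar multiple of $\ell_i^{d_i - |\mathbf{e}|}$, so $\partial_{\mathbf{x}^{<\infty}}(C)$ lies in the $\F$-span of $\{\ell_i^j : i \in [k],\ 0 \leq j \leq d_i\}$, giving $\dim \partial(C) \leq \sum_i (d_i+1) \leq sd$. Were the arity already $O(\log sd)$, Corollary \ref{cor:polyPIT_using_cs_for_tinyckt} would finish the job; the catch is that $n$ could be as large as $\mathrm{poly}(sd)$, whereas only the linear rank $r := \rk(C) = O(\log sd)$ is small.

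To remove this catch, I would invoke a deterministic \emph{rank condenser}: a matrix $M(t) \in \F[t]^{n \times (r+1)}$ of $\mathrm{poly}(n)$ degree such that for every rank-$r$ matrix $L \in \F^{r \times n}$ the product $L M(t)$ has rank $r$ over $\F(t)$. Such a condenser can be built in $\mathrm{poly}(n, r)$ time by standard techniques (see e.g.\ Forbes' thesis \cite{Forbes}). Setting $G(\mathbf{y}, t) := M(t)\mathbf{y}$ and $\tilde{C}(\mathbf{y}, t) := C(G(\mathbf{y}, t)) = \sum_i c_i \tilde{\ell}_i^{d_i}$, each $\tilde{\ell}_i := \ell_i \circ G$ is linear in $\mathbf{y}$ with coefficients in $\F[t]$, and the family $\{\tilde{\ell}_i\}$ has rank exactly $r$ over $\F(t)$. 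Hence $\tilde{C} \neq 0$ iff $C \neq 0$, and the same span argument yields $\dim_{\F(t)} \partial_{\mathbf{y}}(\tilde{C}) \leq sd$.

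Next I would invoke Theorem \ref{thm-main2} on $\tilde{C}$ regarded as an $(r+1)$-variate polynomial over the enlarged base field $\F(t)$, with $k = sd$. Since $r+1 = O(\log sd)$, the term $\bigl(3(r+1)/\log sd\bigr)^{O(\log sd)}$ is just $(sd)^{O(1)}$, so the running time collapses to $\mathrm{poly}(sd)$. Operationally, one enumerates the $\mathrm{poly}(sd)$ monomials in $\mathbf{y}$ of cone-size $\leq sd$ (Lemma \ref{lemma:cs_monomials}) and uses Lemma \ref{lemma:coef_extraction} to build a blackbox circuit over $\F(t)$ computing each coefficient; testing such a coefficient for zeroness reduces to zero-testing a univariate in $t$ of degree $\mathrm{poly}(sd)$, which a further $\mathrm{poly}(sd)$-size univariate hitting set handles. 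The final hitting set for $C$ is $\{G(\mathbf{y}_0, t_0)\} \subseteq \F^n$ as $\mathbf{y}_0$ and $t_0$ range over their respective hitting sets, of total size $\mathrm{poly}(sd)$.

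The main obstacle is producing the explicit rank condenser with the right parameters. Although this is by now classical PIT machinery, it is the conceptual crux: a direct application of Theorem \ref{thm-main2} to $C$ in its original $n$ variables only yields $\bigl(3n/\log sd\bigr)^{O(\log sd)}$-time, which is merely quasi-polynomial for $n = \mathrm{poly}(sd)$. The rank condenser is precisely what leverages the rank hypothesis to collapse this back to a genuine polynomial bound.
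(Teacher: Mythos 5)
Your argument is correct and matches the paper's proof essentially step for step: both reduce the arity from $n$ down to $\Theta(\rk(C))=O(\log sd)$ via a rank-preserving linear substitution (the paper cites the explicit construction from \cite[Sec.3.2]{SS12}, whereas you use a generic rank condenser $M(t)$ over $\F(t)$, but these are interchangeable devices), observe that the partial-derivative dimension of a diagonal depth-$3$ circuit is at most $\sum_i(d_i+1)\le sd$, and then invoke Cor.\ref{cor:polyPIT_using_cs_for_tinyckt} (via Thm.\ref{thm-main2}) in the resulting log-arity regime. The only presentational difference is that you carry the auxiliary formal variable $t$ through and pay one extra univariate interpolation at the end, while the paper compresses the rank condensing into a single change of variables; both yield a poly$(sd)$-size hitting set.
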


\vspace{-1mm}
\section{Cone-closed basis after shifting-- Proof of Thm.\ref{thm-main3}}\label{sec-ccb}
\vspace{-1mm}

In this section we will consider polynomials over a vector space, say $\F^k$. This viewpoint has been useful in studying arithmetic branching programs (ABP), eg.~\cite{Agrawal13, FSS14, AGKS15, GKST16}. Let $D\in \F^k[\mathbf x]$ and let $\lrsp(D)$ be the span of its coefficients. We say that $D$ has a {\em cone-closed basis} if there is a cone-closed set of monomials $B$ whose coefficients in $D$ form a basis of $\lrsp(D)$.

This definition is motivated by the fact that there are some models which have this property naturally, for eg.~see Lem.\ref{lemma:cone-closed_basis_for_diagonal_circuit}. In general, this concept subsumes some of the well-known notions of {\em rank concentration} \cite{Agrawal13, FSS14, F15, Forbes}, i.e.~ensuring a basis of $\lrsp(D)$ in a set of monomials that have a small measure in some sense (eg.~cone-size or support-size.). 

\begin{lemma}\label{lemma:cone-closed_cs}
Let $D(\mathbf x)$ be a polynomial in $\mathbb F^k[\mathbf x]$. Suppose that $D(\mathbf x)$ has a cone-closed basis. Then, $D(\mathbf x)$ has $(k+1)$-cone concentration and $(\lg 2k)$-support concentration. 
\end{lemma}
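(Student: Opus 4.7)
The plan is to extract both conclusions directly from two elementary observations about the cone-closed basis $B$: it must be small (since $\mathrm{sp}(D)\subseteq\F^k$ forces $|B|\le k$), and its cone-closed structure forces every monomial in $B$ to have a small cone, which in turn forces small support.

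First I would observe that the coefficient vectors $\{\mathrm{coef}_m(D) : m\in B\}$ live inside $\F^k$ and are linearly independent, so $|B|\le k$. Next, fix any $m\in B$. By Defn.~\ref{def:cone}, the cone of $m$ is the set of its submonomials, and by the cone-closed hypothesis on $B$, every submonomial of $m$ lies in $B$. Hence $\mathrm{cone}(m)\subseteq B$, giving
\[
\mathrm{cs}(m)=|\mathrm{cone}(m)|\le |B|\le k.
\]
Since the coefficients of $B$ already span $\mathrm{sp}(D)$ and each monomial in $B$ has cone-size $\le k<k+1$, this is exactly $(k+1)$-cone concentration.

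For the support statement, I would use the explicit cone-size formula from Defn.~\ref{def:cone}: for $m=\prod x_i^{e_i}$ we have $\mathrm{cs}(m)=\prod(e_i+1)\ge 2^{|\mathrm{supp}(m)|}$, because each variable actually appearing in $m$ contributes a factor $\ge 2$. Combined with the bound $\mathrm{cs}(m)\le k$ from the previous paragraph, this yields $|\mathrm{supp}(m)|\le \log_2 k<\log_2(2k)$ for every $m\in B$. So the same basis $B$ witnesses $(\lg 2k)$-support concentration.

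There is no real obstacle here; the only thing to be slightly careful about is the off-by-one in the two concentration parameters, and that is handled by the strict inequalities $\mathrm{cs}(m)\le k<k+1$ and $|\mathrm{supp}(m)|\le \log_2 k<\log_2(2k)$. The whole argument is a bookkeeping consequence of cone-closure plus the dimension bound, and does not require anything beyond Defn.~\ref{def:cone}.
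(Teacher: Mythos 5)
Your proof is correct and follows exactly the paper's own argument: bound $|B|\le k$ by linear independence in $\F^k$, use cone-closure to get $\mathrm{cone}(m)\subseteq B$ and hence $\mathrm{cs}(m)\le k$, and then $\mathrm{cs}(m)\ge 2^{|\mathrm{supp}(m)|}$ to get the support bound. You merely spell out the two implications ("thus each $m\in B$ has cone-size $\le k$" and "each $m\in B$ has support-size $\le\lg k$") that the paper leaves implicit.
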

\begin{proof}
Let $B$ be a cone-closed set of monomials forming the basis of $\lrsp(D)$. Clearly, $|B|\le k$. Thus, each $m\in B$ has cone-size $\le k$. In other words, $D$ is $(k+1)$-cone concentrated.

Moreover, each $m\in B$ has support-size $\le\lg k$. In other words, $D$ is $(\lg 2k)$-support concentrated.
\end{proof}

Ideally, we would want to modify a given tiny circuit to get a cone-closed basis. This would solve the PIT problem as shown in the previous section. What are the possible ways to get this? We will show that the concept of basis isolating weight assignment, introduced in \cite{AGKS15}, leads to a cone-closed basis.

\smallskip\noindent {\bf Basis \& weights.}
Consider a weight assignment $\mathbf w$ on the variables $\mathbf x$. It extends to monomials $m=\mathbf{x^e}$ as $\mathbf w(m):= \langle \mathbf {e, w} \rangle = \sum_{i=1}^n e_iw_i$. Sometimes, we also use $\mathbf w(\mathbf e)$ to 
denote $\mathbf w(m)$. Similarly, for a set of monomials $B$, the weight of $B$ is $\mathbf w(B) :=$ $\sum_{m\in B}\mathbf w(m)$. 

Let $B=\{m_1,\ldots, m_{\ell}\}$ resp.~$B'=\{m_1',\ldots, m_{\ell}'\}$ be an ordered set of monomials (non-decreasing wrt $\mathbf w$) that forms a basis of the span of coefficients of $f\in\mathbb F^k[\mathbf x]$. Wrt $\mathbf w$, we say that $B< B'$ if there exists $i\in[\ell]$ such that $\forall j<i,\ \mathbf w(m_j)=\mathbf w(m_j')$ but $\mathbf w(m_i)<\mathbf w(m_i')$. We say that $B\leq B'$ if either $B< B'$ or if $\forall i\in[\ell],\ \mathbf w(m_i)\leq\mathbf w(m_i')$. A basis $B$ is called a \emph{least basis}, if for any other basis $B'$, $B\leq B'$. When is it unique?

A weight assignment $\mathbf w$ is called a \emph{basis isolating weight assignment} for a polynomial $f(\mathbf x)\in\mathbb F^k[\mathbf x]$ if there exists a basis $B$ such that: 
\begin{enumerate}
\item weights of all monomials in $B$ are distinct, and
\item the coefficient of every $m\in\mathrm{supp}(f)\setminus B$ is in the linear span of $\{ \text{coef}_{m'}(f) \;\vert\; m'\in B$, $\mathbf w(m') <$ $\mathbf w(m) \}$.
\end{enumerate}

\begin{lemma}\label{lem-uniq-B}
If $\mathbf w$ is a basis isolating weight assignment for $f$, then $f$ has a unique least basis $B$ wrt $\mathbf w$. 
In particular, for any other basis $B'$ of $f$, we have $\mathbf w(B)<\mathbf w(B')$.
\end{lemma}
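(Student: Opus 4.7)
\textbf{Proof plan for Lemma \ref{lem-uniq-B}.}

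Let $B = \{m_1, \ldots, m_\ell\}$ be the basis promised by the isolating property, ordered so that $\mathbf w(m_1) < \mathbf w(m_2) < \cdots < \mathbf w(m_\ell)$ (strict, by the distinct-weights clause). Let $B' = \{m_1', \ldots, m_\ell'\}$ be any basis of $\lrsp(f)$, also sorted non-decreasingly in weight. The plan is to show that $\mathbf w(m_i) \le \mathbf w(m_i')$ for every $i$, with strict inequality somewhere whenever $B' \ne B$; both parts of the lemma follow immediately.

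The core step is the pointwise bound $\mathbf w(m_i) \le \mathbf w(m_i')$. Suppose, for contradiction, that $\mathbf w(m_i') < \mathbf w(m_i)$ for some $i$. Then every $m_j'$ with $j \le i$ satisfies $\mathbf w(m_j') \le \mathbf w(m_i') < \mathbf w(m_i)$. I analyze each such $m_j'$ in two cases. If $m_j' \in B$, say $m_j' = m_r$, then $\mathbf w(m_r) < \mathbf w(m_i)$ forces $r < i$ since weights on $B$ are strictly increasing, so $\text{coef}_{m_j'}(f) \in \text{span}\{\text{coef}_{m_1}(f), \ldots, \text{coef}_{m_{i-1}}(f)\}$. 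If $m_j' \notin B$, the basis-isolating condition places $\text{coef}_{m_j'}(f)$ in the span of $\{\text{coef}_{m'}(f) : m' \in B,\ \mathbf w(m') < \mathbf w(m_j')\}$, and any such $m'$ has weight strictly less than $\mathbf w(m_i)$, hence belongs to $\{m_1, \ldots, m_{i-1}\}$. Either way, $\text{coef}_{m_1'}(f), \ldots, \text{coef}_{m_i'}(f)$ all lie in a space of dimension at most $i-1$, contradicting their linear independence in the basis $B'$. This yields $B \le B'$, so $B$ is a least basis.

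For uniqueness, suppose $B' \ne B$ and let $i$ be minimal with $m_i' \ne m_i$. I claim $\mathbf w(m_i') > \mathbf w(m_i)$. Indeed, we already know $\mathbf w(m_i') \ge \mathbf w(m_i)$, and if equality held, then $m_i' \notin B$ (else distinct weights on $B$ would force $m_i' = m_i$), so the isolating property gives $\text{coef}_{m_i'}(f) \in \text{span}\{\text{coef}_{m'}(f) : m' \in B,\ \mathbf w(m') < \mathbf w(m_i)\} = \text{span}\{\text{coef}_{m_1}(f), \ldots, \text{coef}_{m_{i-1}}(f)\} = \text{span}\{\text{coef}_{m_1'}(f), \ldots, \text{coef}_{m_{i-1}'}(f)\}$, contradicting the independence of the first $i$ vectors of $B'$. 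Hence $B < B'$ strictly in the lex order, proving uniqueness of the least basis; and since we have $\mathbf w(m_j) \le \mathbf w(m_j')$ for all $j$ with a strict inequality at $j = i$, summing gives the ``in particular'' statement $\mathbf w(B) < \mathbf w(B')$.

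The only subtlety — and the main place where care is needed — is the two-case analysis (whether $m_j' \in B$ or not) in the dimension-counting step; once that is set up, everything else is bookkeeping on the ordering. No additional tools beyond the definition of basis isolation and linear independence are required.
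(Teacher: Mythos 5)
Your proof is correct and takes essentially the same approach as the paper: the key ingredients in both are a dimension-counting argument to show $\mathbf w(m_i) \le \mathbf w(m_i')$ for every $i$, and then the basis-isolating property to force strict inequality at the first index where $B$ and $B'$ differ. The paper happens to run these two steps in the opposite order (first the strict inequality at the first disagreement, then the pointwise bound via dimension counting), but the underlying ideas and case analysis are identical.
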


\begin{proof}
Let $\ell$ be the dimension of $\lrsp(f)$. Since $\mathbf w$ is a basis isolating weight assignment, we get a basis $B$ that  satisfies the two conditions in the definition of $\mathbf w$. We will show that $B$ is the unique least basis. Let $B=\{m_1,\ldots, m_{\ell}\}$ with $\mathbf w(m_1)<\ldots<\mathbf w(m_{\ell})$. 

Consider any other basis $B'=\{m_1',\ldots, m_{\ell}'\},$ with $\mathbf w(m_1')\leq\ldots\leq\mathbf w(m_{\ell}')$. Let $j$ be the minimum number such that $m_j\neq m_j'$ (it exists as $B\neq B'$). Suppose $\mathbf w(m_j)\geq\mathbf w(m_j')$. 
Since $m_j'\notin B$, the coefficient of $m_j'$ can be written as a linear combination of the coefficients of $m_i$'s for $i<j$. From the definition of $j$, for all $i<j$, $m_i=m_i'$. So the coefficient of $m_j'$ can also be written as a linear combination of the coefficients of $m_i'$'s for $i<j$. This contradicts that $B'$ is a basis and proves that $\mathbf w(m_j) < \mathbf w(m_j')$. 

Now we move beyond $j$. First, we prove that for all $i\in[\ell]$, $\mathbf w(m_i)\leq \mathbf w(m_i')$. For the sake of contradiction assume that there exists a number $a$ such that $\mathbf w(m_a)>\mathbf w(m_a')$. Pick the least such $a$. Let $V$ be the span of the coefficients of monomials in $f$ whose weights are $\leq \mathbf w(m_a')$. Since, for all $i\in[a]$, the coefficient of $m_i'$ is in $V$ and all of them are linearly independent, we know that $\dim(V)\geq a$. On the other hand, for every monomial $m$ in $f$ of $\mathbf w(m)\leq \mathbf w(m_a')<\mathbf w(m_a)$, the coefficient of $m$ can be written as a linear combination of the coefficients of $m_i$'s where $i<a$. This implies that $\dim(V)<a$, which yields a contradiction. Thus, for all $i\in[\ell]$, $\mathbf w(m_i)\leq \mathbf w(m_i')$. In other words, $B\le B'$.

Togetherwith $\mathbf w(m_j) < \mathbf w(m_j')$, we get that $B<B'$ and $\mathbf w(B)<\mathbf w(B')$.
\end{proof}

Next we want to study the effect of shifting $f$ by a basis isolating weight assignment. To do that we require an elaborate notation. As before $f(\mathbf x)$ 
is an arity-$n$ degree-$d$ polynomial over $\mathbb F^k$. For a weight assignment $\mathbf w$, by $f(\mathbf x+t^{\mathbf w})$ we denote the polynomial $f(x_1+t^{w_1}, \ldots, x_n+t^{w_n})$. Let $M=\{\mathbf a\in \mathbb N^n : |\mathbf a|_1\leq d\}$ correspond to the relevant monomials. For every $\mathbf a\in M$, $\cf_{\mathbf{x^a}}(f(\mathbf x+t^{\mathbf w}) )$ can be expanded using the binomial expansion, and we get:
\begin{equation}\label{eqn-binom}
\sum_{\mathbf b\in M}\binom{\mathbf b}{\mathbf a} \cdot t^{\mathbf w(\mathbf b)-\mathbf w(\mathbf a)} \cdot \mathrm{coef}_{\mathbf{x^b}}(f(\mathbf x)) \,.
\end{equation}

We express this data in matrix form as $F'=D^{-1}TD\cdot F$, where the matrices involved are,
\begin{enumerate}
\item $F$ and $F'$: rows are indexed by the elements of $M$ and columns 
are indexed by $[k]$. In $F$ resp.~$F'$ the $\mathbf a$-th row is $\mathrm{coef}_{\mathbf{x^a}}(f(\mathbf x))$ resp.~$\mathrm{coef}_{\mathbf{x^a}}(f(\mathbf x+t^{\mathbf w}))$.
\item $D$: is a diagonal matrix with both the rows and columns indexed by $M$. For $\mathbf a\in M$, $D_{\mathbf a,\mathbf a} := t^{\mathbf w(\mathbf{x^a})}$ .
\item $T$: both the rows and columns are indexed by $M$. For $\mathbf a,\mathbf b\in M$, $T_{\mathbf a,\mathbf b} := \binom{\mathbf b}{\mathbf a}$ .
\end{enumerate}

We will prove the following combinatorial property of $T$: For any $B\subseteq M$, there is a {\em cone-closed} $A\subseteq M$ such that the submatrix $T_{A,B}$ has full rank. Our proof is an involved double-induction, so we describe the construction of $A$ as Algorithm \ref{algo:find_con-closed_set}.
 
\begin{algorithm}
\caption{Finding cone-closed set}
\label{algo:find_con-closed_set}

\begin{algorithmic}
\State \textbf{Input:} A subset $B$ of the $n$-tuples $M$.
\State \textbf{Output:} A cone-closed $A\subseteq M$ with full rank $T_{A,B}$.
 
\Function{Find-Cone-closed}{$B$, $n$}

\If{$n=1$}

  \State $s\leftarrow |B|$;
  
  \Return $\{0. \ldots, s-1\}$;

\Else

  \State Let $\pi_n$ be the map which projects the set of monomials $B$ on the first $n-1$ variables;
  
  \State Let $\ell$ be the maximum number of preimages under $\pi_n$;
  
  \State $\forall i\in[\ell]$, $F_i$ collects those elements in $\mathrm{Img}(\pi_n)$ whose preimage size$\geq i$;
  
  \State $A_0\leftarrow \emptyset$;
  
\For{$i\leftarrow 1$ to $\ell$}

  \State  $S_i\leftarrow$ \textsc{Find-Cone-closed}$(F_i, n-1)$;
  \State  $A_i\leftarrow A_{i-1}\bigcup S_i\times \{i-1\}$;

\EndFor

\Return $A$;

\EndIf

\EndFunction

\end{algorithmic}

\end{algorithm}

\begin{lemma}[Comparison]\label{lem-algo1}
Let $B$ and $B'$ be two nonempty subsets of $M$ such that $B\subseteq B'$. Let 
$A=$ \textsc{Find-Cone-closed}$(B, n)$ and $A'= $ 
\textsc{Find-Cone-closed}$(B', n)$ in Algo.\ref{algo:find_con-closed_set}. 
Then $A\subseteq A'$. Moreover, $|A|=|B|$.
\end{lemma}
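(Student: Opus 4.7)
The plan is to induct on the arity $n$. The base case $n=1$ is immediate from reading off the algorithm: on input $B$ of size $s$ the procedure returns $\{0,\ldots,s-1\}$, so $B\subseteq B'$ with $|B|\leq |B'|$ gives $A=\{0,\ldots,|B|-1\}\subseteq\{0,\ldots,|B'|-1\}=A'$ and $|A|=|B|$.

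For the inductive step $n>1$, I would set up two parallel runs of \textsc{Find-Cone-closed} on $B$ and $B'$, producing the quantities $\ell,\{F_i\},\{S_i\}$ and $\ell',\{F_i'\},\{S_i'\}$ respectively. The key preliminary observation, which follows directly from the definition of $F_i$, is monotonicity in the input: since $\pi_n^{-1}(x)\cap B\subseteq\pi_n^{-1}(x)\cap B'$ for every $x$ in the first $n-1$ coordinates, we get both $\ell\leq\ell'$ and $F_i\subseteq F_i'$ for every $i\in[\ell]$. Applying the induction hypothesis at arity $n-1$ to each pair $(F_i,F_i')$ then yields the simultaneous conclusions $S_i\subseteq S_i'$ and $|S_i|=|F_i|$.

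With this in hand, both halves of the lemma drop out. The containment follows by taking unions,
\[
A \;=\; \bigcup_{i=1}^{\ell} S_i\times\{i-1\} \;\subseteq\; \bigcup_{i=1}^{\ell'} S_i'\times\{i-1\} \;=\; A'.
\]
The size identity $|A|=|B|$ follows from $\sum_{i=1}^{\ell}|F_i|=|B|$ -- a standard double-counting exchange that interprets both sides as $\#\{(x,j)\,:\,x\in\pi_n(B),\ 1\leq j\leq |\pi_n^{-1}(x)\cap B|\}$ -- combined with $|S_i|=|F_i|$ and the obvious disjointness of the slices $S_i\times\{i-1\}$ across $i$.

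No step looks technically hard; the only thing requiring care is the bookkeeping at the recursive boundary. One must note that $\ell$ and $\ell'$ can differ, so the union defining $A'$ may range further than that of $A$ (which only strengthens the containment), and the induction hypothesis must be invoked on nonempty $F_i$, which is automatic for $i\leq\ell$. The cone-closedness of $A$ itself is not needed for this lemma; it is the content of the companion structural claim about \textsc{Find-Cone-closed}, and will presumably be used downstream together with the present comparison to establish full rank of the associated submatrix $T_{A,B}$.
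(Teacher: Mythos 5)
Your proof is correct and follows essentially the same route as the paper's: induction on $n$, the monotonicity $\ell\leq\ell'$ and $F_i\subseteq F_i'$, applying the induction hypothesis to each pair $(F_i,F_i')$ to get $S_i\subseteq S_i'$ and $|S_i|=|F_i|$, and the double-count $\sum_{i=1}^{\ell}|F_i|=|B|$. The only (beneficial) difference is that you are slightly more careful about quantifying over $i\in[\ell]$ rather than $i\in[\ell']$ when asserting $F_i\subseteq F_i'$, and you explicitly note that $F_i$ is nonempty for $i\leq\ell$ so the induction hypothesis applies.
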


\begin{proof}
We prove the lemma using induction on $n$.

{\em Base case $(n=1)$:} For $n=1$, the set $A$ is $\{0,\ldots,|B|-1\}$ and 
the other one $A'$ is $\{0,\ldots, |B'|-1\}$. Since $B$ is a subset of 
$B'$, $|B|\leq |B'|.$ So $A$ is also a subset of $A'$. 

{\em Induction step $(n-1\rightarrow n)$:} Let $\ell$ resp.~$\ell'$ be the bounds on the size of preimages of $\pi_n$ in $B$ resp.~$B'$. To denote the set of all elements in $\mathrm{Img}(\pi_n)$ whose preimage size $\geq i$, we use $F_i$ resp.~$F'_i$. Since $B\subseteq B'$ we have $\ell\leq \ell'$, and for all $i\in[\ell']$, $F_i\subseteq F_i'$. So from induction hypothesis, $S_i\subseteq S_i'$. Since 
$A=\bigcup_{i=1}^{\ell}S_i\times\{i-1\}$ and $A'=\bigcup_{i=1}^{\ell'}S_i'\times\{i-1\}$, we deduce that $A\subseteq A'$.

Note that $|A|=|B|$ is true when $n=1$. Let us prove the induction step from $n-1$ to $n$. Since $|A| = \sum_{i\in[\ell]} |S_i|$, and by induction hypothesis $|S_i|=|F_i|$, we deduce that $|A| = \sum_{i\in[\ell]} |F_i|$. From the definition of $F_i$'s we get that $\mathrm{Img}(\pi_n)=F_1\supseteq F_2\supseteq\cdots\supseteq F_\ell$. A monomial $m\in \pi_n(B)$ that has preimage size $j$, is counted exactly $j$ times in $\sum_{i\in[\ell]} |F_i|$. Thus, $|A|=\sum_{i\in[\ell]} |F_i|=|B|$.
\end{proof}

\begin{lemma}[Closure]
\label{lemma:cone-closed_nvar}
Let $B$ be a nonempty subset of $M$. If $A=$ \textsc{Find-Cone-closed}$(B, n)$ 
in Algo.\ref{algo:find_con-closed_set}, then $A$ is cone-closed. 

{\em (Proved in Sec.\ref{app-2})}
\end{lemma}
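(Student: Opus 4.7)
The plan is to induct on the arity $n$, feeding off the comparison lemma (Lemma~\ref{lem-algo1}) at the inductive step. The base case $n=1$ is immediate: the algorithm returns $A=\{0,1,\ldots,|B|-1\}$, which trivially contains every nonnegative integer below any of its members, and is therefore cone-closed.

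For the inductive step, assume the claim holds for arity $n-1$, and pick an arbitrary $\mathbf a=(a_1,\ldots,a_n)\in A$ together with a submonomial $\mathbf a'=(a_1',\ldots,a_n')\le \mathbf a$. By the construction of $A$ there exists $i\in[\ell]$ with $\mathbf a\in S_i\times\{i-1\}$, i.e.~$(a_1,\ldots,a_{n-1})\in S_i$ and $a_n=i-1$. To show $\mathbf a'\in A$ it suffices, again by the construction, to show $(a_1',\ldots,a_{n-1}')\in S_{a_n'+1}$, since then $\mathbf a'\in S_{a_n'+1}\times\{a_n'\}\subseteq A$.

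The key observation is that $a_n'\le a_n=i-1$ gives $a_n'+1\le i$, and the chain $\mathrm{Img}(\pi_n)=F_1\supseteq F_2\supseteq\cdots\supseteq F_\ell$ (which follows directly from the definition of $F_j$ as the image points whose preimage size is $\ge j$) then yields $F_{a_n'+1}\supseteq F_i$. Invoking Lemma~\ref{lem-algo1} with these two nested sets at arity $n-1$ gives $S_{a_n'+1}=\textsc{Find-Cone-closed}(F_{a_n'+1},n-1)\supseteq\textsc{Find-Cone-closed}(F_i,n-1)=S_i$, so $(a_1,\ldots,a_{n-1})\in S_{a_n'+1}$. Since $S_{a_n'+1}$ is cone-closed by the inductive hypothesis, it also contains $(a_1',\ldots,a_{n-1}')\le(a_1,\ldots,a_{n-1})$, closing the induction.

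The only delicate point is lining up the direction of the inclusion $F_{a_n'+1}\supseteq F_i$ with the monotonicity provided by Lemma~\ref{lem-algo1}; once that is in place, the rest is a direct unpacking of the recursive definition of $A$, and no further combinatorial input is needed.
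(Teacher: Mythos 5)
Your proof is correct and follows essentially the same inductive strategy as the paper: induct on arity, use Lemma~\ref{lem-algo1} to deduce $S_{a_n'+1}\supseteq S_i$ from the containment $F_{a_n'+1}\supseteq F_i$, and then apply the inductive cone-closure of $S_{a_n'+1}$. The only cosmetic difference is that the paper splits into the cases $t=k$ and $t<k$ (in your notation $a_n'=a_n$ vs.\ $a_n'<a_n$), whereas you handle both uniformly via $F_{a_n'+1}\supseteq F_i$ with possible equality; this is a harmless streamlining since Lemma~\ref{lem-algo1} applies just as well when the two input sets coincide.
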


We recall a fact that has been used for ROABP PIT. (It requires $\ch\;\F=0$ or large.)

\begin{lemma}\cite[Clm.3.3]{GKS16}
\label{lemma:cone-closed_1var}
Let $a_1, \ldots, a_n$ be distinct non-negative integers. Let $A$ be an $n\times n$ matrix with, $i,j\in[n]$, $A_{i,j} := \binom{a_j}{i-1}$. Then, $A$ is full rank.
\end{lemma}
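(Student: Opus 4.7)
The plan is to factor $A$ as a product of a lower-triangular matrix with nonzero diagonal and a Vandermonde matrix, both of which are visibly invertible. The first step is to write out the entries of $A$ in the monomial basis. Using the falling factorial, $\binom{x}{i-1} = x(x-1)\cdots(x-i+2)/(i-1)!$ is a polynomial in $x$ of degree exactly $i-1$ with leading coefficient $1/(i-1)!$. So I would expand $\binom{x}{i-1} = \sum_{k=0}^{i-1} c_{i,k}\, x^k$ where $c_{i,i-1} = 1/(i-1)! \ne 0$ (using the paper's standing assumption that $\ch \F = 0$ or sufficiently large, in particular $>n$).

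Next I would translate this into a matrix factorization. Let $C$ be the $n\times n$ lower-triangular matrix whose $(i, k{+}1)$-entry is $c_{i,k}$ for $0 \le k \le i{-}1$ (zero beyond), and let $V$ be the Vandermonde matrix with $V_{k+1,\,j} := a_j^k$. A direct check gives $(CV)_{i,j} = \sum_{k=0}^{i-1} c_{i,k}\, a_j^k = \binom{a_j}{i-1} = A_{i,j}$, so $A = CV$. The determinant then splits: $\det(C) = \prod_{i=1}^n 1/(i-1)! \ne 0$, and $\det(V) = \prod_{1 \le i < j \le n}(a_j - a_i) \ne 0$ by the distinctness of the $a_j$'s. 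Therefore $\det(A) \ne 0$ and $A$ has full rank.

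I do not expect any real obstacle; the only subtlety is ensuring the factorials in $\det(C)$ are invertible, which is exactly the paper's standing characteristic assumption. An equivalent, coordinate-free way to phrase the argument is to observe that $\{\binom{x}{i-1}\}_{i=1}^n$ is a degree-graded basis of the polynomials of degree $<n$, so the change-of-basis matrix from $\{1,x,\dots,x^{n-1}\}$ to $\{\binom{x}{i-1}\}_i$ is lower-triangular with nonzero diagonal; applying this change of basis to the Vandermonde matrix (which has full rank precisely when the $a_j$'s are distinct) preserves full rank.
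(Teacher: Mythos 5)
The paper does not prove this lemma; it cites it as \cite[Clm.~3.3]{GKS16}, so there is no in-paper proof to compare against. Your argument is correct and is the standard one: since $\binom{x}{i-1}$ is a degree-$(i-1)$ polynomial in $x$ with leading coefficient $1/(i-1)!$, the matrix $A$ factors as a lower-triangular change-of-basis matrix (diagonal entries $1/(i-1)!$) times the Vandermonde matrix on $a_1,\ldots,a_n$, and each factor is invertible over $\Q$. You are right to flag the characteristic hypothesis---it is essential, not cosmetic. Over $\F_p$ the statement can fail even though the $a_j$ are distinct as integers: take $n=2$, $a_1=0$, $a_2=p$; then $\binom{a_1}{1}=0$ and $\binom{a_2}{1}=p\equiv 0\pmod p$, so the second row of $A$ vanishes. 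Concretely, both potential obstructions in your factorization---the $1/(i-1)!$ diagonal of $C$ and the Vandermonde determinant $\prod_{i<j}(a_j-a_i)$ of $V$---can collapse mod $p$, and the paper's standing $\F=\Q$ (or $\ch\F=0$ or large) assumption is exactly what rules this out.
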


\begin{lemma}[Full rank]
\label{lemma:cone-closed_nvar_basis}
If $A=$ \textsc{Find-Cone-Closed}$(B, n)$ then $T_{ A, B}$ has full rank. 

{\em (Proved in Sec.\ref{app-2})}
\end{lemma}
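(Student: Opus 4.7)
My plan is to induct on $n$, using Lem.~\ref{lemma:cone-closed_1var} for the base case and a careful column-operation argument for the inductive step. For $n=1$, the algorithm returns $A=\{0,\ldots,|B|-1\}$, so $T_{A,B}$ is the $|B|\times|B|$ matrix with $(i,b)$-entry $\binom{b}{i-1}$, which is invertible by Lem.~\ref{lemma:cone-closed_1var} since the elements of $B$ are distinct.

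For the inductive step, I exploit the factorization $T_{\mathbf{a},\mathbf{b}}=\binom{\mathbf{b}'}{\mathbf{a}'}\binom{b_n}{a_n}$ (writing $\mathbf{a}=(\mathbf{a}',a_n)$ and $\mathbf{b}=(\mathbf{b}',b_n)$). Group rows of $T_{A,B}$ by $j:=a_n+1\in[\ell]$ (so row block $j$ consists of rows $(\mathbf{a}',j-1)$ with $\mathbf{a}'\in S_j$) and group columns by $\mathbf{b}'\in\pi_n(B)$ (block $\mathbf{b}'$ contains the $k(\mathbf{b}'):=|I(\mathbf{b}')|$ columns whose last coordinate runs over $I(\mathbf{b}')$). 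Within each column block $\mathbf{b}'$, right-multiply by $M_{\mathbf{b}'}^{-1}$, where $M_{\mathbf{b}'}$ is the $k(\mathbf{b}')\times k(\mathbf{b}')$ matrix $\bigl(\binom{\beta}{i-1}\bigr)_{i\in[k(\mathbf{b}')],\,\beta\in I(\mathbf{b}')}$; Lem.~\ref{lemma:cone-closed_1var} guarantees $M_{\mathbf{b}'}$ is invertible. Re-index the new columns inside block $\mathbf{b}'$ as $(\mathbf{b}',i')$ with $i'\in[k(\mathbf{b}')]$.

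The key observation is that whenever $j\le k(\mathbf{b}')$ (equivalently, $\mathbf{b}'\in F_j$), the new entry at row $(\mathbf{a}',j-1)$ and column $(\mathbf{b}',i')$ is exactly $\binom{\mathbf{b}'}{\mathbf{a}'}\,\delta_{j,i'}$. Reorder columns globally with outer index $i'\in[\ell]$ (increasing) and inner index $\mathbf{b}'\in F_{i'}$, while keeping rows grouped by $j$ (increasing). For $j<i'$, the very presence of column $(\mathbf{b}',i')$ forces $i'\le k(\mathbf{b}')$, whence $j<k(\mathbf{b}')$ and the entry vanishes; for $j=i'$, the diagonal block is exactly $T^{(n-1)}_{S_{i'},F_{i'}}$, the $(n{-}1)$-variable $T$-matrix restricted to rows $S_{i'}$ and columns $F_{i'}$, which is invertible by the inductive hypothesis applied to $S_{i'}=\textsc{Find-Cone-closed}(F_{i'},n-1)$. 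Hence the transformed matrix is block lower triangular with invertible diagonal blocks, and $\det T_{A,B}=\pm\bigl(\prod_{\mathbf{b}'}\det M_{\mathbf{b}'}\bigr)\cdot\prod_{i'\in[\ell]}\det T^{(n-1)}_{S_{i'},F_{i'}}\neq 0$.

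The bookkeeping is where most of the effort will go: (i) $|A|=|B|$ by Lem.~\ref{lem-algo1}, so the matrix is square; (ii) the monotonicity $F_1\supseteq F_2\supseteq\cdots\supseteq F_\ell$ that the algorithm bakes into $A$ is precisely what underwrites the implication ``$i'\le k(\mathbf{b}')$ and $j<i'\Rightarrow j\le k(\mathbf{b}')$'' that kills the above-diagonal entries; and (iii) the sub-diagonal ($j>i'$) entries are generally nonzero but irrelevant for the triangular determinant. I expect the subtle part to be justifying the block-triangular reduction cleanly, since the per-block column transformations $M_{\mathbf{b}'}^{-1}$ differ across blocks and commute only by virtue of acting on disjoint column sets.
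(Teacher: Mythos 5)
Your proof is correct, and it takes a genuinely different and arguably cleaner route than the paper's. The paper proves full rank \emph{dually}, by showing the kernel of $T_{A,B}$ is trivial: a double induction, outer on $n$ and inner on the iteration $i$ of the \texttt{for} loop, maintaining the invariant that $T_{A_i,B}\cdot\mathbf b=0$ forces $\mathbf b_{\mathbf f}=0$ for every $\mathbf f\in B$ whose $\pi_n$-fiber has size $\le i$; the inner induction peels off one ``layer'' of last-coordinate preimages per step, projecting onto $S_j\times\{j-1\}$, invoking the $(n-1)$-variable result to zero out $\mathbf c'_j$, and then invoking Lem.~\ref{lemma:cone-closed_1var} to zero out the remaining $\mathbf b$-entries. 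Your argument works \emph{primally}: you exploit the Kronecker-type factorization $\binom{\mathbf b}{\mathbf a}=\binom{\mathbf b'}{\mathbf a'}\binom{b_n}{a_n}$, diagonalize the last coordinate within each fiber by right-multiplying by $M_{\mathbf b'}^{-1}$ (Lem.~\ref{lemma:cone-closed_1var} supplying invertibility), and then observe that the monotone chain $F_1\supseteq F_2\supseteq\cdots\supseteq F_\ell$ baked into the algorithm makes the resulting matrix block lower triangular with diagonal blocks $T^{(n-1)}_{S_{i'},F_{i'}}$, each invertible by the single outer induction. This collapses the paper's inner induction into a one-shot structural decomposition and yields an explicit determinant factorization $\det T_{A,B}=\pm\bigl(\prod_{\mathbf b'}\det M_{\mathbf b'}\bigr)\prod_{i'}\det T^{(n-1)}_{S_{i'},F_{i'}}$ rather than merely a kernel-triviality statement. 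The two proofs use the same two ingredients (Lem.~\ref{lemma:cone-closed_1var} and the recursion built into the algorithm), but yours packages them into a matrix factorization, which is shorter and exposes the combinatorial structure more transparently.
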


Now we are ready to prove our main theorem using the transfer matrix equation.

\begin{proof}[Proof of Thm.\ref{thm-main3}]
As we mentioned in Eqn.\ref{eqn-binom}, the shifted polynomial $f(\mathbf x+t^{\mathbf w})$ yields a matrix equation $F'=D^{-1}TD\cdot F$. Let $k'$ be the rank of $F$. We consider the following two cases. 

{\em Case 1 $(k'<k)$:} We reduce this case to the other one where $k'=k$. Let $S$ be a subset of $k'$ columns such that $F_{M, S}$ has rank $k'$. The matrix $F_{M, S}$ denotes the polynomial $f_S(\mathbf x)\in \mathbb F[\mathbf x]^{k'}$, where $f_S(\mathbf x)$ is the projection of the `vector' $f(\mathbf x)$ on the coordinates indexed by $S$. So, any linear dependence relation among the coefficients of $f(\mathbf x)$  is also valid for $f_S(\mathbf x)$. So $\mathbf w$ is also a basis isolating weight assignment for $f_S(\mathbf x)$. Now from our Case 2, we can claim that $f_S(\mathbf x+t^{\mathbf w})$ has a cone-closed basis $A$. Thus, coefficients of the monomials, corresponding to $A$, in $f(\mathbf x)$ form a basis of $\lrsp(f)$. This implies that $f(\mathbf x+t^{\mathbf w})$ has a cone-closed basis $A$.

\smallskip
{\em Case 2 $(k'=k)$:} Let $B$ be the least basis of $f(\mathbf x)$ wrt $\mathbf w$ and $A=$ \textsc{Find-Cone-closed}$(B, n)$. We prove that the coefficients of monomials in $A$ form a basis of the coefficient space of $f(\mathbf x+t^{\mathbf w})$. To prove this, we show that $\det(F_{A, [k]}')\neq 0$. Define $T' :=TDF$ so that $F'=D^{-1}T'$. Using Cauchy-Binet formula \cite{Z93}, we get that 
$$\det(F_{ A, [k]}') \,=\, \sum_{C\in\binom{M}{k}}\det(D^{-1}_{A, C})\cdot \det(T_{C, [k]}') \,.$$
Since for all $C\in\binom{M}{k}\setminus \{A\}$, the matrix $D^{-1}_{ A, C}$ is singular, we have $\det(F_{A, [k]}') =\det(D^{-1}_{A, A})\cdot \det(T_{A, [k]}')$. Again applying Cauchy-Binet formula for 
$\det(T_{A, [k]}')$, we get 
$$\det(F_{A, [k]}') \,=\, \det(D^{-1}_{A, A})\cdot \sum_{C\in\binom{M}{k}}t^{\mathbf w(C)}\det(T_{ A, C})\cdot \det(F_{C, [k]}) \,.$$ 
From Lem.\ref{lem-uniq-B}, we have that for all basis $C\in\binom{M}{k}\setminus \{B\}$, $\mathbf w(C)> \mathbf w(B)$. The matrix $T_{A, B}$ is nonsingular by Lem.\ref{lemma:cone-closed_nvar_basis}, and the other one $F_{B, [k]}$ is 
nonsingular since $B$ is a basis. Hence, the sum is a nonzero polynomial in $t$. In particular, $\det( F_{ A, [k]}')\ne0$, which ensures that the coefficients of the monomials corresponding to $A$ form a basis of $\lrsp_{\F(t)}(f(\mathbf x+t^{\mathbf w}))$. Since Lem.\ref{lemma:cone-closed_nvar} says that $A$ is also cone-closed, we get that $f(\mathbf x+t^{\mathbf w})$ has a cone-closed basis.
\end{proof}

\vspace{-1mm}
\section{Conclusion}
\vspace{-1mm}

We introduce the tiny diagonal depth-$4$ (resp.~tiny variants of depth-$3$, width-$2$ ABP and extremely low-arity $\Sigma\Pi\Sigma\wedge$ or $\Sigma\wedge^a\Sigma\Pi$) model with the motivation that its poly-time hitting-set would: (1) solve VP PIT (in quasipoly-time) via a poly-time arity reduction ($n\mapsto\log sd$), and (2) prove that either E$\not\subseteq$\#P/poly or VNP has polynomials of arithmetic complexity $2^{\Omega(n)}$. Since now we could focus solely on the PIT of {\em log}-arity VP circuits, we initiate a study of properties that are useful in that regime. These are low-cone concentration and cone-closed basis. Using these concepts we solve a special case of diagonal depth-$3$ circuits. This work throws up a host of tantalizing models and poses several interesting questions: 

\smallskip\noindent
Could the arity reduction phenomenon in Thm.\ref{thm-main1.1} be improved (say, to $2^{2^n}$)? 

\smallskip\noindent
Could we show that the $g$ in Lem.\ref{lemma:PIT_to_lb} is in VNP and not merely E-computable? This would strongly relate PIT to VNP$\ne$VP.

\smallskip\noindent
Could we prove nontrivial lower bounds against the tiny models? 

\smallskip\noindent
Could we solve PIT for size-$s$ $\Sigma\Pi\Sigma\wedge(n)$ in poly($s, \mu(n)$)-time, for some function $\mu$?

\smallskip\noindent
Could we solve PIT for size-$s$ semantic individual-degree-$a'$ $\Sigma\Pi\Sigma(n)$ circuits in poly($s2^n$, $\mu'(a')$)-time, for some function $\mu'$?

\smallskip\noindent
Could we solve PIT for size-$s$ $\Sigma\Pi\Sigma(n)$ in poly($s, \mu(n)$)-time, for some function $\mu$?

\smallskip\noindent
Could we do blackbox PIT for ROABP when $n=\omega(1)$? For instance, given oracle $C=\sum_{i\in[k]}\prod_{j\in[n]} f_{i,j}(x_j)$ of size$\le s$, we want a hitting-set in poly($s,\mu(n)$)-time, for some function $\mu$. It is known that diagonal depth-$3$ blackbox PIT reduces to this problem if we demand $\mu(n)=2^{O(n)}$ \cite{FSS14}.

\smallskip\noindent
Could we do blackbox PIT for size-$s$ arity-($\log s$) individual-degree-($\log\log s$)   ROABPs?

\smallskip\noindent
Could we do blackbox PIT for size-$s$ arity $\omega(\log s)$ multilinear ROABPs?
\vspace{-1mm}
\section*{Acknowledgements}
\vspace{-1mm}

We thank Ramprasad Saptharishi for many useful discussions. M.F.~\& N.S.~thank the organizers of algebraic complexity workshops in 2014 (MPI Saarbr\"ucken \& TIFR Mumbai) that initiated the early discussions. N.S.~thanks the funding support from DST (DST/SJF/MSA-01/2013-14).


\bibliographystyle{alpha}
\bibliography{references}


\appendix

\section{Proofs from Sec.\ref{sec-tdd4}: Other tiny models }\label{app-0}

If a circuit $C$ computes a polynomial of individual-degree $\le d_0$ then we say that the {\em semantic individual-degree} bound of $C$ is $d_0$. Recall the definition of the tiny depth-$3$ model (Defn.\ref{def:tiny_depth-4_diagonal_circuit}).

\begin{reptheorem}{thm-main1.01}[Tiny depth-$3$]
If we have poly-time hitting-sets for a tiny depth-$3$ model, then for VP circuits we have a poly($sd$)-time arity reduction ($n\mapsto O(\log sd)$) that preserves nonzeroness (and proves an exponential lower bound).
\end{reptheorem}
\begin{proof}
The proof strategy is identical to that of Thm.\ref{thm-main1}. So, we will only sketch the main points here.

Let constant $c>1$, and function $\mu'$ be given in the hypothesis. 
Let $\mathcal C\subset \calT'_{\mu',c}$ be the set of tiny depth-$3$ circuits of size $\le s$ and arity $m:=\log s$. Assume that $\mathcal C$ has a $(s^e, s^e)$-hsg $\mathbf f(y)$ for some constant $e\ge1$. Then using Lem.\ref{lemma:PIT_to_lb}, we have an $m$-variate polynomial $q_m$ with individual-degree less than some constant $2\delta$, $\delta^{m-1}=\delta^{\log s-1}> s^e \log s$, and $q_m$ is computable in $s^{O(1)}= \delta^{O(m)}$ time. It has degree $=\delta m$. Importantly, $q_m\notin \calT'_{\mu',c}$ , thus, no tiny depth-$3$ circuit of size $\leq s = \delta^{\Theta(m)}$ can compute it (otherwise, $q_m(\mathbf f(y))\ne 0$ which contradicts its definition as an annihilator). Next we show that it is also not computable by any $\delta^{o(m)}$-size arithmetic circuit. 

For the sake of contradiction, assume that $q_m$ has a $\delta^{o(m)}$-size circuit. Repeat the depth-reduction arguments, as in the proof of Thm.\ref{thm-main1}. Let $a:=5^t$ and $b:=\delta m/2^{t}$. Note that we can ensure $a,b=o(\log s)$, $a=\omega(1)$, and we have a shallow circuit for $q_m$ of the form $\Sigma\Pi^{a}\Sigma\Pi^{b}$. 

It was shown in \cite{Gupta13} that any size-$s'$ $\Sigma\Pi^a\Sigma\Pi^b(n)$ circuit can be transformed to a poly($s'2^{a+b}$)-size $\Sigma\Pi\Sigma^b(n)$ circuit. Applying it here, we get a depth-$3$ circuit $C'$, computing $q_m$, of the form $\Sigma\Pi\Sigma$ and size $\delta^{o(m)}\cdot 2^{a+b} = \delta^{o(m)}$. Moreover, the measure $2^m+\mu'(2\delta)=s+O(1)=o(s^c)$. Thus, $C'$ is a tiny depth-$3$ circuit in $\calT'_{\mu',c}$ , of size $\delta^{o(m)}$ which is $<s$. This contradicts the hardness of $q_m$. Thus, there is no arithmetic circuit for $q_m$ of size $\delta^{o(m)}$. 

Now invoking Lem.\ref{lemma:lb_to_PIT} on the hard family $\{q_m\}_{m\ge1}$, we get our claim. 
Moreover, by Cor.\ref{cor-hs-hard} $\{q_m\}_m$ is an E-computable polynomial family that has arithmetic circuit complexity $2^{\Omega(m)}$.
\end{proof}

We will now consider the polynomials that can be computed by upper-triangular width-$2$ arithmetic branching programs (ABP).

\begin{reptheorem}{thm-main1.02}[Width-$2$ ABP]
If we have poly($s2^n$)-time hitting-sets for size-$s$ arity-$n$ width-$2$ upper-triangular ABP, then for VP circuits we have a poly($sd$)-time arity reduction ($n\mapsto O(\log sd)$) that preserves nonzeroness (and proves an exponential lower bound).
\end{reptheorem}
\begin{proof}
In \cite[Thm.3]{SSS09} an efficient transformation was given that rewrites a size-$s$ arity-$n$ depth-$3$ circuit, times a special product of a linear polynomials, as a poly($s$)-size arity-$n$ width-$2$ upper-triangular ABP. Thus, a poly($s2^n$)-time hitting-set for the latter model gives a poly($s2^n$)-time hitting-set for the former. This by Thm.\ref{thm-main1.01} gives us the exponentially hard polynomial family $\{q_m\}_{m\ge1}$ that is E-computable. Now invoking Lem.\ref{lemma:lb_to_PIT} on the hard family, we get the arity reduction. 
\end{proof}

\begin{reptheorem}{thm-main1.03}[Multilinear tiny depth-$3$]
If we have poly-time hitting-sets for a multilinear tiny depth-$3$ model, then for VP circuits we have a poly($sd$)-time arity reduction ($n\mapsto O(\log sd)$) that preserves nonzeroness (and proves an exponential lower bound).
\end{reptheorem}
\begin{proof}

The proof strategy is identical to that of Thm.\ref{thm-main1}. So, we will only sketch the main points here. Let constant $c>1$, and function $\mu'$ be given in the hypothesis. 
Let $a'= a'(s)$ be a `small' {\em unbounded} function determined by the constraint $\mu'(a')\le s$. Assume that size-$s$ arity-$(a'\log s)$ polynomials in $\calM_{\mu',c}$ have a $(s^e, s^e)$-hsg $\mathbf f(y)$ for some constant $e\ge1$.

Let $\mathcal C\subset \calM_{\mu',c}$ be the set of multilinear tiny depth-$3$ circuits of size $\le s$ and arity $m:=(e+2)\log s < a'\log s$. Since $\mathcal C$ has the $(s^e, s^e)$-hsg $\mathbf f(y)$, so using Lem.\ref{lemma:PIT_to_lb}, we have an arity $m':=(e+1)\log s$ multilinear annihilating polynomial $q'_{m'}(x_{1},\ldots, x_{m'})$ , which is computable in $s^{O(1)}$ time (note: $2^{m'}=2^{(e+1)\log s}=s^{e+1}> s^e\cdot m'= s^e\cdot(e+1)\log s$). Consider its multiple $q_m := q'_{m'}\cdot (x_{m'+1}\cdots x_m)$ which is also multilinear and an annihilator. 
It has degree $\ge (m-m')= \log s$. Importantly, $q_m\notin \calM_{\mu',c}$ , thus, no multilinear tiny depth-$3$ circuit of size $\leq s = 2^{\Theta(m)}$ can compute it (otherwise, $q_m(\mathbf f(y))\ne 0$ which contradicts its definition as an annihilator). Next we show that it is also not computable by any $2^{o(m)}$-size arithmetic circuit. 

For the sake of contradiction, assume that $q_m$ has a $2^{o(m)}$-size circuit. Repeat the depth-reduction arguments, as in the proof of Thm.\ref{thm-main1}, after cutting at depth $t=\omega(1)$. Let $a:=5^t$ and $b:=m/2^{t}$. Note that we can ensure $a,b=o(m)=o(\log s)$, $a=\omega(1)$, and we have a shallow circuit for $q_m$ of the form $\Sigma\Pi^{a}\Sigma\Pi^{b}$. 

It was shown in \cite{Gupta13} that any size-$s'$ $\Sigma\Pi^a\Sigma\Pi^b(n)$ circuit can be transformed to a poly($s'2^{a+b}$)-size $\Sigma\Pi\Sigma^b(n)$ circuit. Applying it here, we get a depth-$3$ circuit $C'$, computing $q_m$, of the form $\Sigma\Pi\Sigma$ and size $2^{o(m)}\cdot 2^{a+b} = 2^{o(m)}$. Moreover, the measure $\mu'(m/\log s)= \mu'(e+2)= O(1)= o(s^c)$. Thus, $C'$ is a multilinear tiny depth-$3$ circuit in $\calM_{\mu',c}$ , of size $2^{o(m)}$ which is $<s$. This contradicts the hardness of $q_m$. Thus, there is no arithmetic circuit for $q_m$ of size $2^{o(m)}$. 

Now invoking Lem.\ref{lemma:lb_to_PIT} on the hard family $\{q_m\}_{m\ge1}$, we get our claim. 
Moreover, by Cor.\ref{cor-hs-hard} $\{q_m\}_m$ is an E-computable polynomial family that has arithmetic circuit complexity $2^{\Omega(m)}$.
\end{proof}

\begin{reptheorem}{thm-main1.3}[Tinier $n,a$]
Fix functions $\mu=\mu(s)$ and $\mu'$.
If we have poly($s, \mu(a), \mu'(n)$)-time hitting-set for size-$s$ $\Sigma\wedge^a\Sigma\Pi(n)$ circuits, then for VP circuits we have a poly($sd$)-time arity reduction ($n\mapsto O(\log sd)$) that preserves nonzeroness (and proves an exponential lower bound).
\end{reptheorem}
\begin{proof}
Suppose we have a poly($s,\mu(a), \mu'(n)$)-time hitting-set $\calH_{s,a,n}$ for size-$s$ $\Sigma\wedge^a\Sigma\Pi(n)$ circuits. Wlog we can assume that $\mu'(n)= \Omega(n)$. Let $a'=a'(s)=\omega(1)$ be a function satisfying $\mu'(a')\le s$, and define $n'=n'(s):= a'\log s$. Consider a size-$s$ $\Sigma\wedge^a\Sigma\Pi(n')$ circuit $C\ne0$ computing a multilinear polynomial. We intend to design a hitting-set for $C$.

Partition the variable set $\{x_1,\ldots,x_{n'}\}$ into $a'$ blocks $B_j, j\in[a']$, each of size $\log s$. Recall the map $\varphi$ designed in the proof of Thm.\ref{thm-main1.2}. We have $\varphi(C)\ne0$ (basically, use the fact that $C$ computes a nonzero multilinear polynomial and $\varphi$ keeps the multilinear monomials distinct). Finally, $\varphi(C)$ becomes an arity-$a'$ $\Sigma\wedge^a\Sigma\Pi$ circuit of size at most $s + s\cdot 2^{\log s}= O(s^2) $. 
Thus, using $\calH_{O(s^2), a, a'}$ we get a hitting-set for $\varphi(C)$ of time-complexity poly($s^2,\mu(a), \mu'(a')$)= poly($s,\mu(a)$). 

In turn, we get a poly($s,\mu(a)$)-time hitting-set for $\Sigma\wedge^a\Sigma\Pi$ circuits that compute multilinear polynomials of arity $a'\log s=\omega(\log s)$. Using an argument identical to that in the proof of Thm.\ref{thm-main1.03}, we will get the usual E-computable polynomial family $\{q_m\}_m$ that has no arithmetic circuit of size $2^{o(m)}$. This finishes the proof by invoking Lem.\ref{lemma:lb_to_PIT} and Cor.\ref{cor-hs-hard}.
\end{proof}

\section{Proofs from Sec.\ref{sec-lcc} }\label{app-1}

An arity-$n$ depth-3 diagonal circuit over $\F$ can be written as 
$C(\mathbf x) = \sum_{i=1}^k c_i\tilde f_i^{d_i}$, where $\tilde f_i$'s are linear 
polynomials. Let $f_i$ be the non-constant part of $\tilde f_i$ for all $i\in[k]$. 
Suppose that $\rk_{\F}\{f_1, \ldots f_k\}=:r$. Wlog, we can assume that $f_1,\ldots,f_r$ is a basis of the space spanned by $f_i$'s. Then there exists an $r$-variate polynomial $A(\mathbf z)$ such that $C(\mathbf x)= A(f_1,\ldots, f_r)$. Let $L_{\F[\mathbf x]}$, where $\mathbf x=(x_1,\ldots, x_n)$, resp.~$L_{\F[\mathbf y]}$, where $\mathbf y=(y_1,\ldots, y_r)$, be the vector space of linear polynomials over $\F$. 

Using the construction of \cite[Sec.3.2]{SS12}, in $\mathrm{poly}(knd)$-time, we can find a linear transformation $\Psi:$ $L_{\F[\mathbf x]}\rightarrow$  $L_{\F[\mathbf y]}$  such that  $\rk_{\F}\{\Psi(f_1),\ldots, \Psi(f_r)\}=r$ and $g_i:=\Psi(f_i)$ are linear forms (i.e.~homogeneous and degree one). Now we prove the following fact which will ensure the non-zeroness of $C(\Psi(\mathbf x))$.

\begin{lemma}
If $A(g_1,\ldots, g_r)=0$ then $A$ is the zero polynomial.
\end{lemma}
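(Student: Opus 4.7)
The plan is very short: exploit that the $g_i$ are linear forms in the $r$ variables $y_1,\ldots,y_r$ and that there are exactly $r$ of them, linearly independent over $\F$.

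First I would observe that the space of linear forms in $\F[y_1,\ldots,y_r]$ is an $r$-dimensional $\F$-vector space. By construction $\Psi$ produces linear forms $g_1,\ldots,g_r$ with $\rk_{\F}\{g_1,\ldots,g_r\} = r$, so $\{g_1,\ldots,g_r\}$ is in fact a basis of this space. Consequently there exists an invertible matrix $M \in \mathrm{GL}_r(\F)$ with $g_i = \sum_{j=1}^r M_{ij} y_j$ for each $i \in [r]$.

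Next I would package the substitution as an $\F$-algebra homomorphism. Define $\sigma : \F[\mathbf y] \to \F[\mathbf y]$ by $\sigma(y_i) = g_i$, extended as an algebra map. Then $\sigma$ sends any polynomial $A(\mathbf z)$ evaluated at the $y_i$'s to $A(g_1,\ldots,g_r)$, i.e.\ $\sigma(A(\mathbf y)) = A(g_1,\ldots,g_r)$. Since $M$ is invertible, the inverse substitution $y_i \mapsto \sum_j (M^{-1})_{ij} g_j$ defines a two-sided inverse for $\sigma$, so $\sigma$ is an automorphism of the polynomial ring. In particular $\sigma$ is injective.

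Finally, the hypothesis $A(g_1,\ldots,g_r)=0$ reads $\sigma(A)=0$, and injectivity of $\sigma$ gives $A=0$ as an element of $\F[\mathbf z]$, which is what we wanted. There is no real obstacle here; the only thing to be careful about is emphasizing that the $g_i$ are \emph{homogeneous} degree-one forms (which is guaranteed by the construction of $\Psi$ from \cite{SS12}), because if constants were allowed the substitution $y_i \mapsto g_i$ would still be an automorphism, but the cleanest justification is via the invertible linear change of variables on the $y$'s.
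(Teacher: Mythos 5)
Your proof is correct and follows essentially the same route as the paper: both arguments observe that $g_1,\ldots,g_r$ form a basis of the space of linear forms in $y_1,\ldots,y_r$, package the substitution $y_i\mapsto g_i$ as an $\F$-algebra automorphism of $\F[\mathbf y]$ (the paper works with the inverse map $\tau$ sending $g_i\mapsto y_i$, but this is a cosmetic difference), and conclude $A=0$ from injectivity.
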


\begin{proof}
Since $g_1, \ldots, g_r$ are $r$ linearly independent linear forms on $(y_1,\ldots, y_r)$, we have an invertible linear map $\tau$ from $L_{\F[\mathbf y]}$ to itself such that $\tau(g_i)=y_i$, equivalently, $\tau^{-1}(y_i)= g_i$. Thus, $\tau^{-1}$ induces an $\F$-automorphism $\tilde{\tau}$ on $\F[\mathbf y]$. 

Suppose that $A(g_1,\ldots, g_r)=0$. Then, applying $\tilde{\tau}$ on $A(g_1,\ldots, g_r)$, we get $A(y_1,\ldots, y_r)=0$, thus $A(\mathbf z)=0$. 
\end{proof}

\begin{reptheorem}{thm:polyPIT_for_tiny_depth-3_diag_ckt}
Let $\mathcal C$ be the set of all arity-$n$ degree-$d$ size-$s$ depth-$3$ diagonal circuits. Suppose that, for all $C\in \mathcal C$, $\rk(C)=O(\log sd)$. Then, the blackbox PIT for $\mathcal C$ can be solved in poly$(sd)$-time.
\end{reptheorem}

\begin{proof}
The above description gives us a nonzeroness preserving arity reduction ($n\mapsto \rk(C)$) method that reduces $C$ to an $O(\log(sd))$-variate degree-$d$ 
$\mathrm{poly}(s)$-size depth-$3$ diagonal circuit $C'$. 

Clearly the dimension of the partial derivative space of $C'$ is $\mathrm{poly}(sd)$ \cite[Lem.8.4.8]{Forbes}. Hence, Cor.\ref{cor:polyPIT_using_cs_for_tinyckt} gives us a poly($sd$)-time hitting-set for $C'$.
\end{proof}

\section{Proofs from Sec.\ref{sec-ccb} }\label{app-2}

\begin{replemma}{lemma:cone-closed_nvar}
Let $B$ be a nonempty subset of $M$. If $A=$ \textsc{Find-Cone-closed}$(B, n)$ 
in Algo.\ref{algo:find_con-closed_set}, then $A$ is cone-closed.
\end{replemma}

\begin{proof}
We prove it by induction on $n$.
 
{\em Base case $(n=1)$:} For $n=1$, $A= \{0, \ldots, |B|-1\}$. 
So $A$ is cone-closed.

{\em Induction step $(n-1\rightarrow n)$:} Now $A=\bigcup_{i=1}^{\ell}S_i\times \{i-1\}$ . Let $\mathbf f$ be an element in $A$ and $\mathbf{x^e}$ be a submonomial of $\mathbf{x^f}$. We will show that $\mathbf e\in A$. Let $\mathbf f=: (\mathbf f', k)$ and $\mathbf e=: (\mathbf e', t)$, so that $t\le k$. We divide our proof into the following two cases.

{\em Case 1 $(t=k)$:} We have $\mathbf f'\in S_{k+1} =$ 
\textsc{Find-Cone-closed}$(F_{k+1}, n-1)$. By induction hypothesis, $S_{k+1}$ is cone-closed. Since $\mathbf e'\leq \mathbf f'$, we get $\mathbf e'\in S_{k+1}$. So, $\mathbf e=(\mathbf e',k)\in S_{k+1}\times \{k\}$, which implies that it is also in $A$. 

{\em Case 2 $(t<k)$:} We have $F_{k+1}\subseteq F_{t+1}$. By Lem.\ref{lem-algo1}, we get $S_{k+1}\subseteq S_{t+1}$. So $\mathbf f'\in S_{t+1}$. From induction hypothesis, $S_{t+1}$ is a cone-closed set. This implies that $\mathbf e'\in S_{t+1}$ and $\mathbf e\in S_{t+1}\times \{t\}$. Thus, $\mathbf e$ is also in $A$.

Since $\mathbf e$ was arbitrary, we deduce that $A$ is cone-closed.
\end{proof}

\begin{replemma}{lemma:cone-closed_nvar_basis}
If $A=$ \textsc{Find-Cone-Closed}$(B, n)$ then $T_{ A, B}$ has full rank.
\end{replemma}

\begin{proof}
The proof will be by double-induction-- outer induction on $n$ and an inner induction on iteration $i$ of the `for' loop (Algo.\ref{algo:find_con-closed_set}). 

{\em Base case:} For $n=1$, the claim is true due to Lem.\ref{lemma:cone-closed_1var}. 

{\em Induction step $(n-1\rightarrow n)$:} To show $T_{ A, B}$ full rank, we 
prove that for any vector $\mathbf b \in \mathbb{ F}^{|B|}$: if $T_{ A, B}\cdot 
\mathbf b=0$ then $\mathbf b=0$. For this we show that the following 
invariant holds at the end of each iteration $i$ of the `for' loop (Algo.\ref{algo:find_con-closed_set}).

\smallskip
{\em Invariant (arity-$n$ \& $i$-th iteration):} For each $\mathbf f\in B$ such that the preimage size of $\pi_n(\mathbf f)$ is at most $i$, the product $T_{ A_i, B}\cdot \mathbf b=0$ implies that $\mathbf b_{\mathbf f}=0$. 

\smallskip
At the end of iteration $i=1$, we have the vector $T_{ A_1, B}\cdot \mathbf b$. 
Recall that $A_1=S_1\times\{0\}$ and $F_1=\pi_n(B)$. So $T_{ A_1, B}\cdot \mathbf b =$ $T_{ S_1, F_1}\cdot \mathbf c$, where for $\mathbf e\in F_1$, 
$\mathbf c_{\mathbf e} :=$ $\sum_{(\mathbf e, k)\;\in\; \pi_n^{-1}(\mathbf e)}
\binom{k}{0} \mathbf b_{(\mathbf e, k)}$. Thus, $T_{A_1, B}\cdot \mathbf b=0$ implies $T_{S_1, F_1}\cdot \mathbf c=0$. Since $S_1=$ \textsc{Find-Cone-closed}$(F_1, n-1)$, using induction hypothesis, we get that $\mathbf c=0$. This means that 
for $\mathbf e\in B$ such that the preimage size of $\pi_n(\mathbf e)$ is at most $1$, we have $\mathbf c_{\mathbf e}=0$. This proves our invariant at the end of the iteration $i=1$.

$(i-1\rightarrow i)$:
Suppose that at the end of $(i-1)$-th iteration, the invariant holds. We show that it also holds at the end of the $i$-th iteration. For each $j\in[i]$, let $\mathbf v_j$ denote the projection of $T_{ A_i, B}\cdot \mathbf b$ on the coordinates indexed by $S_j\times \{j-1\}$. By focusing on the latter rows of $T_{ A_i, B}$, we can see that $\mathbf v_j = T_{S_j, F_1}\cdot \mathbf c_j$ where the vector $\mathbf c_j$ is defined as, for $\mathbf e\in F_1$,
\begin{equation}
\label{equation:1}
(\mathbf c_{j})_{\mathbf e} \,:=\, \sum_{(\mathbf e, k) \;\in\;\pi_n^{-1}(\mathbf e)} \binom{k}{j-1} \cdot\mathbf b_{(\mathbf e, k)} \,.
\end{equation}
Suppose that $T_{A_i, B}\cdot \mathbf b=0$. All we have to argue is that for every $\mathbf f\in B$ such that the preimage size of $\mathbf e:=\pi_n(\mathbf f)$ is $i$, the coordinate $\mathbf b_{\mathbf f}=0$. 

Since $T_{A_i, B}\cdot \mathbf b=0$, its projection $\mathbf v_j = T_{S_j, F_1}\cdot \mathbf c_j$ is zero too. By induction hypothesis (on $i-1$), for each $\mathbf e\in F_1$ with preimage size $<i$, the coordinate $(\mathbf c_{j})_{\mathbf e}=0$. Thus, the vector $T_{ S_j, F_1}\cdot \mathbf c_j = T_{S_j, F_j}\cdot \mathbf c'_j$ where the vector $\mathbf c'_j$ is defined as, for $\mathbf e\in F_j$, $(\mathbf c'_{j})_{\mathbf e} := \mathbf c_{j\mathbf e}$. Consequently, $T_{S_j, F_j}\cdot \mathbf c'_j =0$, for $j\in[i]$. By induction hypothesis (on $n-1$), we know that $T_{S_j, F_j}$ is full rank. So $\mathbf c'_j=0$, which tells us that $\mathbf c_j=0$, for $j\in[i]$. 

Fix an $\mathbf e\in F_1$, with preimage size $=i$, and let the preimages be $\{(\mathbf e, k_1), \ldots, (\mathbf e, k_i)\}$ where $k_j$'s are distinct nonnegative integers. Since $\mathbf c_j=0$, for $j\in[i]$, we get from Eqn.\ref{equation:1} and Lem.\ref{lemma:cone-closed_1var} that: $\mathbf b_{(\mathbf e, k_j)}=0$ for all $j\in[i]$. In other words, for any $\mathbf f\in B$ such that the preimage size of $\pi_n(\mathbf f)$ is $i$, the coordinate $\mathbf b_{\mathbf f}=0$. 

$(i=\ell)$:
Since $A=A_{\ell}$, the output of \textsc{Find-Cone-closed}$(B, n)$, using our invariant at the end of $\ell$-th iteration we deduce that $T_{A, B}\cdot \mathbf b=0$ implies $\mathbf b=0$. Thus, $T_{A, B}$ has full rank.
\end{proof}

\section{Models with a cone-closed basis}

We give a simple proof showing that a typical diagonal depth-$3$ circuit is already cone-closed. Consider the polynomial $D(\mathbf x)=(\mathbf 1+\mathbf a_1x_1+\ldots+\mathbf a_n x_n)^d $ in $\F^k[\mathbf x]$, where $\F^k$ is seen as an $\F$-algebra with coordinate-wise multiplication. 

\begin{lemma}
\label{lemma:cone-closed_basis_for_diagonal_circuit}
$D(\mathbf x)$ has a cone-closed basis.
\end{lemma}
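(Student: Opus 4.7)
\medskip

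\noindent\textbf{Proof plan for Lemma \ref{lemma:cone-closed_basis_for_diagonal_circuit}.} The plan is to construct a cone-closed basis greedily, relying crucially on the multiplicative identity $\mathbf a^{\mathbf e}=\mathbf a^{\mathbf e'}\cdot \mathbf a^{\mathbf e-\mathbf e'}$ in the coordinate-wise algebra $\F^k$. Expanding $D$ by the multinomial theorem gives
$$D(\mathbf x)=\sum_{\mathbf e\colon |\mathbf e|_1\le d} \binom{d}{d-|\mathbf e|_1,\,e_1,\ldots,e_n}\,\mathbf a^{\mathbf e}\,\mathbf{x^e},$$
where $\mathbf a^{\mathbf e}:=\prod_i \mathbf a_i^{e_i}\in\F^k$. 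Since $\ch\,\F=0$ the multinomial coefficients are nonzero, so the coefficient of $\mathbf{x^e}$ is an $\F^\times$-multiple of $\mathbf a^{\mathbf e}$, and $\lrsp(D)=\lrsp_{\F}\{\mathbf a^{\mathbf e}:|\mathbf e|_1\le d\}$.

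\smallskip
\noindent\textbf{Construction.} Fix any graded monomial ordering $\prec$ on $\F[\mathbf x]$ (e.g.\ graded lexicographic), which has the property that $\mathbf e'\le \mathbf e$ with $\mathbf e'\ne \mathbf e$ forces $|\mathbf e'|_1<|\mathbf e|_1$, hence $\mathbf{x^{e'}}\prec \mathbf{x^e}$. Process the monomials $\{\mathbf{x^e}:|\mathbf e|_1\le d\}$ in $\prec$-increasing order, initializing $B=\emptyset$ and adding $\mathbf{x^e}$ to $B$ iff $\mathbf a^{\mathbf e}\notin \lrsp\{\mathbf a^{\mathbf f}:\mathbf{x^f}\in B\}$ at that moment. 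By construction, $\{\mathbf a^{\mathbf e}:\mathbf{x^e}\in B\}$ is a basis of $\lrsp(D)$, so it only remains to show that $B$ is cone-closed.

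\smallskip
\noindent\textbf{Cone-closure.} Suppose for contradiction that some $\mathbf{x^e}\in B$ has a proper submonomial $\mathbf{x^{e'}}\notin B$. Because $\prec$ is graded, $\mathbf{x^{e'}}$ was processed strictly before $\mathbf{x^e}$; let $B_{<\mathbf{e'}}$ denote the contents of $B$ at that moment. By the greedy rule there exist scalars $c_j\in\F$ and monomials $\mathbf{x^{f_j}}\in B_{<\mathbf{e'}}$ with
$$\mathbf a^{\mathbf e'}\;=\;\sum_j c_j\,\mathbf a^{\mathbf f_j}.$$
Multiplying both sides coordinate-wise by $\mathbf a^{\mathbf e-\mathbf e'}$ yields $\mathbf a^{\mathbf e}=\sum_j c_j\,\mathbf a^{\mathbf f_j+\mathbf e-\mathbf e'}$, and each exponent $\mathbf g_j:=\mathbf f_j+\mathbf e-\mathbf e'$ satisfies $|\mathbf g_j|_1=|\mathbf f_j|_1+|\mathbf e|_1-|\mathbf e'|_1<|\mathbf e|_1$ (using $|\mathbf f_j|_1<|\mathbf e'|_1$), so $\mathbf{x^{g_j}}\prec \mathbf{x^e}$. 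A straightforward induction on the processing order shows that for every $\mathbf{x^h}\prec\mathbf{x^e}$ we have $\mathbf a^{\mathbf h}\in\lrsp\{\mathbf a^{\mathbf f}:\mathbf{x^f}\in B_{<\mathbf e}\}$ (monomials skipped over give no new span; monomials added are in $B_{<\mathbf e}$). Applying this to each $\mathbf{x^{g_j}}$ places $\mathbf a^{\mathbf e}\in\lrsp\{\mathbf a^{\mathbf f}:\mathbf{x^f}\in B_{<\mathbf e}\}$, contradicting the fact that the greedy rule added $\mathbf{x^e}$ to $B$.

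\smallskip
\noindent\textbf{Main obstacle.} The only nontrivial ingredient is the translation from ``$\mathbf a^{\mathbf e'}$ is dependent on earlier $\mathbf a^{\mathbf f_j}$'' to ``$\mathbf a^{\mathbf e}$ is dependent on earlier coefficients,'' which is exactly where the submonoid structure $\mathbf a^{\mathbf e}=\mathbf a^{\mathbf e'}\cdot \mathbf a^{\mathbf e-\mathbf e'}$ is used; graded-ness of $\prec$ is what ensures that the shifted exponents $\mathbf f_j+\mathbf e-\mathbf e'$ remain strictly below $\mathbf e$ in the ordering, so the induction closes. No other tools are needed, so the proof should be a few lines once the setup is written out.
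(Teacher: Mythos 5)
Your proposal follows essentially the same route as the paper's proof: expand $D$ to see that $\text{coef}_{\mathbf{x^e}}(D)$ is a nonzero scalar multiple of $\mathbf a^{\mathbf e}$, build the least basis $B$ greedily along a monomial ordering, and derive a contradiction for any $\mathbf{x^e}\in B$ with a missing submonomial $\mathbf{x^{e'}}$ by multiplying the dependence relation for $\mathbf a^{\mathbf{e'}}$ through by $\mathbf a^{\mathbf e-\mathbf{e'}}$. Your added explicitness about the greedy invariant (every processed monomial's coefficient already lies in the span of the current $B$) is a fair expansion of what the paper leaves implicit.

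One small slip in the cone-closure step: from $\mathbf{x^{f_j}}\in B_{<\mathbf{e'}}$ you only get $\mathbf{x^{f_j}}\prec\mathbf{x^{e'}}$, hence $|\mathbf f_j|_1\le|\mathbf{e'}|_1$ under a graded ordering — the strict inequality $|\mathbf f_j|_1<|\mathbf{e'}|_1$ you invoke need not hold when the two monomials have equal total degree and differ only lexicographically. The desired conclusion $\mathbf{x^{g_j}}\prec\mathbf{x^e}$ is still true, but it should be derived directly from multiplication-compatibility of the monomial ordering ($\mathbf{x^{f_j}}\prec\mathbf{x^{e'}}$ implies $\mathbf{x^{f_j+e-e'}}\prec\mathbf{x^{e}}$), which is exactly what the paper does; this fix also shows that gradedness of $\prec$ is not needed — any monomial ordering works, since submonomials are automatically $\prec$-smaller.
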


\begin{proof}
Consider the $n$-tuple $L:=(\mathbf{a}_1,\ldots, \mathbf{a}_n)$. Then for every monomial $\mathbf x^{\mathbf e}$, the coefficient of $\mathbf x^{\mathbf e}$ in $D$ is $L^{\mathbf e} := \prod_{i=1}^n\mathbf{a}_i^{e_i}$, with some nonzero scalar factor (note: here we seem to need $\ch(\F)$ zero or large). We ignore this constant factor, since it does not affect linear dependence relations. Consider any proper monomial ordering $\prec$ (eg.~{\em deg-lex}). Now we prove that the `least basis' of $D(\mathbf x)$ with respect to this monomial ordering is cone-closed. 

We incrementally devise a monomial set $B$ as follows: Arrange all the monomials in ascending order. Starting from least monomial, put a monomial in $B$ if its coefficient can not be written as a linear combination of its previous (thus, smaller) monomials. From construction, the coefficients of monomials in $B$ form 
the least basis for the coefficient space of $D(\mathbf x)$. Now we show that $B$ is cone-closed. We prove it by contradiction.

Let $\mathbf x^{\mathbf f}\in B$ and let $\mathbf x^{\mathbf e}$ be its submonomial that is not in $B$. Then we can write 
$$L^{\mathbf e} \,=\, \sum_{\mathbf x^{\mathbf b} \prec \mathbf x^{\mathbf e}} c_{\mathbf b} L^{\mathbf b} \text{~~with }c_{\mathbf b}\text{'s in }\F\,.$$ 
Multiplying by $L^{\mathbf f-\mathbf e}$ on both sides, we get 
$$L^{\mathbf f} \,=\, \sum_{\mathbf x^{\mathbf b} \prec \mathbf x^{\mathbf e}}c_{\mathbf b} L^{\mathbf{b+f-e}} \,=\, \sum_{\mathbf x^{\mathbf b'} \prec \mathbf x^{\mathbf{f}}} c'_{\mathbf b'} L^{\mathbf b'} \,. $$

Note that $\mathbf x^{\mathbf b'} \prec \mathbf x^{\mathbf f}$ holds true by the way a monomial ordering is defined. This equation contradicts the fact that $\mathbf x^{\mathbf f}\in B$, and completes the proof.
\end{proof}

\end{document}